\documentclass[journal,onecolumn]{IEEEtran}

\usepackage{amsmath,amsthm,amssymb,amsfonts}
\usepackage{array,tabularx,booktabs,multirow}
\usepackage{xcolor,tikz}
\usepackage{hyperref}
\hypersetup{hidelinks,colorlinks=true,linkcolor=blue,citecolor=blue,urlcolor=blue}
\usepackage[noadjust]{cite}

\makeatletter
\newtheoremstyle{mythm}{3pt}{3pt}{}{16pt}{\bfseries}{:}{.5em}{}
\theoremstyle{mythm}
\newtheorem{theorem}{Theorem}
\newtheorem{definition}{Definition}
\newtheorem{lemma}{Lemma}
\newtheorem{corollary}{Corollary}
\newtheorem{proposition}{Proposition}

\newtheorem{example}{Example}
\newtheorem{remark}{Remark}

\newtheorem{construction}{Construction}

\makeatother

\DeclareMathOperator{\supp}{Supp}
\DeclareMathOperator{\elp}{ELP}

\begin{document}
\title{Entropy/Length Profile Bounds for the Conversion of (Non)linear Codes in the Merge Regime}
\author{
    Songping~Ge,
    Han~Cai,~\IEEEmembership{Member,~IEEE},
    and~Xiaohu~Tang,~\IEEEmembership{Fellow,~IEEE}
    \thanks{Songping Ge and Xiaohu Tang are with the School of Mathematics, Southwest Jiaotong University, Chengdu, 610031, China. Xiaohu Tang is also with  the Information Coding and Transmission Key Lab of Sichuan Province, CSNMT Int. Coop. Res. Centre (MoST), Southwest Jiaotong University, Chengdu, 610031, China  (e-mail: gesongping@my.swjtu.edu.cn; xhutang@swjtu.edu.cn).}
    \thanks{Han Cai is with the Information Coding and Transmission Key Lab of Sichuan Province, CSNMT Int. Coop. Res. Centre (MoST), Southwest Jiaotong University, Chengdu, 610031, China (e-mail: hancai@aliyun.com).}
}
\maketitle

\begin{abstract}
Code conversion changes the parameters of encoded data while avoiding complete decoding and re-encoding. Existing general bounds largely rely on the algebraic structure of linear codes. This paper develops an information-theoretic framework for nonlinear code conversion in the merge regime. Using the entropy/length profile (ELP), we derive upper bounds on the number of unchanged symbols and lower bounds on the number of read symbols for arbitrary $(n,M)_q$ codes, without assuming linearity or requiring $\log_q M$ to be an integer. We also obtain joint bounds for multiple initial codes and consequences involving minimum distance. For linear codes, the proposed bounds reduce to the corresponding generalized Hamming weight (GHW) bounds. The construction based on Hamming codes converts Hamming codes of redundancy two into a Hamming code of any larger redundancy and simultaneously attains the general bounds on unchanged and read symbols. The construction based on quasigroups shows that new bounds remain tight for genuinely nonlinear codes of nonintegral dimension and at a nonintegral entropy threshold, which, to the best of our knowledge, is the first construction of convertible codes for nonlinear codes.
\end{abstract}

\begin{IEEEkeywords}
Code conversion, merge regime, entropy/length profile, nonlinear codes, Hamming codes, quasigroup.
\end{IEEEkeywords}

\section{Introduction}\label{sec1}

Distributed storage systems at large scale use erasure codes to protect data against device failures with substantially less storage overhead than replication \cite{ghemawat2003google,huang2012erasure}. A file is divided into $k$ information symbols, encoded into a length $n$ codeword, and stored across different devices. The code rate $k/n$ determines the balance between redundancy and storage efficiency. In practice, however, failure rates, hardware configurations, and patterns of data access change over time. A code rate chosen for one operating condition may provide insufficient protection when failures become more frequent or waste storage when fewer redundant symbols are needed. Adapting redundancy to device reliability can therefore improve storage efficiency. The system study in \cite{kadekodi2019cluster} reports substantial storage savings from such adaptation.

Changing the code in a running system by first decoding the stored data and then encoding it again may require many symbols to be accessed and rewritten. Convertible codes were introduced to perform this change more efficiently while preserving the encoded information \cite{maturana2020convertible,maturana2022convertible}. Two basic settings are commonly considered. In the merge regime, several initial codewords are combined into one final codeword. In the split regime, one initial codeword is divided into several final codewords. During conversion, some initial symbols may remain unchanged, while the other final symbols are computed from symbols read from the initial codewords. The read cost counts the initial symbols that are accessed, the write cost counts the newly written final symbols, and their sum is the access cost. A related but different measure is the bandwidth cost, which counts the amount of information transferred and may allow parts of a symbol to be downloaded \cite{maturana2022bandwidth,maturana2023bandwidth}.

Early work established the basic limits of code conversion and constructed linear MDS convertible codes with optimal access cost in the merge regime \cite{maturana2020accessoptimal,maturana2022convertible}. Subsequent studies reduced the required field size, extended the available parameter ranges, and developed further MDS constructions \cite{chopra2024low,ramkumar2026mds}. The irregular setting, in which the initial codes may have different lengths, dimensions, and redundancies, was studied in \cite{ge2026mds}. That work derived bounds on access cost, characterized optimal linear MDS conversions through parity-check matrices, and gave explicit constructions. Bandwidth has been considered separately for both merge and split conversion, leading to additional lower bounds and code constructions \cite{maturana2023bandwidth,wang2026lower,singhvi2026lower}.

Code conversion has also been studied for codes with locality. Initial results connected local repair with efficient conversion \cite{maturana2023locally}. Later work obtained locally repairable convertible codes with optimal access cost, improved lower bounds, and constructions for generalized merge conversion into locally repairable codes \cite{kong2024locally,ge2026locally,shi2026bounds}. These results show that conversion cost is closely related to the structural properties of the code family under consideration. However, most of the available bounds and constructions concern MDS codes, locally repairable codes, or closely related linear settings.

More recently, Gruica, Jany, and Kruglik developed bounds for arbitrary scalar linear codes in the merge regime \cite{gruica2026linear}. Their approach uses GHWs to control the supports of subcodes forced by unchanged and read coordinates. It recovers earlier bounds in important special cases and applies beyond MDS and locally repairable codes. Nevertheless, the argument remains linear. It uses dimensions of vector spaces, linear subcodes, and GHWs with integer indices. It does not directly cover a nonlinear initial or final code, a nonlinear conversion map, or a code of size $M$ for which $\log_q M$ is not an integer.

However, all known results on convertible codes are focus on linear codes, and much less is known about the more general nonlinear case. A nonlinear code need not be closed under addition, and a set obtained by fixing part of a conversion need not have the algebraic structure used in linear proofs. Moreover, GHWs of nonlinear codes arise at real entropy levels rather than only at integer dimensions. The conditional entropy/length profile (ELP) introduced in \cite{reuven1998entropy} is well suited to this setting. Its jump points give GHWs for nonlinear codes \cite{reuven1999generalized}, while the full profile retains the intermediate entropy levels that an integer index would omit. In a merge conversion, fixing selected initial codewords and conditioning on unchanged coordinates reveal how much information must still be represented by the written coordinates. We use this observation to derive bounds directly from entropy, without assuming linearity or an integer dimension. Thus, known methods and analyses generally fail for nonlinear codes, and it may be interesting to consider new methods for convertible codes.

\emph{Scope and contributions:} We consider information-preserving merge conversions involving arbitrary scalar codes. The initial codes may have different parameters, and neither the codes nor the conversion map are assumed to be linear. Our main contributions are as follows.
\begin{itemize}
	\item We formulate an inverse $\delta_\rho^{\elp}$ of the ELP indexed by real entropy levels and establish the properties needed for conversion. This formulation retains entropy levels that are absent from the hierarchy of GHWs with integer indices.
	\item We derive individual and joint upper bounds on unchanged symbols. The proofs use conditional entropy and remain valid although the ELP need not be monotone under code inclusion.
	\item We derive a lower bound on read symbols for arbitrary real $\log_q M$. A simplified form is
	\begin{equation*}
		|\mathcal R_i|\ge
		\left\lceil\log_qM_{I_i}-\theta_i\right\rceil,
	\end{equation*}
	where $\theta_i$ is the maximum residual entropy over sets of $|\mathcal U_i|$ final coordinates, capped at $k_i$.
	\item We recover the general linear bounds of \cite{gruica2026linear} as special cases and identify the additional conditioning required in the nonlinear setting. Table \ref{tab1} summarizes the theoretical results. Detailed comparisons with \cite{gruica2026linear} appear in Section \ref{sec5}.
	\item We introduce a construction of convertible codes via Hamming codes that simultaneously attains the general bounds on the numbers of unchanged and read symbols. Moreover, we also propose a construction via quasigroups, which is the first construction of convertible codes for nonlinear codes and simultaneously attains the same general bounds on the numbers of unchanged and read symbols. 
\end{itemize}

\begin{table*}[t]
	\centering
	\caption{Comparison of the present bounds with the corresponding linear bounds in \cite{gruica2026linear}.}
	\label{tab1}
	\renewcommand{\arraystretch}{1.3}
	\setlength{\tabcolsep}{1pt}
	\renewcommand{\tabularxcolumn}[1]{m{#1}}
	\begin{tabularx}{\textwidth}{
		>{\centering\arraybackslash}m{0.06\textwidth}
		>{\centering\arraybackslash}m{0.32\textwidth}
		>{\centering\arraybackslash}m{0.10\textwidth}
		>{\centering\arraybackslash}m{0.29\textwidth}
		>{\centering\arraybackslash}X
		>{\centering\arraybackslash}m{0.05\textwidth}}
		\toprule
		Result & Bound & Difference & Linear Bound \cite{gruica2026linear} & Difference  & \cite{gruica2026linear} \\
		\midrule
		\midrule
		Lem. \ref{lem6}
		&
		$|\mathcal U_i|\le\min_{\mathbf v} \{n_{I_i},n_F-\delta_{k_{-i}}^{\elp} (\mathcal C_{i,\mathbf v})\}$
		&
		$\mathcal C_{i,\mathbf v}$ not cosets
		&
		$|\mathcal U_i|\le\min\{n_{I_i},n_F-d_{k_{-i}}(\mathcal C_{i,\mathbf 0})\}$
		&
		$\mathcal C_{i,\mathbf 0}$ cosets
		&
		Prop.~3\\
		\midrule
		Lem. \ref{lem7}
		&
		$|\mathcal U_i|\le\min\{n_{I_i},n_F-\delta_{k_{-i}}^{\elp}(\mathcal C_F)\}$
		&
		Non-monotonicity
		&
		$|\mathcal U_i|\le\min\{n_{I_i},n_F-d_{k_{-i}}(\mathcal C_F)\}$
		&
		$d_{k_{-i}}(\mathcal C_F) \le d_{k_{-i}}(\mathcal C_{i,\mathbf 0})$
		&
		Prop.~3\\
		\midrule
		Lem. \ref{lem8}
		&
		$\sum_{j\in I}|\mathcal U_j|\le\min\{\sum_{j\in I}n_{I_j}, n_F-\delta_{k_{-I}}^{\elp}(\mathcal C_F)\}$
		&
		--
		&
		-- & -- & --\\
		\midrule
		Cor. \ref{cor1}
		&
		$|\mathcal U_i|\le\min\{n_{I_i}, n_F-d(\mathcal C_F)-\lceil k_{-i}\rceil+1\}$
		&
		$k_{-i}\notin\mathbb Z$
		&
		$|\mathcal U_i|\le\min\{n_{I_i}, n_F-d(\mathcal C_F)-k_{-i}+1\}$
		&
		$k_{-i}\in\mathbb Z$
		&
		Cor.~4\\
		\midrule
		Lem. \ref{lem9}
		&
		$|\mathcal U|\le \left\lfloor \frac{tn_F-t(d(\mathcal C_F)-1)-\sum_i\lceil k_i\rceil}{t-1} \right\rfloor$
		&
		Ceiling
		&
		$|\mathcal U|\le \frac{tn_F-k_F-t(d(\mathcal C_F)-1)}{t-1}$
		&
		$\sum_i k_i=k_F$
		&
		Prop.~7\\
		\midrule
		Lem. \ref{lem10}
		&
		$|\mathcal R_i|\ge \max\{0,\lceil k_i-|\mathcal U_i|\rceil, \max_{0<r\le k_i}g_i(r)\}$
		&
		$r\in\mathbb R$
		&
		$\displaystyle |\mathcal R_i|\ge \max_{1\le j\le k_i} \{k_i-j+1-(|\mathcal U_i|-d_j+1)^+\}$
		&
		$j\in\mathbb Z$
		&
		Prop.~9\\
		\midrule
		Thm. \ref{thm1}
		&
		$\displaystyle |\mathcal R_i|\ge \lceil k_i-\theta_i\rceil$
		&
		$\theta_i\in\mathbb R$
		&
		$\displaystyle |\mathcal R_i|\ge k_i-j_i^*$
		&
		$j_i^*\in\mathbb Z$
		&
		Thm.~10\\
		\bottomrule
	\end{tabularx}
\end{table*}

\emph{Organization:} Section \ref{sec2} introduces the ELP and the model for merge conversion. Sections \ref{sec3} and \ref{sec4} derive bounds on unchanged and read symbols, respectively. Section \ref{sec5} specializes the results to linear codes and compares them with \cite{gruica2026linear}. Section \ref{sec6} gives conversions based on Hamming codes and quasigroups. Section \ref{sec10} concludes the paper.

\section{Preliminary}\label{sec2}

For a positive integer $a$, let $[a]=\{1,2,\ldots,a\}$. For $J\subseteq[n]$, write $J^c=[n]\setminus J$.

Let $\mathcal C\subseteq\mathcal A^n$ be a code of length $n$ and size $M$, where $|\mathcal A|=q$. We denote its parameters by $(n,M)_q$ and do not assume that $\mathcal C$ is linear. If the minimum Hamming distance of $\mathcal C$ is $d$, we denote its parameters by $(n,M,d)_q$. If $\mathcal C$ is linear, we denote its parameters by $[n,\log_q M,d]_q$. Let $X^{(\mathcal C)}$ be a random codeword chosen uniformly from $\mathcal C$. For $\mathbf c\in\mathcal A^n$ and $J\subseteq[n]$, let $\mathbf c|_J$ denote the restriction of $\mathbf c$ to the coordinates in $J$. The coordinate projection of $\mathcal C$ onto $J$ is
\begin{equation*}
    \mathcal C|_J
    =\{\mathbf c|_J:\mathbf c\in\mathcal C\}.
\end{equation*}
Correspondingly, $X_J^{(\mathcal C)}\triangleq X^{(\mathcal C)}|_J$ is the random variable obtained by restricting the uniformly chosen codeword $X^{(\mathcal C)}$ to the coordinates in $J$. Thus, $\mathcal C|_J$ is the set of all projected codewords, whereas $X_J^{(\mathcal C)}$ is a random variable taking values in $\mathcal C|_J$ with the distribution induced by the uniform distribution on $\mathcal C$. When the underlying code is clear from the context, we write $X$ and $X_J$ in place of $X^{(\mathcal C)}$ and $X_J^{(\mathcal C)}$, respectively.

\begin{definition}[\cite{reuven1999generalized}]\label{def1}
	The support of an $(n,M)_q$ code $\mathcal C$ is
	\begin{equation}\label{def_supp}
		\supp(\mathcal C)=
		\{i\in[n]: \text{ exist }\mathbf c=(c_1,\dots,c_n),\mathbf b=(b_1,\dots,b_n)\in\mathcal C \text{ with } c_i\ne b_i \}.
	\end{equation}
\end{definition}

\subsection{Entropy/Length Profiles}

We recall the conditional entropy/length profile (ELP), which underlies the GHW hierarchy for nonlinear codes. All entropies are measured in base $q$.

\begin{definition}[{\cite[Definition 4]{reuven1998entropy}}]\label{def2}
	The conditional entropy/length profile of an $(n,M)_q$ code $\mathcal C$ is the sequence $\{h_i(\mathcal C)\}_{i=0}^{n}$, where
	\begin{equation*}
		h_i(\mathcal C) = \underset{J\subseteq[n], |J|=i}{\max} H(X_J|X_{J^c})
	\end{equation*}
	with the convention $h_0(\mathcal C)=0$. In addition, $h_n(\mathcal C) = \log_q M$.
\end{definition}

The ELP satisfies the following fundamental property, which will be used repeatedly in our subsequent derivations.

\begin{lemma}\label{lem1}
	If $h_i(\mathcal C)\ge\rho$, then $i\ge\rho$ for any $\rho\in\mathbb R$ with $0\le\rho\le\log_q M$.
\end{lemma}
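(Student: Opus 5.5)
The claim follows from the elementary upper bound $h_i(\mathcal C)\le i$, which I will establish first and then combine with the hypothesis.

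Fix any $J\subseteq[n]$ with $|J|=i$. Conditioning does not increase entropy, so
\begin{equation*}
H(X_J\mid X_{J^c})\le H(X_J).
\end{equation*}
The random variable $X_J$ takes values in $\mathcal C|_J\subseteq\mathcal A^{J}$, a set of size at most $q^{i}$. Since entropies are measured in base $q$, the entropy of a variable supported on at most $q^i$ values is at most $\log_q q^{i}=i$. Hence $H(X_J\mid X_{J^c})\le i$ for every $J$ of size $i$. Taking the maximum over all such $J$ in Definition~\ref{def2} gives $h_i(\mathcal C)\le i$. For $i=0$ this holds by the convention $h_0(\mathcal C)=0$.

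Now suppose $h_i(\mathcal C)\ge\rho$. Then $\rho\le h_i(\mathcal C)\le i$, which is the statement. The restriction $0\le\rho\le\log_q M$ is not needed for the inequality. It only guarantees that the level $\rho$ is attainable, because $h_n(\mathcal C)=\log_q M$, which is what makes the lemma useful when the inverse profile $\delta^{\elp}_\rho$ is defined later.

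There is no real obstacle in this argument. The only points to check are that the entropy base is $q$, so that the alphabet-size bound gives exactly $i$, and that the convention $h_0=0$ covers the degenerate case $i=0$.
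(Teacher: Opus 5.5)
Your proof is correct and is essentially the paper's argument. Both of you use $H(X_J\mid X_{J^c})\le H(X_J)\le |J|=i$ to get $\rho\le h_i(\mathcal C)\le i$. The paper phrases this as a contradiction and carries an unnecessary $\min\{i,\log_q M\}$, whereas you argue directly and correctly note that the hypothesis $\rho\le\log_q M$ plays no role in the inequality.
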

\begin{IEEEproof}
	Suppose, for contradiction, that $i<\rho$. Then,
	\begin{align*}
		h_i(\mathcal C) 
		&= \underset{J\subseteq[n], |J|=i}{\max} H(X_J|X_{J^c}) \\
		&\le \underset{J\subseteq[n], |J|=i}{\max} H(X_J) \\
		&\le \min\{|J|,H(X)\} \\
		&= \min\{i,\log_q M\} \\
		&< \rho,
	\end{align*}
	which contradicts $h_i(\mathcal C)\ge\rho$.	
\end{IEEEproof}

Another important feature of the ELP is its monotonicity with bounded increments, as stated in the following lemma from \cite{reuven1998entropy}.

\begin{lemma}[{\cite[Lemma 3]{reuven1998entropy}}]\label{lem2}
	If $\mathcal C$ is an $(n,M)_q$ code, then for any $i, 0\le i\le n-1$,
	\begin{equation*}
		0\le h_{i+1}(\mathcal C) - h_i(\mathcal C) \le 1.
	\end{equation*}
\end{lemma}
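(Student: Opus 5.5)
The plan is to fix $i$ with $0\le i\le n-1$ and compare the two maxima term by term. For each of the two inequalities I will take a maximizer for one profile value and build from it a competitor for the other. The only tools needed are the chain rule for conditional entropy, the fact that conditioning does not increase entropy, and the bound $H(X_j\mid\cdot)\le 1$ for a single coordinate. Entropies are taken base $q$, and $|\mathcal A|=q$.

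\emph{Lower bound $h_{i+1}(\mathcal C)\ge h_i(\mathcal C)$.} Let $J$ with $|J|=i$ attain $h_i(\mathcal C)=H(X_J\mid X_{J^c})$, and pick any $j\in J^c$, which exists since $i\le n-1$. Set $J'=J\cup\{j\}$, so $J'^c=J^c\setminus\{j\}$. The chain rule gives
\begin{equation*}
H(X_{J'}\mid X_{J'^c}) = H(X_j\mid X_{J'^c}) + H(X_J\mid X_{J^c}) \ge H(X_J\mid X_{J^c}).
\end{equation*}
This holds because $(X_j,X_{J'^c})$ is exactly $X_{J^c}$, and entropy is nonnegative. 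Taking the maximum over $J'$ of size $i+1$ yields $h_{i+1}(\mathcal C)\ge h_i(\mathcal C)$.

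\emph{Upper bound $h_{i+1}(\mathcal C)\le h_i(\mathcal C)+1$.} Let $J'$ with $|J'|=i+1$ attain $h_{i+1}(\mathcal C)$. Remove any $j\in J'$ and set $J=J'\setminus\{j\}$, so $J^c=J'^c\cup\{j\}$. The chain rule gives
\begin{equation*}
H(X_{J'}\mid X_{J'^c}) = H(X_j\mid X_{J'^c}) + H(X_J\mid X_{J'^c},X_j) \le 1 + H(X_J\mid X_{J^c}) \le 1+h_i(\mathcal C).
\end{equation*}
Here $H(X_j\mid X_{J'^c})\le H(X_j)\le\log_q q=1$, since $X_j$ takes values in $\mathcal A$.

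The boundary case $i=0$ needs a separate check, because then $J=\emptyset$. The convention $h_0(\mathcal C)=0$ matches $H(X_\emptyset\mid X_{[n]})=0$, so the same computation applies verbatim. I expect no real obstacle in this proof. The only point needing care is the bookkeeping of complements, namely checking that $(X_j,X_{J'^c})$ coincides with $X_{J^c}$ in each direction, so that the chain rule is applied with the correct conditioning sets.
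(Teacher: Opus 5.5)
Your proof is correct and complete: for monotonicity you add a coordinate $j\in J^c$ and use the chain rule with nonnegativity of $H(X_j\mid X_{J'^c})$, and for the unit-increment bound you remove a coordinate $j\in J'$ and use $H(X_j\mid X_{J'^c})\le H(X_j)\le\log_q q=1$. The degenerate cases $J=\emptyset$ and $J'^c=\emptyset$ go through verbatim with the empty-vector convention. The paper gives no proof of its own here; it simply imports the result from \cite[Lemma 3]{reuven1998entropy}, and your argument is the standard chain-rule argument for it.
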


\subsection{ELP Inverse at Real Entropy Levels}

We next recall the definition of GHWs for nonlinear codes proposed in \cite{reuven1999generalized}. This definition is based on the jump points of the ELP sequence. We then give an equivalent formulation indexed by real entropy levels that is more convenient for conversion bounds.

\begin{definition}[{\cite[Definition 2]{reuven1999generalized}}]\label{def3}
	Let $\{h_i\}_{i=0}^{n}$ be the ELP of $\mathcal C$, and write its jump set as
	\begin{equation*}
		\mathcal J_{\elp}=\{i\in[n]:h_i>h_{i-1}\}=\{j_1,\ldots,j_m\},
	\end{equation*}
	where $j_1<\cdots<j_m$. The generalized Hamming weight (GHW) of order $r$ is $d_r^{\elp}(\mathcal C)=j_r$.
\end{definition}

We use the following inverse of the ELP in the conversion bounds.

\begin{definition}[ELP Inverse]\label{def4}
	Let $\mathcal C$ be an $(n,M)_q$ code. For any $\rho\in\mathbb R$ satisfying $0\le \rho\le \log_q M$, define
	\begin{equation*}
		\delta^{\elp}_\rho(\mathcal C) = \min \{ i\in[n] : h_i(\mathcal C)\ge\rho \}
	\end{equation*} 
	with the convention $\delta^{\elp}_0(\mathcal C)=0$.
\end{definition}

The next result shows that the positive values of $\delta^{\elp}_\rho$ are exactly the ELP jump positions.

\begin{theorem}
	Let $\mathcal C$ be an $(n,M)_q$ code. Then,
	\begin{equation*}
		\mathcal J_{\elp}=\{\delta^{\elp}_\rho(\mathcal C):0<\rho\le\log_qM\}.
	\end{equation*}
\end{theorem}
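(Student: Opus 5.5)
We prove the two inclusions separately, using only the monotonicity in Lemma~\ref{lem2} and the boundary values $h_0(\mathcal C)=0$, $h_n(\mathcal C)=\log_qM$.

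\emph{Step 1: every $\delta^{\elp}_\rho(\mathcal C)$ with $0<\rho\le\log_qM$ lies in $\mathcal J_{\elp}$.} Fix such a $\rho$. Since $h_n(\mathcal C)=\log_qM\ge\rho$, the set $\{i\in[n]:h_i(\mathcal C)\ge\rho\}$ is nonempty, so $i^*=\delta^{\elp}_\rho(\mathcal C)$ is well defined and $1\le i^*\le n$. By minimality, $h_{i^*-1}(\mathcal C)<\rho$. This uses $h_0=0<\rho$ when $i^*=1$, and the definition of the minimum otherwise, since $i^*-1\in[n]$ and $i^*-1<i^*$. Hence $h_{i^*}\ge\rho>h_{i^*-1}$, so $i^*\in\mathcal J_{\elp}$.

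\emph{Step 2: every jump $j\in\mathcal J_{\elp}$ equals $\delta^{\elp}_\rho(\mathcal C)$ for some admissible $\rho$.} Take $\rho=h_j(\mathcal C)$.
\begin{itemize}
\item $\rho$ is admissible. Since $h_j>h_{j-1}\ge 0$, we have $\rho>0$. By the monotonicity in Lemma~\ref{lem2}, $h_j\le h_n=\log_qM$, so $\rho\le\log_qM$.
\item $j$ satisfies the defining condition. Clearly $h_j\ge\rho$, so $\delta^{\elp}_\rho(\mathcal C)\le j$.
\item No smaller index works. For every $i<j$, Lemma~\ref{lem2} gives $h_i\le h_{j-1}<h_j=\rho$. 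Hence no $i\in[n]$ with $i<j$ satisfies $h_i\ge\rho$, and $\delta^{\elp}_\rho(\mathcal C)=j$.
\end{itemize}

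Combining both steps gives the claimed equality of sets.

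The only delicate points are bookkeeping ones. The first is the boundary case $i^*=1$ in Step 1, where one must use the convention $h_0=0$ rather than the minimum over $[n]$. The second is confirming $\rho\le\log_qM$ in Step 2, which relies on monotonicity together with $h_n=\log_qM$. No real obstacle is expected, since monotonicity of the ELP does all the work.
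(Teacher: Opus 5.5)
Your proof is correct and takes essentially the same route as the paper. You get the first inclusion from minimality of $\delta^{\elp}_\rho(\mathcal C)$, stated directly as $h_{i^*}(\mathcal C)\ge\rho>h_{i^*-1}(\mathcal C)$ where the paper argues by contradiction, and the second by taking $\rho=h_j(\mathcal C)$ and applying Lemma~\ref{lem2}; your explicit checks that $0<h_j(\mathcal C)\le\log_qM$ and of the boundary case $i^*=1$ via $h_0(\mathcal C)=0$ fill in details the paper leaves implicit.
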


\begin{IEEEproof}
	Fix $0<\rho\le\log_q M$ and let
	\begin{equation*}
		i=\delta^{\elp}_\rho(\mathcal C)= \min \{ j\in[n] : h_j(\mathcal C)\ge\rho \}.
	\end{equation*}
	Since $\rho>0=h_0(\mathcal C)$, we have $i\ge1$. If $i\notin\mathcal J_{\elp}$, then $h_i(\mathcal C)=h_{i-1}(\mathcal C)\ge\rho$ by Definition \ref{def3}, contradicting the minimality of $i$. Hence
	\begin{equation*}
		\{\delta^{\elp}_\rho(\mathcal C):0<\rho\le\log_qM\}\subseteq\mathcal J_{\elp}.
	\end{equation*}
	
	Conversely, let $j_r\in\mathcal J_{\elp}$ and take $\rho=h_{j_r}(\mathcal C)$. Lemma \ref{lem2} and $h_{j_r-1}(\mathcal C)<h_{j_r}(\mathcal C)$ give $h_i(\mathcal C)<\rho$ for every $i<j_r$. Therefore, $j_r=\delta^{\elp}_\rho(\mathcal C)$ which implies
	\begin{equation*}
		\{\delta^{\elp}_\rho(\mathcal C):0<\rho\le\log_qM\}\supseteq\mathcal J_{\elp}.
	\end{equation*}
	
	We complete the proof.
\end{IEEEproof}

The generalized inverse is monotone in its entropy argument as follows.

\begin{lemma}\label{lem3}
	Let $\mathcal C$ be an $(n,M)_q$ code. Then, $\delta^{\elp}_{\rho_2}(\mathcal C)\ge \delta^{\elp}_{\rho_1}(\mathcal C)$ for any $0\le \rho_1\le \rho_2\le \log_q M$.
\end{lemma}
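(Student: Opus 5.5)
The plan is to compare the two sets whose minima define the two values of $\delta^{\elp}$. For $0\le\rho\le\log_qM$ put
\begin{equation*}
S_\rho=\{i\in[n]:h_i(\mathcal C)\ge\rho\}.
\end{equation*}
The key observation is that the map $\rho\mapsto S_\rho$ is antitone: if $\rho_1\le\rho_2$ and $h_i(\mathcal C)\ge\rho_2$, then $h_i(\mathcal C)\ge\rho_1$. Hence $S_{\rho_2}\subseteq S_{\rho_1}$. The minimum of a subset is at least the minimum of the larger set, so $\min S_{\rho_2}\ge\min S_{\rho_1}$.

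Before invoking this, I will check that both minima are well defined, which means checking that $S_\rho$ is nonempty. By Definition \ref{def2} we have $h_n(\mathcal C)=\log_qM\ge\rho$, so $n\in S_\rho$ for every admissible $\rho$.

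The only point needing care is the convention $\delta^{\elp}_0(\mathcal C)=0$. Definition \ref{def4} sets this value by fiat, and it differs from $\min S_0=1$, since $S_0=[n]$. I will treat this boundary case separately.
\begin{itemize}
\item If $\rho_1=0$, then $\delta^{\elp}_{\rho_1}(\mathcal C)=0$. Every value of $\delta^{\elp}$ is nonnegative, so the inequality holds trivially.
\item If $\rho_1>0$, then $\rho_2\ge\rho_1>0$. Both values are then given by the minima $\min S_{\rho_1}$ and $\min S_{\rho_2}$, and the inclusion argument above applies directly.
\end{itemize}

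I expect no real obstacle here; the only step needing attention is this handling of the $\rho=0$ convention. No use of Lemma \ref{lem2}, the monotonicity of the ELP, is needed.
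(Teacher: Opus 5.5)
Your proof is correct and follows the paper's argument: the inclusion $\{i\in[n]:h_i(\mathcal C)\ge\rho_2\}\subseteq\{i\in[n]:h_i(\mathcal C)\ge\rho_1\}$ followed by taking minima. Your explicit checks that each set is nonempty and that the convention $\delta^{\elp}_0(\mathcal C)=0$ is consistent are details the paper leaves implicit.
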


\begin{IEEEproof}
	If $h_i(\mathcal C)\ge\rho_2$, then $h_i(\mathcal C)\ge\rho_1$. Hence,
	\begin{equation*}
		\{ i\in[n] : h_i(\mathcal C)\ge\rho_2 \} \subseteq \{ i\in[n] : h_i(\mathcal C)\ge\rho_1 \}.
	\end{equation*}
	Taking the minimum value proves the result.
\end{IEEEproof}

\begin{remark}
	Although $h_i(\mathcal C)$ increases by at most one when $i$ increases by one in Lemma \ref{lem2}, this property does not give a similar bound on $\delta_\rho^{\elp}(\mathcal C)$. Indeed, a small increase in $\rho$ may move $\delta_\rho^{\elp}(\mathcal C)$ to a later ELP jump and therefore increase it by several coordinates. On the other hand, $\delta_\rho^{\elp}(\mathcal C)$ may remain unchanged while $\rho$ varies between two entropy levels reached at the same ELP jump. Therefore, for real $\rho$, we use only the monotonicity established in Lemma \ref{lem3}.
\end{remark}

The following consequence involving minimum distance will be useful below.

\begin{lemma}\label{lem4}
	Let $\mathcal C$ be an $(n,M)_q$ code with minimum Hamming distance $d(\mathcal C)$. Then, for every $0<\rho\le\log_q M$,
	\begin{equation*}
		\delta_{\rho}^{\elp}(\mathcal C)
		\ge d(\mathcal C)+\lceil\rho\rceil-1.
	\end{equation*}
\end{lemma}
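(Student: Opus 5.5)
The plan is to prove the pointwise bound
\begin{equation*}
	h_i(\mathcal C)\le \max\{0,\, i-d(\mathcal C)+1\}\qquad\text{for all } 0\le i\le n,
\end{equation*}
and then read off the claim from Definition \ref{def4}. Write $d=d(\mathcal C)$ and fix $J\subseteq[n]$ with $|J|=i$.

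\emph{Step 1: small $i$.} Suppose $i\le d-1$. Two codewords that agree on $J^c$ differ in at most $|J|=i<d$ positions. Hence they are equal, so $X_{J^c}$ determines $X$. It follows that $H(X_J\mid X_{J^c})=0$.

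\emph{Step 2: $i\ge d$.} Choose $K\subseteq J$ with $|K|=d-1$. The complement of $J^c\cup(J\setminus K)$ is $K$, which has size $d-1<d$. By the argument of Step 1, $X_{J^c\cup(J\setminus K)}$ determines $X$, and in particular determines $X_K$. By the chain rule,
\begin{equation*}
	H(X_J\mid X_{J^c})=H(X_{J\setminus K}\mid X_{J^c})+H(X_K\mid X_{J^c\cup (J\setminus K)})=H(X_{J\setminus K}\mid X_{J^c}).
\end{equation*}
Since entropies are in base $q$ and $X_{J\setminus K}$ takes values in a set of size at most $q^{|J\setminus K|}$,
\begin{equation*}
	H(X_{J\setminus K}\mid X_{J^c})\le |J\setminus K|=i-d+1.
\end{equation*}
Maximizing over $J$ with $|J|=i$ gives the displayed bound on $h_i(\mathcal C)$ in both cases.

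\emph{Step 3: conclusion.} Let $0<\rho\le\log_qM$ and $i=\delta^{\elp}_\rho(\mathcal C)$. By Definition \ref{def4}, $h_i(\mathcal C)\ge\rho>0$. By the pointwise bound, this forces $i-d+1\ge h_i(\mathcal C)\ge\rho$. Since $i-d+1$ is an integer, $i-d+1\ge\lceil\rho\rceil$, which is exactly $\delta_\rho^{\elp}(\mathcal C)\ge d+\lceil\rho\rceil-1$.

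The only step requiring care is Step 2. It must be justified that fixing $X$ on all but $d-1$ coordinates leaves no residual entropy. This is the standard erasure-correction property of minimum distance, and it holds without any linearity, so there is no genuine obstacle. The degenerate case $M=1$ does not arise, because then $\log_qM=0$ and no $\rho$ with $0<\rho\le\log_qM$ exists.
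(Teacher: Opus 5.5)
Your proof is correct and follows the same route as the paper. Both show $h_i(\mathcal C)=0$ for $i\le d(\mathcal C)-1$, bound $h_i(\mathcal C)\le i-d(\mathcal C)+1$ beyond that, and finish using integrality of $i-d(\mathcal C)+1$. The only difference is the middle bound: the paper gets it by applying the unit-increment property of Lemma~\ref{lem2} starting from $h_{d(\mathcal C)-1}(\mathcal C)=0$, while you derive it directly with the chain rule on a set $K\subseteq J$ of size $d(\mathcal C)-1$, which makes your argument self-contained without changing its substance.
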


\begin{IEEEproof}
	If $i<d(\mathcal C)$, then any two codewords $\mathbf a, \mathbf b$ such that $\mathbf a|_{J^c}=\mathbf b|_{J^c}$, where $|J|=i$, must satisfy $\mathbf a=\mathbf b$. Otherwise their Hamming distance would be at most $i<d(\mathcal C)$. Hence, for $0\le i\le d(\mathcal C)-1$,
	\begin{equation*}
		h_i(\mathcal C) = \underset{J\subseteq[n], |J|=i}{\max} H(X_J|X_{J^c}) = 0.
	\end{equation*}
	For $i\ge d(\mathcal C)-1$, Lemma \ref{lem2} shows that
	\begin{equation}\label{eqn1}
		h_i(\mathcal C)
		\le h_{d(\mathcal C)-1}(\mathcal C)+ i-d(\mathcal C)+1
		= i-d(\mathcal C)+1.
	\end{equation}
	If $h_i(\mathcal C)\ge\rho>0$, then $h_i(\mathcal C)>0$, so $i\ge d(\mathcal C)$. The inequality \eqref{eqn1} implies $$i\ge d(\mathcal C)+\rho-1.$$ Since $i$ and $d(\mathcal C)$ are integers, this is equivalent to $$i\ge d(\mathcal C)+\lceil\rho\rceil-1.$$ Taking the smallest such $i$ satisfying $h_i(\mathcal C)\ge\rho$ proves the claim.
\end{IEEEproof}

We also show the following discrete version of monotonicity. Under the linear specialization in Proposition \ref{prop1}, it becomes the usual strict growth property of GHWs for linear codes.

\begin{lemma}\label{lem5}
	Let $\mathcal C$ be an $(n,M)_q$ code. For any integers $1\le\rho_1\le \rho_2\le\log_q M$,
	\begin{equation*}
		\delta_{\rho_1}^{\elp}(\mathcal C)\le \delta_{\rho_2}^{\elp}(\mathcal C)-(\rho_2-\rho_1).
	\end{equation*}
\end{lemma}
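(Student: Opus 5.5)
Since $\rho_1$ and $\rho_2$ are integers, the claim reduces to the one-step case $\rho_2=\rho_1+1$: we show that for every integer $\rho$ with $1\le\rho<\log_q M$,
\begin{equation*}
\delta_{\rho}^{\elp}(\mathcal C)\le \delta_{\rho+1}^{\elp}(\mathcal C)-1.
\end{equation*}
Summing these inequalities for $\rho=\rho_1,\ldots,\rho_2-1$ telescopes to the stated bound, and the case $\rho_1=\rho_2$ is trivial. The argument uses only Lemma \ref{lem2}, and in particular its upper bound $h_{i+1}(\mathcal C)-h_i(\mathcal C)\le1$.

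For the one-step inequality, let $i=\delta_{\rho+1}^{\elp}(\mathcal C)$, so that $h_i(\mathcal C)\ge\rho+1$. Since $\rho+1\ge 2>0$, we have $i\ge1$, and by Lemma \ref{lem1} in fact $i\ge\rho+1\ge2$. Then $i-1\in[n]$ is admissible in Definition \ref{def4}. Applying Lemma \ref{lem2} at index $i-1$ gives
\begin{equation*}
h_{i-1}(\mathcal C)\ge h_i(\mathcal C)-1\ge\rho .
\end{equation*}
Hence $i-1$ belongs to the set $\{j\in[n]:h_j(\mathcal C)\ge\rho\}$, and the minimum defining $\delta_\rho^{\elp}(\mathcal C)$ is at most $i-1$. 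This is exactly the one-step inequality.

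The only delicate point is bookkeeping at the boundary. We must ensure that $i-1\ge1$, because Definition \ref{def4} minimizes over $[n]$, not over $\{0,\ldots,n\}$. This is where the hypothesis $\rho_1\ge1$ and Lemma \ref{lem1} enter. We must also check that every intermediate level $\rho$ lies in $[0,\log_q M]$ so that each $\delta$ is defined, which follows from $\rho\le\rho_2\le\log_q M$. Integrality of $\rho_1$ and $\rho_2$ is what makes the telescoping possible in steps of exactly one; for non-integer differences the remark after Lemma \ref{lem3} indicates that no such bound is available.
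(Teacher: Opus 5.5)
Your proof is correct. It takes a different direction from the paper's, though both arguments rest only on the upper bound $h_{i+1}(\mathcal C)-h_i(\mathcal C)\le1$ from Lemma \ref{lem2}.

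The paper works forward from $i_0=\delta_{\rho_1}^{\elp}(\mathcal C)$. Minimality gives $h_{i_0-1}(\mathcal C)<\rho_1$, and $d=\rho_2-\rho_1$ applications of the increment bound give $h_{i_0+d-1}(\mathcal C)<\rho_2$, hence $\delta_{\rho_2}^{\elp}(\mathcal C)\ge i_0+d$. Because it moves upward, it needs a separate contradiction argument, using $h_n(\mathcal C)=\log_qM\ge\rho_2$, to guarantee $i_0+d\le n$ so that all indices stay in range.

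You instead move downward from $\delta_{\rho+1}^{\elp}(\mathcal C)$, which is already an index in $[n]$. Your only range check is $i-1\ge1$, and that is automatic: $h_{i-1}(\mathcal C)\ge\rho\ge1>0=h_0(\mathcal C)$, so you do not even need Lemma \ref{lem1}. The telescoping is also dispensable. Applying Lemma \ref{lem2} $d$ times downward from $i=\delta_{\rho_2}^{\elp}(\mathcal C)$ gives $h_{i-d}(\mathcal C)\ge\rho_1$ directly.

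There are two small corrections. First, state the one-step inequality for $1\le\rho\le\log_qM-1$, not $1\le\rho<\log_qM$. If $\log_qM\notin\mathbb Z$ and $\rho=\lfloor\log_qM\rfloor$, then $\rho+1>\log_qM$, the set defining $\delta_{\rho+1}^{\elp}(\mathcal C)$ is empty, and Lemma \ref{lem1} does not apply. Since you only use $\rho\le\rho_2-1$, the proof itself is unaffected.

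Second, your closing remark overstates the role of integrality. Your one-step argument works verbatim for any real $\rho>0$ with $\rho+1\le\log_qM$, so only integrality of $\rho_2-\rho_1$ matters. Combined with Lemma \ref{lem3}, it even yields $\delta_{\rho_2}^{\elp}(\mathcal C)\ge\delta_{\rho_1}^{\elp}(\mathcal C)+\lfloor\rho_2-\rho_1\rfloor$ for all real $0<\rho_1\le\rho_2\le\log_qM$. This is consistent with the remark after Lemma \ref{lem3}, not excluded by it.
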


\begin{IEEEproof}
	Let $i_0=\delta_{\rho_1}^{\elp}(\mathcal C)\le n$ and $d=\rho_2-\rho_1$. By Definition \ref{def4},
	\begin{align*}
		h_{i_0-1}(\mathcal C) &< \rho_1, \\
		h_{i_0}(\mathcal C) &\ge \rho_1.
	\end{align*}
	We claim that $i_0+d\le n$. Otherwise, $i_0+d>n$ if and only if $n-i_0+1\le d$. Applying Lemma \ref{lem2} from index $i_0-1$ to index $n$ shows that
	\begin{align*}
		h_n(\mathcal C)
		&\le h_{i_0-1}(\mathcal C)+n-i_0+1\\
		&<\rho_1+d \\
		&=\rho_2,
	\end{align*}
	contrary to $h_n(\mathcal C)=\log_qM\ge\rho_2$. Since $i_0+d-1\le n-1$, another application of Lemma \ref{lem2} gives
	\begin{equation*}
		h_{i_0+d-1}(\mathcal C)
		\le h_{i_0-1}(\mathcal C)+d
		<\rho_1+d=\rho_2.
	\end{equation*}
	Thus the ELP has not reached $\rho_2$ by index $i_0+d-1$. Definition \ref{def4} therefore yields
	\begin{align*}
		\delta_{\rho_2}^{\elp}(\mathcal C)
		&\ge i_0+d\\
		&=\delta_{\rho_1}^{\elp}(\mathcal C)+(\rho_2-\rho_1),
	\end{align*}
	which is equivalent to the claimed inequality.
\end{IEEEproof}

\subsection{Merge Conversion}

In this paper, we study the convertible codes in the merge regime following the theoretical framework introduced in \cite{maturana2022convertible} and generalized in \cite{ge2026mds}.

\begin{definition}
	A convertible code in the merge regime consists of
	\begin{enumerate}
		\item $t$ initial codes $\{\mathcal C_{I_i}\}_{i=1}^{t}$, where $\mathcal C_{I_i}$ is an $(n_{I_i},M_{I_i})_q$ code for all $i\in[t]$,
		\item a final code $\mathcal C_F$ with parameters $(n_F,M_F)_q$,
		\item a conversion map $\sigma:\prod_{i=1}^{t} \mathcal C_{I_i}\to \mathcal C_F$.
	\end{enumerate}
\end{definition}

Let $\mathcal C_{\mathbf I}=(\mathcal C_{I_1},\ldots,\mathcal C_{I_t})$. We denote the resulting convertible code by $(\mathcal C_{\mathbf I},\mathcal C_F,\sigma)$. A conversion is information preserving if $\sigma$ is bijective, in which case $M_F=\prod_{i=1}^{t}M_{I_i}$. Throughout the remainder of the paper, all conversions are assumed to be information preserving. We assume $M_{I_i}\ge2$ for every $i\in[t]$ to exclude initial codes that carry no information. The initial codes may have different lengths, sizes, distances, and algebraic structures.

We distinguish three types of symbols in a conversion. An unchanged symbol is an initial symbol that appears at the same indexed coordinate in the final codeword. A read symbol is an initial symbol accessed to compute the new final symbols. A written symbol is a final symbol produced during conversion. 

To distinguish coordinates belonging to different codewords, the coordinates of the $i$th initial code are indexed by
\begin{equation*}
	\mathcal I_i=\{i\}\times[n_{I_i}],\ i\in[t].
\end{equation*}
For a merge conversion, let $\mathcal U_i\subseteq\mathcal I_i$ and $\mathcal R_i\subseteq\mathcal I_i$ denote the unchanged and read coordinates, respectively. These sets may overlap because an unchanged symbol may also be read. Put
\begin{equation*}
	\mathcal U=\bigsqcup_{i=1}^t\mathcal U_i,\
	\mathcal R=\bigsqcup_{i=1}^t\mathcal R_i,\
	\mathcal W=\{t+1\}\times[n_F-|\mathcal U|],\
	\Omega_F=\mathcal U\sqcup\mathcal W.
\end{equation*}
Thus the coordinates of the final code are indexed by the disjoint set $\Omega_F$, which has size $n_F$. The sets $\mathcal U_i$ and $\mathcal R_i$ describe the conversion if
\begin{equation}\label{eqn2}
	\sigma(\mathbf c)|_{\mathcal U}=\mathbf c|_{\mathcal U},
	\qquad
	\sigma(\mathbf c)|_{\mathcal W}=\psi(\mathbf c|_{\mathcal R})
\end{equation}
for every input tuple $\mathbf c$ and some deterministic map $\psi$. For $i\in[t]$ and $I\subseteq[t]$, define
\begin{equation*}
	k_i=\log_qM_{I_i},\ k_F=\log_qM_F,\ k_{-I}=k_F-\sum_{i\in I}k_i.
\end{equation*}
For $I=\{i\}$, we write $k_{-i}$ in place of $k_{-I}$, i.e.,
\begin{equation}\label{eqnk-i}
	k_{-i} = k_F - k_i.
\end{equation}

For systematic linear initial codes, reading all information symbols and then re-encoding gives a conversion with total read cost $k_F$. More generally, each $q$-ary read symbol carries at most one unit of $q$-ary entropy. Hence, if the total read cost is smaller than $k_F$, the read symbols alone cannot determine all the input information, and the conversion must make essential use of unchanged symbols. This motivates the following definition.

\begin{definition}\label{def5}
	A merge conversion that preserves information is nontrivial if its total read cost is strictly smaller than the sum of the dimensions of the initial codes, namely,
	\begin{equation*}
		\sum_{i=1}^{t}|\mathcal R_i|<k_F.
	\end{equation*}
\end{definition}

The preceding definitions separate the two resources studied in this paper. Unchanged symbols reduce the write cost, whereas read sets quantify the information accessed to compute the written symbols. The next section bounds the number of unchanged symbols.

\section{Bounds on Unchanged Symbols}\label{sec3}

\subsection{Individual Bounds}

We begin with an upper bound for each initial code. The first form uses the ELP of the subset of the final code obtained by fixing one initial codeword. For $\mathbf v\in\mathcal C_{I_i}$, write
\begin{equation*}
	\mathcal C_{i,\mathbf v}
	=\left\{\sigma(\mathbf c):
	\mathbf c\in \prod_{j=1}^{i-1}\mathcal C_{I_j} \times \{\mathbf{v}\} \times \prod_{j=i+1}^{t}\mathcal C_{I_j} \right\}.
\end{equation*}

\begin{lemma}\label{lem6}
	Let $(\mathcal C_{\mathbf{I}},\mathcal C_F,\sigma)$ be a convertible code. For every $i\in[t]$,
	\begin{equation*}
		|\mathcal U_i| \le \min_{\mathbf{v}\in\mathcal C_{I_i}} \left\{ n_{I_i}, n_F-\delta^{\elp}_{k_{-i}}(\mathcal C_{i,\mathbf v}) \right\},
	\end{equation*}
	where $k_{-i}$ is defined by \eqref{eqnk-i}.
\end{lemma}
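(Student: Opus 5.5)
The bound $|\mathcal U_i|\le n_{I_i}$ is immediate, since $\mathcal U_i\subseteq\mathcal I_i$. It therefore suffices to fix an arbitrary $\mathbf v\in\mathcal C_{I_i}$ and prove $|\mathcal U_i|\le n_F-\delta^{\elp}_{k_{-i}}(\mathcal C_{i,\mathbf v})$. Equivalently, we exhibit a coordinate set $J\subseteq\Omega_F$ with $|J|=n_F-|\mathcal U_i|$ and $h_{|J|}(\mathcal C_{i,\mathbf v})\ge k_{-i}$; Definition \ref{def4} then gives $\delta^{\elp}_{k_{-i}}(\mathcal C_{i,\mathbf v})\le n_F-|\mathcal U_i|$. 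Before this, note that $\mathcal C_{i,\mathbf v}$ is a legitimate code for the ELP. Since $\sigma$ is bijective, $|\mathcal C_{i,\mathbf v}|=\prod_{j\ne i}M_{I_j}=q^{k_{-i}}$, so $\log_q|\mathcal C_{i,\mathbf v}|=k_{-i}$ and the level $\rho=k_{-i}$ lies in the admissible range $[0,\log_q|\mathcal C_{i,\mathbf v}|]$. If $k_{-i}=0$, the convention $\delta_0^{\elp}=0$ makes the bound trivial, so we may assume $k_{-i}>0$.

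The natural choice is $J=\Omega_F\setminus\mathcal U_i$, so that $J^c=\mathcal U_i$. By \eqref{eqn2}, every codeword $\sigma(\mathbf c)\in\mathcal C_{i,\mathbf v}$ satisfies $\sigma(\mathbf c)|_{\mathcal U_i}=\mathbf c|_{\mathcal U_i}=\mathbf v|_{\mathcal U_i}$, because the $i$th component of $\mathbf c$ is fixed to $\mathbf v$. Thus $X_{J^c}$ is deterministic under the uniform distribution on $\mathcal C_{i,\mathbf v}$, and
\begin{equation*}
	H(X_J\mid X_{J^c})=H(X_J)=H(X)=\log_q|\mathcal C_{i,\mathbf v}|=k_{-i}.
\end{equation*}
The middle equality holds because $X=(X_J,X_{J^c})$ and $X_{J^c}$ is constant, so $X_J$ determines $X$. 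Hence $h_{n_F-|\mathcal U_i|}(\mathcal C_{i,\mathbf v})\ge k_{-i}$.

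If $|\mathcal U_i|<n_F$, then $n_F-|\mathcal U_i|\in[n_F]$, and minimality in Definition \ref{def4} yields the desired inequality. Taking the minimum over $\mathbf v$ finishes the proof.

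The only delicate point is the degenerate case $|\mathcal U_i|=n_F$, where $J=\emptyset$. In that case all of $\mathcal C_{i,\mathbf v}$ agrees on every coordinate, so $|\mathcal C_{i,\mathbf v}|=1$ and $k_{-i}=0$, which has already been handled. This case cannot occur when $k_{-i}>0$. The argument is otherwise routine. The point worth stressing is that it never uses linearity or any coset structure of $\mathcal C_{i,\mathbf v}$; only the fact that the unchanged coordinates of block $i$ are frozen by $\mathbf v$ matters.
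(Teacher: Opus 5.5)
Your proof is correct and is essentially the paper's argument. The paper conditions on the complement of $\supp(\mathcal C_{i,\mathbf v})$ to get $\delta^{\elp}_{k_{-i}}(\mathcal C_{i,\mathbf v})\le|\supp(\mathcal C_{i,\mathbf v})|\le n_F-|\mathcal U_i|$, whereas you condition directly on the frozen coordinates $\mathcal U_i$ in one step, and you also handle explicitly the degenerate cases $k_{-i}=0$ and $|\mathcal U_i|=n_F$, which the paper leaves implicit.
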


\begin{IEEEproof}
	Without loss of generality, assume $i=1$ and fix $\mathbf v\in\mathcal C_{I_1}$. Since $\sigma$ is bijective,
	\begin{equation*}
		\log_q|\mathcal C_{1,\mathbf v}| = k_{-1}.
	\end{equation*}
	Thus,
	\begin{equation*}
		\Pr\!\left(X^{(\mathcal C_{1,\mathbf v})}=\mathbf c\right)
		=\frac{1}{|\mathcal C_{1,\mathbf v}|}
	\end{equation*}
	for every $\mathbf c\in\mathcal C_{1,\mathbf v}$, and 
	\begin{equation*}
		H(X^{(\mathcal C_{1,\mathbf v})})=k_{-1}.
	\end{equation*}
	The coordinates in $\mathcal U_1$ are constant on $\mathcal C_{1,\mathbf v}$, by \eqref{def_supp},
	\begin{equation*}
		|\supp(\mathcal C_{1,\mathbf v})| \le n_F - |\mathcal U_1|.
	\end{equation*}
	All coordinates outside the support are constant. Therefore,
	\begin{align*}
	&h_{|\supp(\mathcal C_{1,\mathbf v})|}
	(\mathcal C_{1,\mathbf v})\\
	&\ge H\left(X^{(\mathcal C_{1,\mathbf v})}_{\supp(\mathcal C_{1,\mathbf v})}
	\middle|X^{(\mathcal C_{1,\mathbf v})}_{\Omega_F\setminus\supp(\mathcal C_{1,\mathbf v})}\right) \\
	&= H\left( X^{(\mathcal C_{1,\mathbf v})} \right) - H\left( X^{(\mathcal C_{1,\mathbf v})}_{\Omega_F\setminus\supp(\mathcal C_{1,\mathbf v})} \right) \\
	&= H\left( X^{(\mathcal C_{1,\mathbf v})} \right) \\
	&=k_{-1},
	\end{align*}
	where the first inequality comes from Definition \ref{def2}. Hence, by Definition \ref{def4},
	\begin{equation*}
		\delta^{\elp}_{k_{-1}}(\mathcal C_{1,\mathbf v})
		\le|\supp(\mathcal C_{1,\mathbf v})|
		\le n_F-|\mathcal U_1|.
	\end{equation*}
	Combining this with $|\mathcal U_1|\le n_{I_1}$ proves the result.
\end{IEEEproof}

\begin{remark}
	In \cite[Proposition 3]{gruica2026linear}, the number of unchanged symbols can further be bounded by $|\mathcal U_i| \le \min \{ n_{I_i}, n_F-d_{k_{-i}}(\mathcal C_F) \}$ for linear codes, since $d_{k_{-i}}(\mathcal C_F)\le d_{k_{-i}}(\mathcal C_{i,\mathbf v})$. This comparison need not hold for nonlinear codes because $\delta^{\elp}_{\rho}(\mathcal C)\le \delta^{\elp}_{\rho}(\mathcal C')$ is false in general for $\mathcal C'\subseteq\mathcal C$, as the following example shows.
\end{remark}

\begin{example}\label{exa1}
	Let $\mathcal C=\{00,01,10\}\subseteq\mathbb F_2^2$. For $J=\{1\}$,
	\begin{align*}
		&H(X_J|X_{[2]\setminus J}) \\ 
		&= \frac{2}{3} H(X_J|X_{[2]\setminus J}=0) + \frac{1}{3} H(X_J|X_{[2]\setminus J}=1) \\
		&= \frac{2}{3} \left( -\frac{1}{2} \log_2 \frac{1}{2} -\frac{1}{2} \log_2 \frac{1}{2} \right) + \frac{1}{3} \left( -1 \log_2 1 \right) \\
		&=\frac{2}{3}.
	\end{align*}
	The same calculation gives $H(X_J|X_{[2]\setminus J})=\frac{2}{3}$ for $J=\{2\}$. Hence $h_1(\mathcal C)=\frac{2}{3}$, $h_2(\mathcal C)=\log_2 3$, and Definition \ref{def4} gives $\delta^{\elp}_1(\mathcal C)=2$. For the subset $\mathcal C'=\{00,01\}$, both ELP values equal one, so $\delta^{\elp}_1(\mathcal C')=1$. Consequently, $\delta^{\elp}_1(\mathcal C)>\delta^{\elp}_1(\mathcal C')$.
\end{example}

Although $\delta^{\elp}_{\rho}(\mathcal C) \le \delta^{\elp}_{\rho}(\mathcal C')$ does not hold in general for $\mathcal C'\subseteq\mathcal C$, we can obtain a bound directly from the ELP of the final code by using conditional entropy.

\begin{lemma}\label{lem7}
    Let $(\mathcal C_{\mathbf I},\mathcal C_F,\sigma)$ be a convertible code. For every $i\in[t]$,
    \begin{equation*}
        |\mathcal U_i|
        \le
        \min\left\{
        n_{I_i},
        n_F-\delta^{\elp}_{k_{-i}}(\mathcal C_F)
        \right\}.
    \end{equation*}
\end{lemma}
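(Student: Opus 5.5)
We need $n_F-|\mathcal U_i|\ge \delta^{\elp}_{k_{-i}}(\mathcal C_F)$. By Definition \ref{def4}, it suffices to exhibit a set $J\subseteq\Omega_F$ with $|J|=n_F-|\mathcal U_i|$ and $H(X_J\mid X_{J^c})\ge k_{-i}$ for a uniform codeword $X=X^{(\mathcal C_F)}$. The bound $|\mathcal U_i|\le n_{I_i}$ is immediate since $\mathcal U_i\subseteq\mathcal I_i$. The natural choice is $J=\Omega_F\setminus\mathcal U_i$, so that $J^c=\mathcal U_i$.

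The plan is to compute $H(X_J\mid X_{\mathcal U_i})$ via the chain rule rather than comparing ELPs of $\mathcal C_F$ and $\mathcal C_{i,\mathbf v}$. That comparison fails by Example \ref{exa1}, and this is exactly the obstacle the conditional-entropy route circumvents. Since $\sigma$ is bijective and $X$ is uniform on $\mathcal C_F$, the preimage $\sigma^{-1}(X)=(V_1,\dots,V_t)$ is uniform on $\prod_j\mathcal C_{I_j}$, so the $V_j$ are independent with $V_i$ uniform on $\mathcal C_{I_i}$. By \eqref{eqn2}, $X_{\mathcal U_i}=V_i|_{\mathcal U_i}$ is a function of $V_i$. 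We then write
\begin{equation*}
H(X_J\mid X_{\mathcal U_i})=H(X)-H(X_{\mathcal U_i})=k_F-H(X_{\mathcal U_i})\ge k_F-H(V_i)=k_F-k_i=k_{-i},
\end{equation*}
where the first equality holds because $X$ is determined by $(X_J,X_{\mathcal U_i})$, and the inequality holds because $X_{\mathcal U_i}$ is a function of $V_i$.

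From this, $h_{n_F-|\mathcal U_i|}(\mathcal C_F)\ge k_{-i}$. We must also check that $k_{-i}$ lies in the admissible range $[0,k_F]$, which is clear because $M_{I_j}\ge 1$. The degenerate cases need a word. If $k_{-i}=0$, the convention $\delta_0^{\elp}=0$ gives the trivial bound $|\mathcal U_i|\le n_F$. If $n_F-|\mathcal U_i|=0$, then $H(X)=H(X_{\mathcal U_i})\le k_i$, which forces $k_{-i}\le 0$, so this case also falls under the convention. Otherwise $n_F-|\mathcal U_i|\in[n_F]$, and Definition \ref{def4} yields $\delta^{\elp}_{k_{-i}}(\mathcal C_F)\le n_F-|\mathcal U_i|$, which rearranges to the claim.

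The only delicate step is justifying $H(X_{\mathcal U_i})\le k_i$: the unchanged coordinates of the final codeword must depend only on the $i$th initial codeword, which follows directly from \eqref{eqn2}. The conditioning on $X_{\mathcal U_i}$, rather than restricting to the subcode $\mathcal C_{i,\mathbf v}$, is what lets us avoid any monotonicity of the ELP under inclusion.
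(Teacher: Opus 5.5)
Your proof is correct. It shares the paper's skeleton: take $J=\Omega_F\setminus\mathcal U_i$, show $H(X_J\mid X_{\mathcal U_i})\ge k_{-i}$, and apply Definition~\ref{def4}. The difference is how you obtain the key entropy inequality.

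The paper works fiber by fiber. For each value $\mathbf u$ of $X_{\mathcal U_i}$ it picks $\mathbf v\in\mathcal C_{I_i}$ with $\mathbf v|_{\mathcal U_i}=\mathbf u$. Varying the other inputs then exhibits at least $q^{k_{-i}}$ codewords in the fiber $\{\mathbf c\in\mathcal C_F:\mathbf c|_{\mathcal U_i}=\mathbf u\}$. It checks that the conditional law is uniform on that fiber and averages the resulting pointwise bound $H(X_J\mid X_{\mathcal U_i}=\mathbf u)\ge k_{-i}$.

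You instead use the global identity $H(X_J\mid X_{\mathcal U_i})=k_F-H(X_{\mathcal U_i})$. You combine it with the single observation that $X_{\mathcal U_i}=V_i|_{\mathcal U_i}$ takes at most $M_{I_i}$ values, so $H(X_{\mathcal U_i})\le k_i$.

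Your route is shorter and needs neither a choice of preimage nor the explicit conditional distribution. The paper's route gives the stronger pointwise fact that every fiber carries residual entropy at least $k_{-i}$, not just the average. Both extend directly to the joint bound of Lemma~\ref{lem8}. In your version, with $T=\bigsqcup_{j\in I}\mathcal U_j$, $X_T$ is a function of $(V_j)_{j\in I}$, so $H(X_T)\le\sum_{j\in I}k_j$.

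You also handle the degenerate cases $k_{-i}=0$ and $|\mathcal U_i|=n_F$ explicitly. This is needed because Definition~\ref{def4} minimizes over $[n]$ and relies on the convention $\delta_0^{\elp}=0$, and the paper leaves it implicit.
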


\begin{IEEEproof}
    Without loss of generality, let $i=1$. Fix any $\mathbf u\in\mathcal C_F|_{\mathcal U_1}$. By the definition of unchanged coordinates, there is an initial codeword $\mathbf v\in\mathcal C_{I_1}$ such that
    	\begin{equation*}
        	\mathbf v|_{\mathcal U_1}=\mathbf u.
    	\end{equation*}
    	Keeping $\mathbf v$ fixed and varying the other initial codewords produces
    	\begin{equation*}
        	\prod_{j=2}^{t}M_{I_j} =q^{k_{-1}}
    	\end{equation*}
    distinct final codewords, all of which have projection $\mathbf u$ on $\mathcal U_1$. Since $\sigma$ is bijective, these final codewords have distinct projections on $\Omega_F\setminus\mathcal U_1$. Therefore,
    \begin{equation*}
        \left| \left\{ \mathbf c|_{\Omega_F\setminus\mathcal U_1}: \mathbf c\in\mathcal C_F,\  \mathbf c|_{\mathcal U_1}=\mathbf u \right\} \right| \ge q^{k_{-1}}.
    \end{equation*}
    Define
    \begin{equation*}
        \mathcal C_F(\mathbf u)
        =\{\mathbf c\in\mathcal C_F:\mathbf c|_{\mathcal U_1}=\mathbf u\}.
    \end{equation*}
    For every $\mathbf c\in\mathcal C_F(\mathbf u)$, the event $\{X^{(\mathcal C_F)}=\mathbf c\}$ implies $\{X_{\mathcal U_1}^{(\mathcal C_F)}=\mathbf u\}$. Uniformity on $\mathcal C_F$ therefore gives
    \begin{align*}
        &\Pr\!\left(X^{(\mathcal C_F)}=\mathbf c
        \middle|X_{\mathcal U_1}^{(\mathcal C_F)}=\mathbf u\right)\\
        &= \frac{\Pr\!\left(X^{(\mathcal C_F)}=\mathbf c, X_{\mathcal U_1}^{(\mathcal C_F)}=\mathbf u\right)}{\Pr\!\left(X_{\mathcal U_1}^{(\mathcal C_F)}=\mathbf u\right)} \\
        &=\frac{1/|\mathcal C_F|}{|\mathcal C_F(\mathbf u)|/|\mathcal C_F|} \\
        &=\frac{1}{|\mathcal C_F(\mathbf u)|}.
    \end{align*}
    Hence the conditional distribution is uniform on $\mathcal C_F(\mathbf u)$. Since all codewords in $\mathcal C_F(\mathbf u)$ agree on $\mathcal U_1$, distinct codewords have distinct projections on $\Omega_F\setminus\mathcal U_1$. The corresponding conditional projection is therefore uniform on a set of at least $q^{k_{-1}}$ elements. Thus, for every $\mathbf u\in\mathcal C_F|_{\mathcal U_1}$,
    \begin{equation}\label{eqn3}
        H\left( X_{\Omega_F\setminus\mathcal U_1}^{(\mathcal C_F)} \,\middle|\, X_{\mathcal U_1}^{(\mathcal C_F)}=\mathbf u \right) \ge k_{-1}.
    \end{equation}

    Averaging \eqref{eqn3} over all possible values of $X_{\mathcal U_1}^{(\mathcal C_F)}$ gives
    \begin{align*}
        &h_{n_F-|\mathcal U_1|}(\mathcal C_F) \\
        &\ge H\left( X_{\Omega_F\setminus\mathcal U_1}^{(\mathcal C_F)} \,\middle|\, X_{\mathcal U_1}^{(\mathcal C_F)} \right)\\
        &= \sum_{\mathbf u\in\mathcal C_F|_{\mathcal U_1}} \Pr\left( X_{\mathcal U_1}^{(\mathcal C_F)}=\mathbf u \right) H\left( X_{\Omega_F\setminus\mathcal U_1}^{(\mathcal C_F)} \,\middle|\, X_{\mathcal U_1}^{(\mathcal C_F)}=\mathbf u \right)\\
        &\ge \sum_{\mathbf u\in\mathcal C_F|_{\mathcal U_1}} \Pr\left( X_{\mathcal U_1}^{(\mathcal C_F)}=\mathbf u \right) \cdot k_{-1}\\
        &=k_{-1}.
    \end{align*}
    It follows from Definition \ref{def4} that
    \begin{equation*}
        \delta^{\elp}_{k_{-1}}(\mathcal C_F) \le n_F-|\mathcal U_1|.
    \end{equation*}
    Thus,
    \begin{equation*}
        |\mathcal U_1| \le n_F-\delta^{\elp}_{k_{-1}}(\mathcal C_F).
    \end{equation*}
    Combining this inequality with the trivial bound $|\mathcal U_1|\le n_{I_1}$ proves the result.
\end{IEEEproof}

\begin{remark}
Lemma \ref{lem6} uses the ELP of the code obtained by fixing one initial codeword, whereas Lemma \ref{lem7} uses the ELP of the full final code $\mathcal C_F$. In the linear setting, the two bounds can be compared using the monotonicity of GHW. This monotonicity may fail for nonlinear codes, as shown by Example \ref{exa1}.
\end{remark}

\subsection{Joint Bounds}

The corresponding individual bound in \cite[Proposition 3]{gruica2026linear} follows from Lemma \ref{lem7} by specializing to linear codes. However, applying the individual bound separately to each initial code does not fully describe how several initial codes share the coordinates of the final code. We therefore extend Lemma \ref{lem7} and derive a joint upper bound for every selected collection of initial codes, which provides an additional restriction that may be stronger than the sum of the individual bounds.

\begin{lemma}\label{lem8}
	Let $(\mathcal C_{\mathbf{I}},\mathcal C_F,\sigma)$ be a convertible code. For every nonempty $I\subseteq[t]$,
	\begin{equation*}
		\sum_{j\in I}|\mathcal U_j|\le \min\left\{ \sum_{j\in I}n_{I_j}, n_F-\delta^{\elp}_{k_{-I}}(\mathcal C_F) \right\}.
	\end{equation*}
\end{lemma}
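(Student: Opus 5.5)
The plan is to repeat the conditional-entropy argument of Lemma \ref{lem7}, this time conditioning on all coordinates in $\mathcal U_I\triangleq\bigsqcup_{j\in I}\mathcal U_j$ at once. The trivial part $\sum_{j\in I}|\mathcal U_j|\le\sum_{j\in I}n_{I_j}$ holds because $\mathcal U_j\subseteq\mathcal I_j$. For the other part, note that $\mathcal U_I\subseteq\Omega_F$ has size $\sum_{j\in I}|\mathcal U_j|$, since the sets are disjoint. It therefore suffices to show
\begin{equation*}
	H\left(X^{(\mathcal C_F)}_{\Omega_F\setminus\mathcal U_I}\,\middle|\,X^{(\mathcal C_F)}_{\mathcal U_I}\right)\ge k_{-I}.
\end{equation*}
Then $h_{n_F-|\mathcal U_I|}(\mathcal C_F)\ge k_{-I}$ by Definition \ref{def2}, and Definition \ref{def4} gives $\delta^{\elp}_{k_{-I}}(\mathcal C_F)\le n_F-|\mathcal U_I|$. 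If $k_{-I}=0$, i.e.\ $I=[t]$, the bound is immediate from the convention $\delta^{\elp}_0=0$, so assume $0<k_{-I}\le k_F$.

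Fix any $\mathbf u\in\mathcal C_F|_{\mathcal U_I}$ and write $\mathbf u=(\mathbf u_j)_{j\in I}$ with $\mathbf u_j$ on $\mathcal U_j$. By \eqref{eqn2}, some input tuple $\mathbf c$ satisfies $\sigma(\mathbf c)|_{\mathcal U_I}=\mathbf u$, and $\sigma(\mathbf c)|_{\mathcal U_I}=\mathbf c|_{\mathcal U_I}$ depends only on the components $\mathbf c_j$, $j\in I$. Choose those components $\mathbf v_j\in\mathcal C_{I_j}$ with $\mathbf v_j|_{\mathcal U_j}=\mathbf u_j$ and hold them fixed. Varying the remaining components over $\prod_{j\notin I}\mathcal C_{I_j}$ gives $\prod_{j\notin I}M_{I_j}=q^{k_{-I}}$ inputs. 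Because $\sigma$ is bijective, these inputs produce distinct final codewords, and all of them project to $\mathbf u$ on $\mathcal U_I$.

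Hence $\mathcal C_F(\mathbf u)=\{\mathbf c\in\mathcal C_F:\mathbf c|_{\mathcal U_I}=\mathbf u\}$ has at least $q^{k_{-I}}$ elements. As in the proof of Lemma \ref{lem7}, uniformity on $\mathcal C_F$ makes the conditional distribution given $X_{\mathcal U_I}=\mathbf u$ uniform on $\mathcal C_F(\mathbf u)$. Codewords in $\mathcal C_F(\mathbf u)$ agree on $\mathcal U_I$, so they are distinguished by their projections on $\Omega_F\setminus\mathcal U_I$. The conditional entropy for each $\mathbf u$ is therefore $\log_q|\mathcal C_F(\mathbf u)|\ge k_{-I}$, and averaging over $\mathbf u$ yields the displayed inequality.

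The only delicate step is the joint fixing of several initial codewords. The projection of the final codeword onto $\mathcal U_I$ is determined solely by the initial codewords indexed by $I$, because each $\mathcal U_j$ lies inside $\mathcal I_j$. This is what allows us to freeze the components in $I$ and still range freely over all the others. Everything else is a direct transcription of Lemma \ref{lem7}.
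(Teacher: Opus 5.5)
Your proposal is correct and follows essentially the same route as the paper's proof in Appendix \ref{app1}. It conditions jointly on $\bigsqcup_{j\in I}\mathcal U_j$, freezes the initial codewords indexed by $I$, and uses bijectivity of $\sigma$ to obtain at least $q^{k_{-I}}$ equiprobable completions with distinct projections outside the unchanged set. The one addition is your explicit treatment of the case $I=[t]$, where $k_{-I}=0$, which the paper leaves implicit.
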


The proof of Lemma \ref{lem8} is similar to that of Lemma \ref{lem7}; see Appendix \ref{app1} for details.

\begin{remark}\label{rem1}
	Let $t=3$ and consider $|\mathcal U_1|+|\mathcal U_2|$. Applying Lemma \ref{lem7} separately to the first two initial codes gives
	\begin{equation*}
		|\mathcal U_1|+|\mathcal U_2|\le \min\{n_{I_1}+n_{I_2},\alpha\},
	\end{equation*}
	where
	\begin{equation*}
		\alpha=2n_F-\delta^{\elp}_{k_{-1}}(\mathcal C_F)-\delta^{\elp}_{k_{-2}}(\mathcal C_F).
	\end{equation*}
	Applying Lemma \ref{lem8} with $I=\{1,2\}$ gives
	\begin{equation*}
		|\mathcal U_1|+|\mathcal U_2|\le \min\{n_{I_1}+n_{I_2},\beta\},
	\end{equation*}
	where
	\begin{equation*}
		\beta=n_F-\delta^{\elp}_{k_{-\{1,2\}}}(\mathcal C_F).
	\end{equation*}
	Appendices \ref{app2}--\ref{app4} show that all three relations are possible: $\alpha=\beta$, $\alpha>\beta$, and $\alpha<\beta$, respectively. Hence neither bound is always stronger, and together they give
	\begin{equation*}
		|\mathcal U_1|+|\mathcal U_2|\le \min\{n_{I_1}+n_{I_2},\alpha,\beta\}.
	\end{equation*}
\end{remark}

\subsection{Consequences for Minimum Distance with Noninteger Dimensions}

We next use the ELP bounds to obtain bounds involving minimum distance. This loses part of the ELP information but gives simpler expressions involving familiar code parameters. In contrast to the linear case, $k_i$ and $k_{-i}$ need not be integers.

\begin{corollary}\label{cor1}
	Let $(\mathcal C_{\mathbf I},\mathcal C_F,\sigma)$ be a convertible code. Then
	\begin{align}
	 |\mathcal U_i|
	 &\le \min\left\{n_{I_i}, n_F-\max_{\mathbf v\in\mathcal C_{I_i}} d(\mathcal C_{i,\mathbf v})-\lceil k_{-i}\rceil+1\right\}, \label{eqn4}\\
	 |\mathcal U_i|
	 &\le \min\left\{n_{I_i}, n_F-d(\mathcal C_F)-\lceil k_{-i}\rceil+1\right\}. \label{eqn5}
	\end{align}
\end{corollary}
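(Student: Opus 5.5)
The plan is to combine Lemma~\ref{lem6} and Lemma~\ref{lem7} with Lemma~\ref{lem4}, which converts an ELP inverse into a minimum-distance expression. We first handle the degenerate case $k_{-i}=0$, i.e., $t=1$ or all other $M_{I_j}=1$. This case is excluded by $M_{I_j}\ge2$ whenever $t\ge2$. If $t=1$, the right-hand side reads $n_F-d+1$, and $|\mathcal U_i|\le n_{I_i}$ already suffices; in that case we only need the trivial bound, or we note that $d(\mathcal C_{i,\mathbf v})$ is vacuous. So we assume $k_{-i}>0$, and then Lemma~\ref{lem4} applies with $\rho=k_{-i}$. The hypothesis $0<\rho\le\log_q M$ holds for $\mathcal C_{i,\mathbf v}$, since $\log_q|\mathcal C_{i,\mathbf v}|=k_{-i}$ by bijectivity of $\sigma$. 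It also holds for $\mathcal C_F$, since $k_{-i}\le k_F$.

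For \eqref{eqn5}, Lemma~\ref{lem4} applied to $\mathcal C_F$ gives
\begin{equation*}
\delta^{\elp}_{k_{-i}}(\mathcal C_F)\ge d(\mathcal C_F)+\lceil k_{-i}\rceil-1.
\end{equation*}
Substituting this into Lemma~\ref{lem7} yields
\begin{equation*}
|\mathcal U_i|\le n_F-\delta^{\elp}_{k_{-i}}(\mathcal C_F)\le n_F-d(\mathcal C_F)-\lceil k_{-i}\rceil+1.
\end{equation*}
Together with $|\mathcal U_i|\le n_{I_i}$, this proves \eqref{eqn5}.

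For \eqref{eqn4}, Lemma~\ref{lem6} holds for every $\mathbf v\in\mathcal C_{I_i}$, because its proof fixes an arbitrary $\mathbf v$. So for each such $\mathbf v$ we have $|\mathcal U_i|\le n_F-\delta^{\elp}_{k_{-i}}(\mathcal C_{i,\mathbf v})$. Lemma~\ref{lem4} applied to $\mathcal C_{i,\mathbf v}$ then gives $|\mathcal U_i|\le n_F-d(\mathcal C_{i,\mathbf v})-\lceil k_{-i}\rceil+1$ for every $\mathbf v$. Taking the $\mathbf v$ that maximizes $d(\mathcal C_{i,\mathbf v})$ gives the stated bound. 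Note that $|\mathcal C_{i,\mathbf v}|=q^{k_{-i}}\ge2$, so its minimum distance is well defined.

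The only delicate point is bookkeeping rather than mathematics. We must check that the hypotheses of Lemma~\ref{lem4} are met, namely $\rho>0$ and $\rho$ at most the log-size of the relevant code. We must also interpret the minimum over $\mathbf v$ in Lemma~\ref{lem6} correctly: the bound holds for each $\mathbf v$ separately, which is what justifies passing to the maximum of $d(\mathcal C_{i,\mathbf v})$.
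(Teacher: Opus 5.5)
Your argument for both inequalities is the paper's own. You apply Lemma~\ref{lem4} at $\rho=k_{-i}$ and substitute it into Lemma~\ref{lem7}, and into Lemma~\ref{lem6} (valid for each $\mathbf v$, then optimized over $\mathbf v$). You also correctly verify the hypotheses $0<k_{-i}\le\log_q|\mathcal C_{i,\mathbf v}|$ and $k_{-i}\le k_F$.

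The one thing to fix is your aside on $k_{-i}=0$. For $t=1$, the claim that $|\mathcal U_i|\le n_{I_i}$ ``already suffices'' is wrong. The minimum in \eqref{eqn5} requires both bounds, and in fact \eqref{eqn5} can fail when $t=1$. Take $\mathcal C_{I_1}=\mathcal C_F=\{00,11\}\subseteq\mathbb F_2^2$, let $\sigma$ be the identity, and let $\mathcal U_1$ be both coordinates. Then $\lceil k_{-1}\rceil=0$ and $d(\mathcal C_F)=2$, so \eqref{eqn5} would force $|\mathcal U_1|\le 2-2-0+1=1$, yet $|\mathcal U_1|=2$. Meanwhile \eqref{eqn4} would refer to the minimum distance of the singleton $\mathcal C_{1,\mathbf v}$, which is undefined.

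So the corollary tacitly assumes $t\ge2$. As you note, this together with $M_{I_j}\ge2$ forces $k_{-i}>0$, and that is exactly what the paper's one-line proof uses when it invokes Lemma~\ref{lem4}. State $t\ge2$ as a standing hypothesis rather than claiming the degenerate case is handled. With that change your proof is complete.
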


\begin{IEEEproof}
	Lemma \ref{lem4}, applied to each $\mathcal C_{i,\mathbf v}$, gives
	\begin{equation*}
	 \delta_{k_{-i}}^{\elp}(\mathcal C_{i,\mathbf v}) \ge d(\mathcal C_{i,\mathbf v})+\lceil k_{-i}\rceil-1.
	\end{equation*}
	Substitution into Lemma \ref{lem6}, followed by minimization over $\mathbf v$, proves \eqref{eqn4}. Applying the same ELP inequality to $\mathcal C_F$ in Lemma \ref{lem7} proves \eqref{eqn5}.
\end{IEEEproof}

The bound in \eqref{eqn4} can be stronger because the sets $\mathcal C_{i,\mathbf v}$ may have larger minimum distance than $\mathcal C_F$. 
We next give a total bound that preserves the integer contribution of every initial code separately.

\begin{lemma}\label{lem9}
	Let $(\mathcal C_{\mathbf I},\mathcal C_F,\sigma)$ be a convertible code with $t\ge2$. Then
	\begin{equation}\label{eqn6}
	 |\mathcal U| \le \left\lfloor \frac{t n_F-t\bigl(d(\mathcal C_F)-1\bigr) -\sum_{i=1}^{t}\lceil k_i\rceil}{t-1} \right\rfloor .
	\end{equation}
\end{lemma}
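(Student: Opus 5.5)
We will obtain \eqref{eqn6} by summing the individual bounds \eqref{eqn5} of Corollary \ref{cor1} over all $i\in[t]$ and then eliminating the sum of the ceilings $\lceil k_{-i}\rceil$ in favour of the individual ceilings $\lceil k_i\rceil$.

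For each $i$, \eqref{eqn5} gives
\begin{equation*}
|\mathcal U_i|\le n_F-d(\mathcal C_F)+1-\lceil k_{-i}\rceil .
\end{equation*}
Summing over $i\in[t]$ and using $|\mathcal U|=\sum_i|\mathcal U_i|$ yields
\begin{equation*}
|\mathcal U|\le t n_F-t\bigl(d(\mathcal C_F)-1\bigr)-\sum_{i=1}^t\lceil k_{-i}\rceil .
\end{equation*}
Next we need a lower bound on $\sum_i\lceil k_{-i}\rceil$. The naive bound $\lceil k_{-i}\rceil\ge k_{-i}$ only gives $\sum_i k_{-i}=(t-1)k_F$, which does not produce the claimed form. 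Instead we use $k_{-i}=\sum_{j\ne i}k_j$ and subadditivity of the ceiling in reverse form. Specifically, $\lceil x\rceil+\lceil y\rceil\ge\lceil x+y\rceil$ goes the wrong way, so we must be careful here.

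The key observation is that the target inequality multiplies through to
\begin{equation*}
(t-1)|\mathcal U|\le t n_F-t(d(\mathcal C_F)-1)-\sum_i\lceil k_i\rceil .
\end{equation*}
This suggests combining the total bound with a single unchanged-set bound rather than summing all of them. Since $|\mathcal U|-|\mathcal U_i|=\sum_{j\ne i}|\mathcal U_j|$, it is natural to consider, for each $i$, the unchanged coordinates of all codes except $i$. The plan is therefore to prove, for each $i$,
\begin{equation}\label{prop_step}
\sum_{j\ne i}|\mathcal U_j|\le n_F-d(\mathcal C_F)+1-\lceil k_i\rceil .
\end{equation}
Summing \eqref{prop_step} over $i$ gives $(t-1)|\mathcal U|$ on the left, because each $|\mathcal U_j|$ appears $t-1$ times. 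The right-hand side becomes $t n_F-t(d(\mathcal C_F)-1)-\sum_i\lceil k_i\rceil$. Dividing by $t-1$ and taking the floor, which is allowed since $|\mathcal U|$ is an integer, gives \eqref{eqn6}.

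To prove \eqref{prop_step}, apply Lemma \ref{lem8} with $I=[t]\setminus\{i\}$, so that $k_{-I}=k_i$. This gives
\begin{equation*}
\sum_{j\ne i}|\mathcal U_j|\le n_F-\delta^{\elp}_{k_i}(\mathcal C_F).
\end{equation*}
Then Lemma \ref{lem4} applies with $\rho=k_i$. This is valid because $0<k_i\le k_F$, which holds since $M_{I_i}\ge 2$. It gives
\begin{equation*}
\delta^{\elp}_{k_i}(\mathcal C_F)\ge d(\mathcal C_F)+\lceil k_i\rceil-1,
\end{equation*}
which is exactly \eqref{prop_step}.

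The main obstacle is recognizing that summing the individual bounds of Corollary \ref{cor1} is the wrong route, and that the joint bound of Lemma \ref{lem8} on complements $[t]\setminus\{i\}$ is what keeps the ceiling attached to each $k_i$ separately. The hypothesis $t\ge2$ ensures that $I$ is nonempty and that division by $t-1$ is legitimate.
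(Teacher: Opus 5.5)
Your proof is correct, and its skeleton is the paper's. Because $|\mathcal U_i|+|\mathcal W|=n_F-\sum_{j\ne i}|\mathcal U_j|$, your per-$i$ inequality $\sum_{j\ne i}|\mathcal U_j|\le n_F-d(\mathcal C_F)+1-\lceil k_i\rceil$ is exactly the inequality $|\mathcal U_i|+|\mathcal W|\ge d(\mathcal C_F)+\lceil k_i\rceil-1$ that the paper sums over $i\in[t]$ before dividing by $t-1$. What differs is how that inequality is justified.

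The paper argues directly. Fixing $\mathbf v_{-i}$, the fiber $\mathcal C_{i\mid\mathbf v_{-i}}$ has $M_{I_i}$ codewords, minimum distance at least $d(\mathcal C_F)$, and support inside $\mathcal U_i\sqcup\mathcal W$, so the Singleton bound finishes. You instead apply the joint bound of Lemma~\ref{lem8} to the complement $I=[t]\setminus\{i\}$, where $k_{-I}=k_i$, and then Lemma~\ref{lem4}. The side conditions you check ($t\ge2$ so that $I\ne\emptyset$, and $0<k_i\le k_F$ from $M_{I_i}\ge2$) are exactly the ones needed.

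The paper's route is elementary and self-contained, with no entropy. Because it works fiber by fiber, it would even let you replace $d(\mathcal C_F)$ in the $i$th inequality by the possibly larger $\max_{\mathbf v_{-i}}d(\mathcal C_{i\mid\mathbf v_{-i}})$. Your route shows that Lemma~\ref{lem9} is just the minimum-distance relaxation of the $t$ joint bounds with $|I|=t-1$. Stopping before Lemma~\ref{lem4} gives $(t-1)|\mathcal U|\le tn_F-\sum_{i=1}^{t}\delta^{\elp}_{k_i}(\mathcal C_F)$. This implies \eqref{eqn6} and can be strictly sharper, e.g.\ when $\delta^{\elp}_{k_i}(\mathcal C_F)$ is a higher GHW exceeding $d(\mathcal C_F)+\lceil k_i\rceil-1$.

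Your opening attempt via \eqref{eqn5} is a dead end, as you recognize, and should be dropped from a final write-up.
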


\begin{IEEEproof}
	Fix $i\in[t]$ and $\mathbf v_{-i}\in\prod_{j\ne i}\mathcal C_{I_j}$, and let
	\begin{equation*}
		\mathcal C_{i\mid\mathbf v_{-i}}
		=\{\sigma(\mathbf v_1,\ldots,\mathbf v_{i-1},
		\mathbf c_i,\mathbf v_{i+1},\ldots,\mathbf v_t):
		\mathbf c_i\in\mathcal C_{I_i}\}.
	\end{equation*}
	The conversion is bijective, so $|\mathcal C_{i\mid\mathbf v_{-i}}|=M_{I_i}$. Only the coordinates in $\mathcal U_i\sqcup\mathcal W$ can vary on this set, and hence
	\begin{equation*}
	 	|\supp(\mathcal C_{i\mid\mathbf v_{-i}})|\le|\mathcal U_i|+|\mathcal W|.
	\end{equation*}
	Since $\mathcal C_{i\mid\mathbf v_{-i}}\subseteq\mathcal C_F$, its minimum distance is at least $d(\mathcal C_F)$. The Singleton bound gives
	\begin{equation*}
	 	M_{I_i}\le q^{|\supp(\mathcal C_{i\mid\mathbf v_{-i}})|-d(\mathcal C_F)+1}.
	\end{equation*}
	Since $|\supp(\mathcal C_{i\mid\mathbf v_{-i}})|$ is an integer, we have
	\begin{equation*}
	 	|\mathcal U_i|+|\mathcal W| \ge d(\mathcal C_F)+\lceil k_i\rceil-1.
	\end{equation*}
	Summing over $i\in[t]$ and using $|\mathcal W|=n_F-|\mathcal U|$ yields
	\begin{equation*}
	 	(t-1)|\mathcal U| \le t n_F-t\bigl(d(\mathcal C_F)-1\bigr) -\sum_{i=1}^{t}\lceil k_i\rceil.
	\end{equation*}
	Taking the floor proves \eqref{eqn6}.
\end{IEEEproof}

\begin{remark}\label{rem2}
	If every code is linear, then $k_i$ and $k_{-i}$ are integers, and the sets $\mathcal C_{i,\mathbf v}$ are cosets of linear subcodes. The ceiling operations disappear, recovering the minimum distance bound in \cite[Proposition 7]{gruica2026linear}. For nonlinear codes, the dimensions may be nonintegral and the sets $\mathcal C_{i,\mathbf v}$ need not be cosets or have a common minimum distance. Appendix \ref{app5} gives a nonlinear example in which the bound in Lemma \ref{lem9} is attained and separate rounding strictly decreases the right-hand side of the corresponding unrounded expression.
\end{remark}

We now turn from the number of symbols kept unchanged to the number that must be read.

\section{Bounds on Read Symbols}\label{sec4}

\subsection{A General Bound for Real Dimensions}

We derive lower bounds on the number of symbols read from each initial code. The dimensions $k_i$ defined in Section~\ref{sec2} may be nonintegral.
\begin{lemma}\label{lem10}
	Let $(\mathcal C_{\mathbf{I}},\mathcal C_F,\sigma)$ be a convertible code. For every $i\in[t]$,
	\begin{equation}\label{eqn7}
	|\mathcal R_i| \ge \max\left\{ 0,\, \left\lceil k_i-|\mathcal U_i|\right\rceil,\, \max_{0<r\le k_i}g_i(r) \right\},
	\end{equation}
	where $g_i(r)=\lfloor k_i-r- ( |\mathcal U_i|-\delta_r^{\elp}(\mathcal C_F)+1 )^+ \rfloor+1$ and $(a)^+=\max\{0,a\}$. 
\end{lemma}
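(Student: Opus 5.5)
The plan is to run one entropy chain on the uniformly random input and read off all three terms of \eqref{eqn7}; the term $0$ is trivial. Let $X_1,\ldots,X_t$ be independent, with $X_j$ uniform on $\mathcal C_{I_j}$. Since $\sigma$ is bijective, $Y=\sigma(X_1,\ldots,X_t)$ is uniform on $\mathcal C_F$, so $Y$ has the law of $X^{(\mathcal C_F)}$. Write $X_{-i}$ for the tuple of the other initial codewords and $X_{i,\mathcal R_i}$ for the restriction of $X_i$ to $\mathcal R_i$. By independence, $H(X_i\mid X_{-i})=k_i$. Splitting off the read symbols gives
\begin{equation*}
	k_i = H(X_{i,\mathcal R_i}\mid X_{-i}) + H(X_i\mid X_{i,\mathcal R_i},X_{-i}) \le |\mathcal R_i| + H(X_i\mid X_{i,\mathcal R_i},X_{-i}),
\end{equation*}
because each $q$-ary symbol carries at most one unit of entropy.

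The key step is to bound the residual term $H(X_i\mid X_{i,\mathcal R_i},X_{-i})$ in two ways.
\begin{itemize}
	\item First bound. Once $X_{-i}$ and $X_{i,\mathcal R_i}$ are known, \eqref{eqn2} shows that $Y|_{\mathcal W}=\psi(\cdot)$ is determined, and so are $Y|_{\mathcal U_j}$ for $j\ne i$. The only remaining randomness of $Y$ is in $Y|_{\mathcal U_i}$. Since $(X_i,X_{-i})$ and $Y$ determine each other, the residual term equals $H(Y\mid X_{i,\mathcal R_i},X_{-i})\le H(Y|_{\mathcal U_i})\le|\mathcal U_i|$. This gives $|\mathcal R_i|\ge k_i-|\mathcal U_i|$, and integrality of $|\mathcal R_i|$ yields the ceiling term.
	\item Second bound. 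Because $Y|_{\Omega_F\setminus\mathcal U_i}$ is a function of $(X_{i,\mathcal R_i},X_{-i})$, and conditioning on less information cannot decrease entropy,
	\begin{equation*}
		H(Y\mid X_{i,\mathcal R_i},X_{-i})\le H(Y\mid Y|_{\Omega_F\setminus\mathcal U_i}) = H(Y_{\mathcal U_i}\mid Y_{\Omega_F\setminus\mathcal U_i})\le h_{|\mathcal U_i|}(\mathcal C_F),
	\end{equation*}
	where the last step is Definition \ref{def2} applied to $J=\mathcal U_i$.
\end{itemize}
The second bound gives $k_i\le|\mathcal R_i|+h_{|\mathcal U_i|}(\mathcal C_F)$.

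To turn this into $g_i(r)$, fix $0<r\le k_i\le k_F$ and let $\delta=\delta_r^{\elp}(\mathcal C_F)$. Since $r>0$ we have $\delta\ge1$, and minimality in Definition \ref{def4} gives $h_{\delta-1}(\mathcal C_F)<r$. We then split into two cases on $|\mathcal U_i|$.
\begin{itemize}
	\item If $|\mathcal U_i|\ge\delta-1$, Lemma \ref{lem2} gives $h_{|\mathcal U_i|}(\mathcal C_F)\le h_{\delta-1}(\mathcal C_F)+|\mathcal U_i|-\delta+1$.
	\item If $|\mathcal U_i|<\delta-1$, the monotonicity part of Lemma \ref{lem2} gives $h_{|\mathcal U_i|}(\mathcal C_F)\le h_{\delta-1}(\mathcal C_F)$.
\end{itemize}
In both cases $h_{|\mathcal U_i|}(\mathcal C_F)<r+(|\mathcal U_i|-\delta+1)^+$. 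Hence $|\mathcal R_i|>k_i-r-(|\mathcal U_i|-\delta+1)^+$. Since $|\mathcal R_i|$ is an integer strictly exceeding the real number $a=k_i-r-(|\mathcal U_i|-\delta+1)^+$, we get $|\mathcal R_i|\ge\lfloor a\rfloor+1=g_i(r)$. Maximizing over $r$ completes the bound.

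I expect the main obstacle to be the identification of the residual entropy with a conditional entropy under the uniform law on $\mathcal C_F$ itself. Here no linear or coset structure is available, and non-monotonicity (Example \ref{exa1}) rules out working with subcodes. The resolution is to average over $X_{-i}$ rather than fix it, so that the joint law is exactly uniform on $\mathcal C_F$. One must also check that conditioning on $(X_{i,\mathcal R_i},X_{-i})$ is at least as informative as conditioning on $Y|_{\Omega_F\setminus\mathcal U_i}$. The strict inequality from minimality of $\delta$ is what produces the ``$\lfloor\cdot\rfloor+1$'' form.
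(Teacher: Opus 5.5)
Your proof is correct, but it is organized differently from the paper's. The paper treats each level $r$ on its own. It trims $\mathcal U_i$ to a set $T$ of size $\min\{|\mathcal U_i|,\delta_r^{\elp}(\mathcal C_F)-1\}$. It then bounds $H(X_T\mid X_S)\ge k_i-|\mathcal R_i|-\Delta_r$, with $\Delta_r=(|\mathcal U_i|-\delta_r^{\elp}(\mathcal C_F)+1)^+$, on each fixed-$\mathbf v_{-i}$ subcode $\mathcal C_{i\mid\mathbf v_{-i}}$, by counting its projections onto $S=\Omega_F\setminus T$. This is transferred to $\mathcal C_F$ through the auxiliary variable $V$ and the fact that conditioning reduces entropy, and the result contradicts the minimality of $\delta_r^{\elp}(\mathcal C_F)$. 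Lemma~\ref{lem2} enters only later, in Theorem~\ref{thm1}, to collapse the maximum over $r$. You instead take $J=\mathcal U_i$ in full and prove the single inequality $k_i\le|\mathcal R_i|+h_{|\mathcal U_i|}(\mathcal C_F)$ by one chain-rule computation on independent uniform inputs. The averaging over the other initial codewords is then automatic, and you never need to identify conditional laws across different probability spaces, as the paper does when relating $\mathcal C_{i\mid\mathbf v_{-i}}$ to $\mathcal C_F$. Every $g_i(r)$ then follows from the unit-increment property in Lemma~\ref{lem2}. That property plays the role of the paper's explicit charge of $\Delta_r$ for the coordinates of $\mathcal U_i\setminus T$ moved into $S$. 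Your first bullet is in fact redundant, since $h_{|\mathcal U_i|}(\mathcal C_F)\le|\mathcal U_i|$. Your route is shorter, and it directly establishes the bound $|\mathcal R_i|\ge\max\{0,\lceil k_i-h_{|\mathcal U_i|}(\mathcal C_F)\rceil\}$ of Theorem~\ref{thm1}, so Lemma~\ref{lem10} becomes a consequence rather than a precursor. The equality claim in Theorem~\ref{thm1} would still need the reverse estimate. The paper's level-by-level argument, in turn, keeps Lemma~\ref{lem10} independent of Lemma~\ref{lem2}. It also mirrors the subcode-support reasoning behind the integer-indexed linear bound, which makes the specialization to \eqref{eqn21} transparent.
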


\begin{IEEEproof}
	Without loss of generality, let $i=1$. For each
	\begin{equation*}
		\mathbf v_{-1}=(\mathbf v_2,\ldots,\mathbf v_t)
		\in\prod_{j=2}^t\mathcal C_{I_j},
	\end{equation*}
	define
	\begin{equation*}
		\mathcal C_{1\mid\mathbf v_{-1}}
		=\{\sigma(\mathbf c_1,\mathbf v_2,\ldots,\mathbf v_t):
		\mathbf c_1\in\mathcal C_{I_1}\}.
	\end{equation*}
	The bijectivity of $\sigma$ gives $|\mathcal C_{1\mid\mathbf v_{-1}}|=M_{I_1}$.

	We first prove the bound corresponding to the zero entropy level. With the other initial codewords fixed, the final projection on $\mathcal U_1\sqcup\mathcal W$ is determined by the $|\mathcal U_1|$ unchanged symbols and the $|\mathcal R_1|$ symbols read from the first initial codeword. Since the restriction of $\sigma$ to $\mathcal C_{I_1}$ is injective,
	\begin{equation*}
		M_{I_1}\le q^{|\mathcal U_1|+|\mathcal R_1|}.
	\end{equation*}
	Thus, because $|\mathcal R_1|$ is an integer,
	\begin{equation}\label{eqn8}
		|\mathcal R_1| \ge\left\lceil k_1-|\mathcal U_1|\right\rceil.
	\end{equation}

	We next prove the bound corresponding to a positive entropy level. Fix $r\in(0,k_1]$ and write
	\begin{equation*}
		\Delta_r = \bigl(|\mathcal U_1|-\delta_r^{\elp}(\mathcal C_F)+1\bigr)^+.
	\end{equation*}
	Choose $T\subseteq\mathcal U_1$ such that
	\begin{equation}\label{eqn_T}
		|T| = \min\left\{|\mathcal U_1|, \delta_r^{\elp}(\mathcal C_F)-1\right\}.
	\end{equation}
	Since $h_0(\mathcal C_F)=0<r\le h_{n_F}(\mathcal C_F)$, the set in the definition of $\delta_r^{\elp}(\mathcal C_F)$ is nonempty. Lemma \ref{lem1} further gives
	\begin{equation*}
		\delta_r^{\elp}(\mathcal C_F)\ge\lceil r\rceil\ge1.
	\end{equation*}
	Hence this choice of $T$ is well defined, i.e., $|T|\ge 0$. Moreover,
	\begin{equation*}
		|\mathcal U_1\setminus T|=\Delta_r.
	\end{equation*}
	
	Set $S=\Omega_F\setminus T$ and let $X^{(\mathcal C_{1\mid\mathbf v_{-1}})}$ be uniformly distributed over $\mathcal C_{1\mid\mathbf v_{-1}}$. Once $\mathbf v_{-1}$ is fixed, the coordinates in $S$ are determined by the symbols in $\mathcal U_1\setminus T$ and the symbols read from $\mathcal R_1$. Therefore,
	\begin{equation*}
		\log_q\left| \mathcal C_{1\mid\mathbf v_{-1}}|_S \right| \le |\mathcal R_1|+\Delta_r.
	\end{equation*}
	Because $T$ and $S$ partition $\Omega_F$ and $H(X^{(\mathcal C_{1\mid\mathbf v_{-1}})})=k_1$, we have
	\begin{equation}\label{eqn9}
	\begin{split}
		&H\left( X_T^{(\mathcal C_{1\mid\mathbf v_{-1}})} \middle| X_S^{(\mathcal C_{1\mid\mathbf v_{-1}})} \right)\\
		&= k_1-H\left(X_S^{(\mathcal C_{1\mid\mathbf v_{-1}})}\right)\\ 
		&\ge k_1-\log_q\left| \mathcal C_{1\mid\mathbf v_{-1}}|_S \right|\\
		&\ge k_1-|\mathcal R_1|-\Delta_r.
	\end{split}
	\end{equation}

	Next, let $X^{(\mathcal C_F)}$ be uniformly distributed over $\mathcal C_F$, and define $V$ on this probability space. For an outcome $X^{(\mathcal C_F)}=\mathbf c$, let
	\begin{equation*}
		(\mathbf c_{I_1},\ldots,\mathbf c_{I_t}) =\sigma^{-1}(\mathbf c)
	\end{equation*}
	and set
	\begin{equation*}
		V=(\mathbf c_{I_2},\ldots,\mathbf c_{I_t}).
	\end{equation*}
	The event $\{V=\mathbf v_{-1}\}$ is exactly $\{X^{(\mathcal C_F)}\in\mathcal C_{1\mid\mathbf v_{-1}}\}$. These sets $\{X^{(\mathcal C_F)}\in\mathcal C_{1\mid\mathbf v_{-1}}\}, \mathbf v_{-1}\in\prod_{j=2}^t\mathcal C_{I_j}$ are pairwise disjoint, each has size $M_{I_1}$, and their union is $\mathcal C_F$. It follows that
	\begin{equation*}
		\Pr(V=\mathbf v_{-1}) = \frac{M_{I_1}}{M_F} = \frac{1}{\prod_{j=2}^tM_{I_j}}.
	\end{equation*}
	In particular, $V$ is uniform. For every $\mathbf c\in\mathcal C_{1\mid\mathbf v_{-1}}$, the event $\{X^{(\mathcal C_F)}=\mathbf c\}$ implies $\{V=\mathbf v_{-1}\}$. Hence
	\begin{equation*}
		\Pr\!\left(X^{(\mathcal C_F)}=\mathbf c \middle|V=\mathbf v_{-1}\right)
		=\frac{1/M_F}{M_{I_1}/M_F}
		=\frac{1}{M_{I_1}}.
	\end{equation*}
	The conditional distribution of $X^{(\mathcal C_F)}$ is therefore uniform on $\mathcal C_{1\mid\mathbf v_{-1}}$. This is the same distribution used to define $X^{(\mathcal C_{1\mid\mathbf v_{-1}})}$, although the two random variables were introduced on different probability spaces. Consequently,
	\begin{equation}\label{eqn10}
	H\left( X_T^{(\mathcal C_F)} \middle|
	X_S^{(\mathcal C_F)},V=\mathbf v_{-1} \right)
	=H\left( X_T^{(\mathcal C_{1\mid\mathbf v_{-1}})}
	\middle| X_S^{(\mathcal C_{1\mid\mathbf v_{-1}})} \right).
	\end{equation}

	We can now prove the bound. Suppose for contradiction that 
	\begin{equation*}
		|\mathcal R_1| \le\left\lfloor k_1-r-\Delta_r\right\rfloor = g_1(r)-1.
	\end{equation*}
	Then $|\mathcal R_1|\le k_1-r-\Delta_r$. Since \eqref{eqn9} holds for every $\mathbf v_{-1}\in\prod_{j=2}^t\mathcal C_{I_j}$,
	\begin{equation*}
		H\left( X_T^{(\mathcal C_{1\mid\mathbf v_{-1}})} \middle| X_S^{(\mathcal C_{1\mid\mathbf v_{-1}})} \right)\ge r
	\end{equation*}
	Therefore, conditioning reduces entropy and \eqref{eqn10} gives
	\begin{align*}
		&H\left( X_T^{(\mathcal C_F)} \middle| X_S^{(\mathcal C_F)} \right)\\
		&\ge H\left( X_T^{(\mathcal C_F)} \middle| X_S^{(\mathcal C_F)},V \right)\\
		&= \sum_{\mathbf v_{-1}} \Pr(V=\mathbf v_{-1}) H\left( X_T^{(\mathcal C_F)} \middle| X_S^{(\mathcal C_F)},V=\mathbf v_{-1} \right)\\
		&= \sum_{\mathbf v_{-1}} \Pr(V=\mathbf v_{-1}) H\left( X_T^{(\mathcal C_{1\mid\mathbf v_{-1}})} \middle| X_S^{(\mathcal C_{1\mid\mathbf v_{-1}})} \right)\\
		&\ge \sum_{\mathbf v_{-1}}\Pr(V=\mathbf v_{-1}) \cdot r \\
		&=r,
	\end{align*}
	where the sums are over $\mathbf v_{-1}\in\prod_{j=2}^t\mathcal C_{I_j}$. By the definition of the ELP,
	\begin{equation*}
		h_{|T|}(\mathcal C_F) \ge H\left( X_T^{(\mathcal C_F)} \middle| X_S^{(\mathcal C_F)} \right) \ge r.
	\end{equation*}
	Hence $|T|\ge\delta_r^{\elp}(\mathcal C_F)$, contradicting \eqref{eqn_T}, i.e.,
	\begin{equation*}
		|T| \le\delta_r^{\elp}(\mathcal C_F)-1.
	\end{equation*}
	We conclude that
	\begin{equation*}
		|\mathcal R_1| \ge\left\lfloor k_1-r-\Delta_r\right\rfloor+1 =g_1(r)
	\end{equation*}
	for every $0<r\le k_1$. Combining this inequality with \eqref{eqn8} and $|\mathcal R_1|\ge0$, and then taking the maximum over $r$, proves \eqref{eqn7}.
\end{IEEEproof}

\subsection{A Threshold Form}

We now present a simplified lower bound that is often easier to apply. It replaces the optimization over entropy levels by a single threshold determined by the unchanged coordinates.

\begin{theorem}\label{thm1}
	Let $(\mathcal C_{\mathbf{I}},\mathcal C_F,\sigma)$ be a convertible code and fix $i\in[t]$. Define
	\begin{equation*}
		\theta_i=\sup\{r\in[0,k_i]:\delta^{\elp}_r(\mathcal C_F)\le|\mathcal U_i|\}.
	\end{equation*}
	Then the right side of \eqref{eqn7} equals
	$\lceil k_i-\theta_i\rceil$. In particular,
	\begin{equation*}
		|\mathcal R_i|\ge \left\lceil k_i-\theta_i\right\rceil
		=\max\{0,\lceil k_i-h_{|\mathcal U_i|}(\mathcal C_F)\rceil \}.
	\end{equation*}
\end{theorem}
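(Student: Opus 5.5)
The plan is to first identify $\theta_i$ explicitly, and then show that the right side of \eqref{eqn7} is bounded above and below by $\lceil k_i-\theta_i\rceil$. Write $u=|\mathcal U_i|$ and $h=h_u(\mathcal C_F)\ge0$.

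\emph{Step 1: compute $\theta_i$.} For $0<r\le k_i$ I claim $\delta_r^{\elp}(\mathcal C_F)\le u$ if and only if $h\ge r$.
\begin{itemize}
\item If $h\ge r>0$, then $u\ge1$, and Definition \ref{def4} gives $\delta^{\elp}_r\le u$.
\item Conversely, if $\delta^{\elp}_r\le u$, then the monotonicity of the ELP (Lemma \ref{lem2}) gives $h\ge h_{\delta_r^{\elp}}\ge r$.
\end{itemize}
Together with $\delta_0^{\elp}=0\le u$, the set in the definition of $\theta_i$ is exactly $[0,\min\{k_i,h\}]$. Hence $\theta_i=\min\{k_i,h\}$ and
\begin{equation*}
\lceil k_i-\theta_i\rceil=\max\{0,\lceil k_i-h\rceil\}.
\end{equation*}
This already gives the last equality in the statement.

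\emph{Step 2: upper bound on each term of \eqref{eqn7}.} I show that every term inside the maximum in \eqref{eqn7} is at most $\lceil k_i-\theta_i\rceil$. I will repeatedly use the elementary fact that $x<y$ implies $\lfloor x\rfloor+1\le\lceil y\rceil$.
\begin{itemize}
\item The term $0$ is trivially at most $\lceil k_i-\theta_i\rceil\ge0$.
\item For the term $\lceil k_i-u\rceil$: Lemma \ref{lem2} gives $h\le u$, so $\lceil k_i-u\rceil\le\lceil k_i-h\rceil\le\lceil k_i-\theta_i\rceil$.
\item For $g_i(r)$ with $0<r\le\theta_i$: here $\delta_r^{\elp}\le u$, so $\Delta_r=u-\delta_r^{\elp}+1\ge1$. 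Applying Lemma \ref{lem2} from index $\delta^{\elp}_r-1$ to $u$ gives
\begin{equation*}
h\le h_{\delta_r^{\elp}-1}+\Delta_r<r+\Delta_r.
\end{equation*}
Thus $\Delta_r>h-r\ge\theta_i-r$, so $k_i-r-\Delta_r<k_i-\theta_i$, and therefore $g_i(r)\le\lceil k_i-\theta_i\rceil$.
\item For $g_i(r)$ with $\theta_i<r\le k_i$: here $\delta^{\elp}_r>u$, so $\Delta_r=0$, and $k_i-r<k_i-\theta_i$ gives the same conclusion.
\end{itemize}
This step is the crux. The key point is the strict inequality $h<r+\Delta_r$ obtained from the bounded-increment property of Lemma \ref{lem2}. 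It is what absorbs the rounding, since $r$ and $k_i$ are real numbers.

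\emph{Step 3: attainment.}
\begin{itemize}
\item If $\theta_i=k_i$, the target value is $0$, which is attained by the first term in \eqref{eqn7}.
\item Otherwise $\theta_i=h<k_i$. Set $m=\lceil k_i-\theta_i\rceil\ge1$ and choose $r\in(\theta_i,k_i]$ with $k_i-r\in[m-1,k_i-\theta_i)$. If $m=1$, take $r=k_i$; if $m\ge2$, take $r=k_i-(m-1)$.
\end{itemize}
In the second case $\Delta_r=0$, so $g_i(r)=\lfloor k_i-r\rfloor+1=m$. Hence the maximum in \eqref{eqn7} equals $\lceil k_i-\theta_i\rceil$, and the stated bound on $|\mathcal R_i|$ then follows from Lemma \ref{lem10}.
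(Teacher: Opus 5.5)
Your proof is correct and follows essentially the same route as the paper: you identify $\theta_i=\min\{k_i,h_{|\mathcal U_i|}(\mathcal C_F)\}$, bound every term of \eqref{eqn7} by $\lceil k_i-\theta_i\rceil$ using the same case split on $r$ and the same strict bounded-increment estimate from Lemma \ref{lem2}, and then show the bound is attained. The only difference is in the attainment step, where the paper lets $r$ approach $\theta_i$ from above and you give the explicit witness $r=k_i-(m-1)$; this is the same argument made concrete.
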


\begin{IEEEproof}
	For every $r\in(0,k_i]$, Definition \ref{def4} and Lemma \ref{lem2} show 
	\begin{equation*}
		|\mathcal U_i| \ge \delta_r^{\elp}(\mathcal C_F)
		= \min\{j\in[n_F]: h_j(\mathcal C_F)\ge r\}
	\end{equation*}
	if and only if
	\begin{equation*}
		r\le h_{|\mathcal U_i|}(\mathcal C_F).
	\end{equation*}
	Since $\delta_0^{\elp}(\mathcal C_F)=0$, it follows that
	\begin{equation}\label{eqn11}
		\theta_i
		=\min\left\{k_i,h_{|\mathcal U_i|}(\mathcal C_F)\right\}.
	\end{equation}
	In particular, Lemma \ref{lem2} and
	$h_0(\mathcal C_F)=0$ imply
	\begin{equation}\label{eqn12}
		\theta_i
		\le h_{|\mathcal U_i|}(\mathcal C_F)
		\le|\mathcal U_i|.
	\end{equation}

	We now consider every entropy level in Lemma \ref{lem10}. 
	
	First, let $0<r\le\theta_i$. Then $\delta_r^{\elp}(\mathcal C_F)\le|\mathcal U_i|$ by the definition of $\theta_i$. Let $d=\delta_r^{\elp}(\mathcal C_F)$. Since $r>0$ and $h_0(\mathcal C_F)=0$, we have $d\ge1$. Otherwise, $d=0$ implies a contradiction $r=0$. By Definition \ref{def4}, $d$ is the smallest index such that $h_d(\mathcal C_F)\ge r$. Therefore, every index $j$ smaller than $d$ satisfies $h_j(\mathcal C_F)<r$. In particular,
	\begin{equation*}
		h_{d-1}(\mathcal C_F)<r.
	\end{equation*}
	By Lemma \ref{lem2},
	\begin{equation*}
		\begin{split}
		h_{|\mathcal U_i|}(\mathcal C_F)
		&\le h_{d-1}(\mathcal C_F) +|\mathcal U_i|-d+1\\
		&<r+|\mathcal U_i|-d+1.
		\end{split}
	\end{equation*}
	Together with \eqref{eqn12}, this yields
	\begin{equation*}
		|\mathcal U_i|-d+1
		>\theta_i-r.
	\end{equation*}
	Since $d=\delta_r^{\elp}(\mathcal C_F)\le|\mathcal U_i|$, the term involving $(\cdot)^+$ in \eqref{eqn7} equals $|\mathcal U_i|-d+1$.
	For any real numbers $a<b$, we have $\lfloor a\rfloor+1\le\lceil b\rceil$. Hence
	\begin{equation}\label{eqn13}
	\begin{split}
		g_i(r)
		&=\left\lfloor k_i-r-\left(|\mathcal U_i|-d+1\right)\right\rfloor+1\\
		&\le\left\lceil k_i-r-(\theta_i-r)\right\rceil\\
		&=\left\lceil k_i-\theta_i\right\rceil.
	\end{split}
	\end{equation}

	Second, let $\theta_i<r\le k_i$. Then
	$\delta_r^{\elp}(\mathcal C_F)>|\mathcal U_i|$, so the term involving $(\cdot)^+$ in \eqref{eqn7} is zero. Therefore,
	\begin{equation}\label{eqn14}
		g_i(r)=\lfloor k_i-r\rfloor+1 \le\left\lceil k_i-\theta_i\right\rceil.
	\end{equation}

	Third, \eqref{eqn12} gives
	\begin{equation}\label{eqn15}
		\left\lceil k_i-|\mathcal U_i|\right\rceil
		\le\left\lceil k_i-\theta_i\right\rceil.
	\end{equation}
	
	If $\theta_i<k_i$, equations \eqref{eqn13}, \eqref{eqn14}, and \eqref{eqn15} show that the right side of \eqref{eqn7} equals $\lceil k_i-\theta_i\rceil$ because $r$ may be chosen arbitrarily close to $\theta_i$ from above. If $\theta_i=k_i$, all positive entropy levels lie in the range $0<r\le\theta_i$, \eqref{eqn13} gives $g_i(r)\le0$, and \eqref{eqn15} gives $\lceil k_i-|\mathcal U_i|\rceil\le0$. Thus the zero term in \eqref{eqn7} is maximal, again with value
	\begin{equation*}
		0=\left\lceil k_i-\theta_i\right\rceil.
	\end{equation*}
	This proves the claimed equality and the resulting read bound.
\end{IEEEproof}

Theorem \ref{thm1} evaluates the full optimization in Lemma \ref{lem10}. The quantity $h_{|\mathcal U_i|}(\mathcal C_F)$ is the maximum residual entropy that can be supported on a set of $|\mathcal U_i|$ final coordinates when the remaining coordinates are known. Thus, $\theta_i$ is this quantity capped at $k_i$, and the proof accounts for entropy levels both below and above this threshold.

The following direct consequence records a simple necessary condition for a nontrivial convertible code.

\begin{corollary}\label{cor2}
	Let $(\mathcal C_{\mathbf I},\mathcal C_F,\sigma)$ be a nontrivial convertible code. Then, there exists $i\in[t]$ such that
	\begin{equation}\label{eqn16}
		|\mathcal U_i|\ge d(\mathcal C_F).
	\end{equation}
\end{corollary}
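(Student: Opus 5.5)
We argue by contradiction. Suppose the conversion is nontrivial, yet $|\mathcal U_i|\le d(\mathcal C_F)-1$ holds for every $i\in[t]$. The plan is to show that then each $|\mathcal R_i|\ge\lceil k_i\rceil\ge k_i$. Summing over $i$ would give $\sum_i|\mathcal R_i|\ge\sum_i k_i=k_F$, which contradicts Definition \ref{def5}. Here $\sum_i k_i=k_F$ because $\sigma$ is bijective, so $M_F=\prod_iM_{I_i}$.

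The key step uses Theorem \ref{thm1}, which gives
\begin{equation*}
	|\mathcal R_i|\ge\max\{0,\lceil k_i-h_{|\mathcal U_i|}(\mathcal C_F)\rceil\}.
\end{equation*}
It therefore suffices to show $h_{|\mathcal U_i|}(\mathcal C_F)=0$ whenever $|\mathcal U_i|\le d(\mathcal C_F)-1$. This is exactly the first observation in the proof of Lemma \ref{lem4}. If $|J|<d(\mathcal C_F)$, two codewords that agree on $J^c$ differ in at most $|J|<d(\mathcal C_F)$ positions, so they coincide. Hence $X_J$ is determined by $X_{J^c}$, and $H(X_J\mid X_{J^c})=0$. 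Taking the maximum over $|J|=|\mathcal U_i|$ gives $h_{|\mathcal U_i|}(\mathcal C_F)=0$. Consequently $\theta_i=0$ and $|\mathcal R_i|\ge\lceil k_i\rceil$. The same conclusion also follows from Lemma \ref{lem4} directly: $\delta_r^{\elp}(\mathcal C_F)\ge d(\mathcal C_F)>|\mathcal U_i|$ for every $r>0$, so the supremum defining $\theta_i$ is attained only at $r=0$.

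Summing over $i$ gives $\sum_i|\mathcal R_i|\ge\sum_i\lceil k_i\rceil\ge\sum_i k_i=k_F$, which is the desired contradiction with nontriviality. Hence some $i$ satisfies $|\mathcal U_i|\ge d(\mathcal C_F)$, which is \eqref{eqn16}. The only delicate point is the edge case $d(\mathcal C_F)$ undefined, e.g.\ $M_F=1$. This is excluded because $M_{I_i}\ge2$ forces $M_F\ge2$. The ceiling also costs nothing, since we only need $\lceil k_i\rceil\ge k_i$. I expect no substantive obstacle beyond correctly invoking $\theta_i=\min\{k_i,h_{|\mathcal U_i|}(\mathcal C_F)\}$ from \eqref{eqn11}.
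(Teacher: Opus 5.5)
Your proof is correct and is essentially the paper's argument. You assume $|\mathcal U_i|<d(\mathcal C_F)$ for all $i$, obtain $\theta_i=0$, apply Theorem~\ref{thm1} to get $|\mathcal R_i|\ge\lceil k_i\rceil$, and sum to contradict Definition~\ref{def5}. The only difference is that your main route to $\theta_i=0$ goes through $h_{|\mathcal U_i|}(\mathcal C_F)=0$ (the first step in the proof of Lemma~\ref{lem4}), whereas the paper invokes Lemma~\ref{lem4} directly, exactly as in your alternative remark.
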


\begin{IEEEproof}
	Suppose that $|\mathcal U_i|<d(\mathcal C_F)$ for every $i\in[t]$. Lemma \ref{lem4} gives
	\begin{equation*}
		\delta_r^{\elp}(\mathcal C_F)\ge d(\mathcal C_F) + \lceil r \rceil-1 > |\mathcal U_i|
	\end{equation*}
	for every $0<r\le k_i$. Hence $\theta_i=0$, and Theorem \ref{thm1} gives $|\mathcal R_i|\ge\lceil k_i\rceil$ for $i\in[t]$. Therefore,
	\begin{equation}
		\sum_{i=1}^{t}|\mathcal R_i| \ge \sum_{i=1}^{t} \lceil k_i\rceil \ge k_F,
	\end{equation}
	which contradicts the nontriviality assumption in Definition \ref{def5}.
\end{IEEEproof}


\section{Comparison with the Linear Theory}\label{sec5}

We now specialize the preceding bounds to linear merge conversions and compare them with \cite{gruica2026linear}. This specialization recovers the general bounds on unchanged and read symbols therein. It also makes clear what changes in the nonlinear setting: entropy levels may be real, rounding becomes necessary, and the proofs cannot rely on linear subcodes or duality. Table \ref{tab1} summarizes the comparison.

\begin{proposition}\label{prop1}
	Let $\mathcal C\subseteq\mathbb F_q^n$ be a linear $[n,k]_q$ code, and let $d_j(\mathcal C)=\min\{|\supp(\mathcal D)| : \mathcal D \subseteq \mathcal C, \dim(\mathcal D)=j\}$ denote its $j$th GHW. Then, for every real $\rho\in(0,k]$,
	\begin{equation}\label{eqn17}
		\delta_\rho^{\elp}(\mathcal C) =d_{\lceil\rho\rceil}(\mathcal C).
	\end{equation}
\end{proposition}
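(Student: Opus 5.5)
For a linear code, $X$ is uniform on a subspace, so every conditional entropy is a dimension. For $J\subseteq[n]$, the conditional distribution of $X_J$ given $X_{J^c}=\mathbf b$ is uniform on a translate of the projection onto $J$ of the shortened subcode
\begin{equation*}
	\mathcal C(J)=\{\mathbf c\in\mathcal C:\mathbf c|_{J^c}=\mathbf 0\}.
\end{equation*}
This projection is injective on $\mathcal C(J)$. Hence $H(X_J|X_{J^c})=\dim\mathcal C(J)$, which is an integer. Taking the maximum over $|J|=i$ gives
\begin{equation*}
	h_i(\mathcal C)=\max_{|J|=i}\dim\mathcal C(J),
\end{equation*}
so the ELP of a linear code takes only integer values. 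I would prove this first, by writing $H(X_J|X_{J^c})=H(X)-H(X_{J^c})=k-\operatorname{rank}(G_{J^c})$ and identifying $k-\operatorname{rank}(G_{J^c})$ with $\dim\ker$ of the projection onto $J^c$, which is $\dim\mathcal C(J)$.

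Next I would relate this to the GHWs through the standard duality between supports and shortened subcodes. A subcode $\mathcal D$ of dimension $j$ satisfies $\mathcal D\subseteq\mathcal C(\supp\mathcal D)$. Conversely, if $\dim\mathcal C(J)\ge j$, then $\mathcal C(J)$ contains a $j$-dimensional subcode whose support lies in $J$. This gives the key equivalence: for each integer $j\ge1$,
\begin{equation*}
	h_i(\mathcal C)\ge j \iff d_j(\mathcal C)\le i,
\end{equation*}
for $j\le k$. In the backward direction, take $J\supseteq\supp\mathcal D$ with $|J|=i$, padding with extra coordinates since $d_j\le i\le n$.

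Now fix a real $\rho\in(0,k]$. Because $h_i$ is an integer, $h_i\ge\rho$ holds if and only if $h_i\ge\lceil\rho\rceil$. Therefore
\begin{equation*}
	\delta^{\elp}_\rho(\mathcal C)=\min\{i\in[n]: h_i\ge\lceil\rho\rceil\}=\min\{i: d_{\lceil\rho\rceil}\le i\}=d_{\lceil\rho\rceil}(\mathcal C).
\end{equation*}
This uses $1\le\lceil\rho\rceil\le k$, so $d_{\lceil\rho\rceil}$ is defined and lies in $[n]$.

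I expect the main obstacle to be the careful entropy computation in the first step, namely justifying $H(X_J\mid X_{J^c})=\dim\mathcal C(J)$ and the uniformity of the cosets. The rest is bookkeeping. A minor point also needs checking: the minimum in Definition~\ref{def4} ranges over $[n]$, and $d_j\ge1$, so no boundary issue arises.
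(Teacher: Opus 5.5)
Your proposal is correct and follows essentially the same route as the paper: you show $H(X_J\mid X_{J^c})=\dim\mathcal C(J)$, so that $h_i(\mathcal C)=\max_{|J|=i}\dim\mathcal C(J)$ is integer-valued; you establish $h_i(\mathcal C)\ge j\iff d_j(\mathcal C)\le i$ via shortened subcodes and padding of supports; and you then replace $\rho$ by $\lceil\rho\rceil$. The only cosmetic difference is that you get the entropy identity from $H(X)-H(X_{J^c})=k-\operatorname{rank}(G_{J^c})$ and rank--nullity, whereas the paper argues directly that each fiber $\{\mathbf c\in\mathcal C:\mathbf c|_{J^c}=\mathbf y\}$ is a coset of $\mathcal C(J)$ carrying the uniform conditional distribution.
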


\begin{IEEEproof}
	For $J\subseteq[n]$, define the shortened subcode supported in $J$ by
	\begin{equation*}
		\mathcal C(J)
		= \{\mathbf c\in\mathcal C: \mathbf c|_{J^c}=\mathbf0\} \subseteq \mathcal C.
	\end{equation*}
	Fix any $\mathbf y\in\mathcal C|_{J^c}$ and choose $\mathbf c_0\in\mathcal C$ such that $\mathbf c_0|_{J^c}=\mathbf y$. Since $\mathcal C$ is linear, the set of codewords whose projection onto $J^c$ equals $\mathbf y$ is
	\begin{equation*}
		\left\{ \mathbf c\in\mathcal C: \mathbf c|_{J^c}=\mathbf y \right\} = \mathbf c_0+\mathcal C(J).
	\end{equation*}
	For every $\mathbf c\in\mathbf c_0+\mathcal C(J)$, uniformity on $\mathcal C$ gives
	\begin{equation*}
		\Pr\!\left(X^{(\mathcal C)}=\mathbf c
		\middle|X_{J^c}^{(\mathcal C)}=\mathbf y\right)
		=\frac{1/|\mathcal C|}{|\mathcal C(J)|/|\mathcal C|}
		=\frac{1}{|\mathcal C(J)|}.
	\end{equation*}
	Therefore, conditioned on $X_{J^c}^{(\mathcal C)}=\mathbf y$, the random vector $X_J^{(\mathcal C)}$ is uniformly distributed over
	\begin{equation*}
		(\mathbf c_0+\mathcal C(J))|_J,
	\end{equation*}
	which has $q^{\dim\mathcal C(J)}$ elements. Hence
	\begin{equation*}
		H\left( X_J^{(\mathcal C)} \middle| X_{J^c}^{(\mathcal C)}=\mathbf y \right)
		= \dim\mathcal C(J).
	\end{equation*}
	Since this value is the same for every $\mathbf y\in\mathcal C|_{J^c}$, we obtain
	\begin{equation*}
		H\left( X_J^{(\mathcal C)} \middle| X_{J^c}^{(\mathcal C)} \right)
		= \dim\mathcal C(J).
	\end{equation*}
	It follows that
	\begin{equation}\label{eqn18}
		h_i(\mathcal C)
		=\max_{\substack{J\subseteq[n]\\|J|=i}}\dim\mathcal C(J).
	\end{equation}
	
	For every integer $j\in[k]$, \eqref{eqn18} shows that $h_i(\mathcal C)\ge j$ if and only if there exists $J\subseteq[n], |J|=i$ such that
	\begin{equation*}
		\dim\mathcal C(J)\ge j.
	\end{equation*}
	
	We claim that there exists $J\subseteq[n], |J|=i$ such that $\dim\mathcal C(J)\ge j$ if and only if $d_j(\mathcal C)\le i$. 
	
	If $\dim\mathcal C(J)\ge j$, we can choose a $j$-dimensional subcode $\mathcal D\subseteq\mathcal C(J)$. Since every codeword in $\mathcal C(J)$ is zero outside $J$, we have
	\begin{equation*}
    	\supp(\mathcal D)
    	\subseteq \supp(\mathcal C(J))
    	\subseteq J.
	\end{equation*}
	Therefore,
	\begin{align*}
    	d_j(\mathcal C)
    	&=\min_{\substack{\mathcal E\subseteq\mathcal C\\ \dim(\mathcal E)=j}} |\supp(\mathcal E)|\\
    	&\le |\supp(\mathcal D)|\\
    	&\le |\supp(\mathcal C(J))|\\
    	&\le |J|\\
    	&=i.
	\end{align*}
	Therefore, if $\dim\mathcal C(J)\ge j$ then $d_j(\mathcal C)\le i$.
	
	Conversely, suppose that $d_j(\mathcal C)\le i$. Then there exists a $j$-dimensional subcode $\mathcal D\subseteq\mathcal C$ such that
	\begin{equation*}
    	|\supp(\mathcal D)|=d_j(\mathcal C)\le i.
	\end{equation*}
	Choose a set $J\subseteq[n]$ with $|J|=i$ and $\supp(\mathcal D)\subseteq J$. Every codeword in $\mathcal D$ is zero outside $J$, and hence
	\begin{equation*}
    	\mathcal D\subseteq\mathcal C(J).
	\end{equation*}
	It follows that
	\begin{equation*}
    	\dim\mathcal C(J)\ge\dim\mathcal D=j.
	\end{equation*}
	Therefore, if $d_j(\mathcal C)\le i$ then $\dim\mathcal C(J)\ge j$.
	
	The preceding equivalences imply
	\begin{equation*}
		h_i(\mathcal C)\ge j \quad\Longleftrightarrow\quad d_j(\mathcal C)\le i.
	\end{equation*}
	Hence,
	\begin{equation}\label{eqn19}
		\delta_j^{\elp}(\mathcal C)
		=\min\{i\in[n]:h_i(\mathcal C)\ge j\}
		=d_j(\mathcal C).
	\end{equation}
	Finally, $h_i(\mathcal C)$ is integer valued for linear codes. Thus, for every $\rho\in(0,k]$,
	\begin{equation*}
		h_i(\mathcal C)\ge\rho
		\quad\Longleftrightarrow\quad
		h_i(\mathcal C)\ge\lceil\rho\rceil.
	\end{equation*}
	Combining this equivalence with \eqref{eqn19} yields $\delta_\rho^{\elp}(\mathcal C) = d_{\lceil\rho\rceil}(\mathcal C)$.
\end{IEEEproof}

Combining Proposition \ref{prop1} with Lemma \ref{lem5} gives, for integers $1\le j_1\le j_2\le k$,
\begin{equation*}
	d_{j_2}(\mathcal C)
	\ge d_{j_1}(\mathcal C)+(j_2-j_1),
\end{equation*}
which is the standard strict growth property of GHW for linear codes.

\subsection{Bounds on Unchanged Symbols}

Suppose all initial and final codes are linear and $\sigma$ is linear. Then the quantities defined in Section~\ref{sec2} satisfy $k_i=\dim(\mathcal C_{I_i})$ and $k_{-i}=\sum_{j\ne i}k_j$. Recall 
\begin{equation*}
	\mathcal C_{i,\mathbf{0}}
	=
	\sigma\left(
	\prod_{j=1}^{i-1}\mathcal C_{I_j}
	\times\{\mathbf0\}\times
	\prod_{j=i+1}^{t}\mathcal C_{I_j}
	\right)
	\subseteq \mathcal C_F.
\end{equation*}
For any fixed $\mathbf v\in\mathcal C_{I_i}$, the code in Lemma \ref{lem6} obtained by fixing the $i$th input is a coset of $\mathcal C_{i,\mathbf{0}}$. Translation preserves its ELP. Proposition \ref{prop1} therefore turns Lemmas \ref{lem6} and \ref{lem7} into
\begin{equation}\label{eqn20}
	|\mathcal U_i|
	\le
	\min\left\{
	n_{I_i},
	n_F-d_{k_{-i}}(\mathcal C_{i,\mathbf{0}}),
	n_F-d_{k_{-i}}(\mathcal C_F)
	\right\},
\end{equation}
which is \cite[Proposition~3]{gruica2026linear}.

Corollary \ref{cor1} reduces to \cite[Corollary~4]{gruica2026linear}, and Lemma \ref{lem9} reduces, after using $\sum_i k_i=k_F$, to
\begin{equation*}
	|\mathcal U|
	\le
	\frac{t n_F-k_F-t(d(\mathcal C_F)-1)}{t-1},
\end{equation*}
which is \cite[Proposition~7]{gruica2026linear}. The dual-distance conclusions in \cite[Corollaries~5 and~6]{gruica2026linear} also follow from \eqref{eqn20} and Wei duality. We do not state them as nonlinear results because an arbitrary nonlinear code has no canonical dual code or corresponding Wei duality.

The nonlinear statements contain information absent from these linear formulas. The levels $k_{-i}$ may be nonintegral, the sets $\mathcal C_{i,\mathbf v}$ need not be cosets or have the same ELP, and Lemma \ref{lem9} retains the separate terms $\lceil k_i\rceil$. In addition, Lemma \ref{lem8} gives joint restrictions for every selected collection of initial codes, rather than only an individual or total bound.

\subsection{Bounds on Read Symbols}

For a linear conversion, Proposition \ref{prop1} makes $\delta_\rho^{\elp}(\mathcal C_F)$ constant on each interval $\rho\in(j-1,j]$. Hence Lemma \ref{lem10} reduces to
\begin{equation}\label{eqn21}
	|\mathcal R_i|\ge\max\Bigl\{0,\max_{1\le j\le k_i}\bigl\{k_i-j+1-\bigl(|\mathcal U_i|-d_j(\mathcal C_F)+1\bigr)^+\bigr\}\Bigr\}.
\end{equation}
The inner maximum is the expression in \cite[Proposition~9]{gruica2026linear}. Let
\begin{equation*}
	j_i^*=\max\{j\in[k_i]:d_j(\mathcal C_F)\le|\mathcal U_i|\},
\end{equation*}
with $j_i^*=0$ if the set is empty. Equation \eqref{eqn17} gives $\theta_i=j_i^*$. The exact evaluation in Theorem \ref{thm1} then becomes
\begin{equation}\label{eqn22}
	|\mathcal R_i|\ge k_i-j_i^*,
\end{equation}
which is \cite[Theorem~10]{gruica2026linear}. In particular, the two ranges considered in the proof of Theorem \ref{thm1} correspond to the two index ranges in the proof of that theorem, with real entropy levels used here.

Finally, \cite[Corollary~11]{gruica2026linear} is the $j=1$ consequence of \eqref{eqn21}, using $d_1(\mathcal C_F)=d(\mathcal C_F)$. Combining it with the linear specialization of Corollary \ref{cor1} recovers \cite[Corollary~12]{gruica2026linear}. Thus all general conclusions about read symbols in \cite{gruica2026linear} are retained. For nonlinear codes, $\delta_\rho^{\elp}$ may change at nonintegral entropy levels, so neither the integer index $j$ nor $d_j$ alone describes the full optimization.

The Reed--Muller conversion in \cite[Theorem~14]{gruica2026linear} attains the unchanged-symbol bounds for both initial codes. In the parameter range analyzed in \cite[Section~IV-C]{gruica2026linear}, its read cost from the second initial code attains \cite[Theorem~10]{gruica2026linear} for $m\ge5$, whereas the first read cost satisfies only $|\mathcal R_1|\le k_1$ while the corresponding lower bound is $|\mathcal R_1|\ge m-1$. Thus that construction does not establish read optimality for every initial code. In contrast, Construction \ref{con1} below attains both \eqref{eqn20} and \eqref{eqn22} for every initial code.

\section{Constructions Attaining the Bounds}\label{sec6}

In this section, we first introduce a construction of convertible codes based on the well-known Hamming codes, which can attain the bounds given in both \eqref{eqn20} and \eqref{eqn22} for linear codes. We then propose a construction of convertible codes based on quasigroups, which can attain the bounds given in Lemma \ref{lem7} and Theorem \ref{thm2} for general nonlinear codes.

\subsection{Hamming Codes}\label{sec7}

We recall some properties of Hamming codes. Then, we establish the cardinality condition for converting copies of $\mathcal H_q(r)$ to $\mathcal H_q(m)$, and derive specific upper and lower bounds on unchanged and read symbols in Theorem \ref{thm2}.

Let $q$ be a prime power and $r\ge2$. The projective space $\operatorname{PG}(r-1,q)$ is the set of one-dimensional subspaces of $\mathbb F_q^r$. For every nonzero $\mathbf v\in\mathbb F_q^r$, denote the corresponding projective point by
\begin{equation*}
	[\mathbf v]=\{\lambda\mathbf v:\lambda\in\mathbb F_q^*\}.
\end{equation*}
Thus, $[\mathbf v]=[\mathbf w]$ if and only if $\mathbf w=\lambda\mathbf v$ for some $\lambda\in\mathbb F_q^*$. Choose one nonzero representative $\mathbf h_j$ from each projective point and form
\begin{equation}\label{eqn23}
	\mathbf H_r=\bigl(\mathbf h_1\mid\mathbf h_2\mid\cdots \mid\mathbf h_{N_r}\bigr),
\end{equation}
where $\{[\mathbf h_j]:j\in[N_r]\}=\operatorname{PG}(r-1,q)$.
The $[N_r,K_r,3]_q$ Hamming code is defined as
\begin{equation}\label{eqn24}
	\mathcal H_q(r)=\{\mathbf c\in\mathbb F_q^{N_r}:\mathbf H_r\mathbf c^{\top}=\mathbf0\}.
\end{equation}
Distinct columns of $\mathbf H_r$ are not proportional, so any two columns are linearly independent. On the other hand, there exist three linearly dependent columns, for example representatives of $[\mathbf e_1]$, $[\mathbf e_2]$, and $[\mathbf e_1+\mathbf e_2]$. Hence $d(\mathcal H_q(r))=3$. Its parameters are
\begin{equation}\label{eqn25}
	N_r=\frac{q^r-1}{q-1},\ K_r=N_r-r,
\end{equation}
and $|\operatorname{PG}(r-1,q)|=N_r$. For convenience, define $N_0=0$ and $N_1=1$. If $r$ linearly independent columns are placed last and transformed by row operations into the identity matrix $\mathbf I_r$, then an equivalent code has the systematic parity-check matrix
\begin{equation}\label{eqn26}
	\mathbf H_r=(\mathbf P\mid \mathbf I_r).
\end{equation}

\begin{theorem}\label{thm2}
	Let $2\le r<m$. An information-preserving merge conversion
	\begin{equation}\label{eqn27}
		\prod_{i=1}^{t}\mathcal H_q(r)\longrightarrow\mathcal H_q(m)
	\end{equation}
	exists if and only if
	\begin{equation}\label{eqn28}
		K_r\mid K_m,
		\quad t=\frac{K_m}{K_r},
	\end{equation}
where $\prod_{i=1}^{t}\mathcal H_q(r)$ is the Cartesian product of $t$ Hamming codes $\mathcal H_q(r)$. For every conversion in \eqref{eqn27},
	\begin{equation}\label{eqn29}
		|\mathcal U_i|\le K_r.
	\end{equation}
For every $i\in[t]$, the read cost satisfies
\begin{equation}\label{eqn30}
		|\mathcal R_i|
		\ge K_r-|\mathcal U_i|+\nu_m(|\mathcal U_i|),
\end{equation}
where $\nu_m(a) = \left|\left\{\ell\in\{0,\ldots,m-1\}: 1+N_\ell\le a\right\} \right|$. In particular, if $|\mathcal U_i|=K_r$, then
\begin{equation*}
    |\mathcal R_i| \ge
    \begin{cases}
        r, & (q,r)\ne(2,2),\\
        1, & (q,r)=(2,2).
    \end{cases}
\end{equation*}
\end{theorem}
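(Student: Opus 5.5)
The plan has three parts: the cardinality condition by counting, the bound \eqref{eqn29} from Lemma \ref{lem7} (in its linear form \eqref{eqn20}), and the bound \eqref{eqn30} from Theorem \ref{thm1}. The last two both use Proposition \ref{prop1} and an explicit description of the generalized Hamming weights of $\mathcal H_q(m)$.

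\emph{Cardinality condition.} An information-preserving conversion requires $M_F=\prod_i M_{I_i}$, that is, $q^{K_m}=q^{tK_r}$. Hence $t=K_m/K_r$, and in particular $K_r\mid K_m$, which gives necessity. For sufficiency, when $tK_r=K_m$ the sets $\prod_{i=1}^t\mathcal H_q(r)$ and $\mathcal H_q(m)$ have the same size. Any bijection $\sigma$ between them is then an information-preserving merge conversion, for instance the trivial one with $\mathcal U=\emptyset$ and $\mathcal R$ equal to all coordinates. So existence is purely a counting statement, and all the content lies in the bounds.

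\emph{GHW hierarchy of $\mathcal H_q(m)$.} This is the step I expect to be the main obstacle. The dual of $\mathcal H_q(m)$ is the simplex code, which is a constant-weight code. Each $j$-dimensional subcode of the simplex code is supported exactly on the points of $\operatorname{PG}(m-1,q)$ outside some codimension-$j$ subspace. Hence its GHWs are $d_j=N_m-N_{m-j}$ for $j\in[m]$. Wei duality then says the GHWs $\{d_j(\mathcal H_q(m))\}_{j=1}^{K_m}$ are exactly the integers in $[N_m]$ other than the $m$ values
\begin{equation*}
N_m+1-(N_m-N_{m-j})=N_{m-j}+1,\qquad \ell=m-j\in\{0,\ldots,m-1\}.
\end{equation*}
Combining this with \eqref{eqn18} and \eqref{eqn19}, $h_a(\mathcal H_q(m))$ equals the number of GHWs that are at most $a$, namely
\begin{equation*}
h_a(\mathcal H_q(m))=a-\nu_m(a).
\end{equation*}

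\emph{Unchanged symbols.} By \eqref{eqn20}, $|\mathcal U_i|\le N_m-d_{K_m-K_r}(\mathcal H_q(m))$. The largest excluded value is $N_{m-1}+1$, so every integer in $(N_{m-1}+1,N_m]$ is a GHW, and there are $q^{m-1}-1$ of them. Since $r\le m-1$, we have $K_r<N_r\le N_{m-1}\le q^{m-1}-1$. Therefore the top $K_r+1$ GHWs are the consecutive integers $N_m-K_r,\ldots,N_m$, so $d_{K_m-K_r}(\mathcal H_q(m))=N_m-K_r$. This gives $|\mathcal U_i|\le K_r$, which is \eqref{eqn29}.

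\emph{Read symbols.} Set $u=|\mathcal U_i|\le K_r$. Then $h_u(\mathcal H_q(m))\le u\le K_r$, so Theorem \ref{thm1} gives
\begin{equation*}
|\mathcal R_i|\ge K_r-h_u(\mathcal H_q(m))=K_r-u+\nu_m(u),
\end{equation*}
which is \eqref{eqn30}. For $u=K_r$ we compute $\nu_m(K_r)$ directly.
\begin{itemize}
\item For $\ell\ge r$ we have $N_\ell+1>K_r$, so these $\ell$ are not counted.
\item For $\ell\le r-1$ the binding case is $\ell=r-1$, and the condition $N_{r-1}+1\le N_r-r$ is equivalent to $q^{r-1}\ge r+1$.
\item This inequality holds for all $q\ge2$, $r\ge2$ except $(q,r)=(2,2)$, giving $\nu_m(K_r)=r$ and hence $|\mathcal R_i|\ge r$.
\item When $(q,r)=(2,2)$, $K_2=1$, and only $\ell=0$ satisfies $N_\ell+1\le1$. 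So $\nu_m(1)=1$ and $|\mathcal R_i|\ge1$.
\end{itemize}
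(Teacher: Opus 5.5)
Your proposal is correct and follows essentially the same route as the paper: counting for the existence condition, the simplex-code GHWs plus Wei duality to obtain $\{d_j(\mathcal H_q(m))\}=[N_m]\setminus\{1+N_\ell:0\le\ell\le m-1\}$, Lemma \ref{lem7} with Proposition \ref{prop1} for \eqref{eqn29}, and Theorem \ref{thm1} with a direct evaluation of $\nu_m(K_r)$ for \eqref{eqn30}. The differences are cosmetic. You derive the simplex GHWs geometrically instead of citing them, and you locate $d_{K_m-K_r}$ through the top block of consecutive GHWs rather than by counting those below $N_m-K_r$. You also phrase the read bound through $h_u=u-\nu_m(u)$ rather than $j_i^*$, and you make the sufficiency direction (any bijection) explicit.
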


\begin{IEEEproof}
	By \eqref{eqn25}, we have $|\prod_{i=1}^{t}\mathcal H_q(r)|=q^{tK_r}$ and $|\mathcal H_q(m)|=q^{K_m}$. Hence, $|\mathcal H_q(r)^t|=|\mathcal H_q(m)|$ if and only if
	\begin{equation*}
		tK_r=K_m,
	\end{equation*}
	which proves \eqref{eqn28} since $t$ is a positive integer.

	We next establish the bounds on unchanged and read symbols. The dual of $\mathcal H_q(m)$ is the $q$-ary Simplex code, and \cite[Example 7.10.13]{huffman2010fundamentals} shows that
	\begin{equation*}
		d_j(\mathcal H_q(m)^\perp)
		=\frac{q^m-q^{m-j}}{q-1},
		\quad j\in[m].
	\end{equation*}
	Then,
	\begin{equation*}
		N_m+1-d_j(\mathcal H_q(m)^\perp) = 1+N_{m-j},
		\quad j\in[m].
	\end{equation*}
	Wei duality \cite[Theorem 3]{wei1991generalized} gives
	\begin{equation}\label{eqn31}
		\{d_j(\mathcal H_q(m)):j\in[K_m]\}
		=[N_m]\setminus \{1+N_\ell:0\le\ell\le m-1\}.
	\end{equation}
	Let $x=N_m-K_r$. Since $N_m=N_{m-1}+q^{m-1}$ and $r<m$, we have
	\begin{equation*}
		x-(1+N_{m-1})
		=q^{m-1}-K_r-1
		>0.
	\end{equation*}
	Hence
	\begin{equation*}
		1+N_\ell < x,\quad 0\le \ell\le m-1
	\end{equation*}
	and $$x\in \{d_j(\mathcal H_q(m)):j\in[K_m]\}.$$ The number of GHWs not exceeding $x$ is
	\begin{align*}
		&\left| [x]\setminus \{1+N_\ell:0\le\ell\le m-1\} \right|\\
		&=x-m \\
		&=N_m-K_r-m \\
		&=K_m-K_r.
	\end{align*}
	Therefore,
	\begin{equation*}
		d_{K_m-K_r}(\mathcal H_q(m))
		=x
		=N_m-K_r.
	\end{equation*}
	Lemma \ref{lem7} and Proposition \ref{prop1} therefore yield
	\begin{equation*}
		|\mathcal U_i|
		\le N_m-d_{K_m-K_r}(\mathcal H_q(m))
		=K_r.
	\end{equation*}
	Moreover, \eqref{eqn31} also gives
	\begin{align*}
		&\max\{j:d_j(\mathcal H_q(m))\le |\mathcal U_i|\} \\
		&= |\mathcal U_i|-\left| \{\ell : 1+N_\ell\le|\mathcal U_i| :0\le\ell\le m-1\} \right| \\
		&=|\mathcal U_i|-\nu_m(|\mathcal U_i|).
	\end{align*}
	Proposition \ref{prop1} and Theorem \ref{thm1} now give
	\begin{equation*}
		|\mathcal R_i|
		\ge K_r-|\mathcal U_i|+\nu_m(|\mathcal U_i|).
	\end{equation*}
	 
	 Finally, setting $|\mathcal U_i|=K_r$ gives
	\begin{equation*}
    	|\mathcal R_i|\ge \nu_m(K_r).
	\end{equation*}
	By the definition of $\nu_m$,
	\begin{align*}
    	\nu_m(K_r)
    	&= \left| \left\{ \ell\in\{0,\ldots,m-1\}: 1+N_\ell\le K_r \right\} \right|\\
    	&= \begin{cases}
        		r, & (q,r)\ne(2,2),\\
        		1, & (q,r)=(2,2).
    		\end{cases}
	\end{align*}
	Indeed, when $(q,r)\ne(2,2)$,
	\begin{equation*}
    	1+N_{r-1}\le N_{r-1}+q^{r-1}-r = N_r-r = K_r<1+N_r,
	\end{equation*}
	so precisely $\ell=0,\ldots,r-1$ are counted, where equality holds for $(q,r)=(2,3)$ and $(q,r)=(3,2)$. When $(q,r)=(2,2)$, we have $K_2=N_2-2=1$, so only $\ell=0$ is counted.
	
	We complete the proof.
\end{IEEEproof}

To apply Theorem \ref{thm2}, the following lemma supplies the common matrix structure used in Subsection \ref{sec8}.

\begin{lemma}\label{lem11}
	Let $q$ be a prime power and $m\ge3$. There exist
	\begin{equation*}
		t=\frac{K_m}{q-1}
	\end{equation*}
	matrices $\mathbf P_i\in\mathbb F_q^{m\times(q-1)}$, $\mathbf B_i\in\mathbb F_q^{m\times2}$, and $\mathbf A_i\in\mathbb F_q^{2\times(q-1)}$, $i\in[t]$, such that
	\begin{equation}\label{eqn32}
		\mathbf P=(\mathbf P_1\mid\cdots\mid\mathbf P_t),
		\quad \mathbf P_i=\mathbf B_i\mathbf A_i.
	\end{equation}
	Moreover,
	\begin{equation}\label{eqn33}
		\mathbf H_F=(\mathbf P_1\mid\cdots\mid\mathbf P_t\mid \mathbf I_m)
	\end{equation}
	is a parity-check matrix of $\mathcal H_q(m)$, while
	\begin{equation}\label{eqn34}
		\mathbf H_{I_i}=(\mathbf A_i\mid \mathbf I_2)
	\end{equation}
	is a parity-check matrix of $\mathcal H_q(2)$ for every $i\in[t]$.
\end{lemma}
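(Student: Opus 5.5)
The plan is to construct the matrices explicitly by partitioning the points of $\operatorname{PG}(m-1,q)$ that are not coordinate points into $t$ classes. Each class will be a projective line through two points $[\mathbf b_1],[\mathbf b_2]$ with those two points removed.

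First I would fix $\mathbf A_i$ for every $i$ to be the same matrix, namely the $2\times(q-1)$ matrix whose columns are $(1,\lambda)^\top$ for $\lambda\in\mathbb F_q^*$. The $q+1$ projective points of $\operatorname{PG}(1,q)$ are $[\mathbf e_1]$, $[\mathbf e_2]$ and $[(1,\lambda)^\top]$ for $\lambda\ne0$. Hence the columns of $(\mathbf A_i\mid\mathbf I_2)$ contain exactly one representative of each point. By \eqref{eqn23}--\eqref{eqn24} this is a parity-check matrix of $\mathcal H_q(2)$, which settles \eqref{eqn34}.

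Next I would note that with $\mathbf B_i=(\mathbf b_{i,1}\mid\mathbf b_{i,2})$, the product $\mathbf P_i=\mathbf B_i\mathbf A_i$ has columns $\mathbf b_{i,1}+\lambda\mathbf b_{i,2}$ for $\lambda\in\mathbb F_q^*$. It therefore suffices to choose the pairs $(\mathbf b_{i,1},\mathbf b_{i,2})$ so that these $t(q-1)$ vectors represent every point of $\operatorname{PG}(m-1,q)\setminus\{[\mathbf e_1],\dots,[\mathbf e_m]\}$ exactly once. Then $\mathbf H_F$ in \eqref{eqn33} has one representative of each projective point, with $\mathbf I_m$ supplying the coordinate points. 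It is thus a parity-check matrix of $\mathcal H_q(m)$ of the systematic form \eqref{eqn26}, up to the choice and order of representatives, which the definition \eqref{eqn23} permits.

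For the partition, which is the main step, I would normalize each point $[\mathbf v]$ so that the first nonzero coordinate of $\mathbf v$ equals $1$. A non-coordinate point has support size at least $2$. Let $j$ be the largest index in its support and write $\mathbf v=\mathbf v'+\lambda\mathbf e_j$ with $\lambda\ne0$. Here $\mathbf v'$ is nonzero, normalized, and supported on indices smaller than $j$.

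I would take one class for each pair $(j,\mathbf v')$, namely $\{[\mathbf v'+\lambda\mathbf e_j]:\lambda\in\mathbb F_q^*\}$, with $\mathbf b_1=\mathbf v'$ and $\mathbf b_2=\mathbf e_j$. Three facts need checking.
\begin{itemize}
\item The $q-1$ vectors in a class are distinct normalized vectors, hence distinct points.
\item Every non-coordinate point lies in exactly one class, because $j$ and $\mathbf v'$ are determined by the normalized $\mathbf v$.
\item The vectors $\mathbf b_1$ and $\mathbf b_2$ are independent, since $\mathbf v'$ vanishes at coordinate $j$.
\end{itemize}

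Finally I would count. There are $N_m-m=K_m$ non-coordinate points and each class has $q-1$ of them, so the number of classes is $K_m/(q-1)=t$. This also confirms that $t$ is an integer, consistent with $N_m=\sum_{\ell<m}q^\ell\equiv m\pmod{q-1}$. Labelling the classes $1,\dots,t$ then gives \eqref{eqn32}.

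I expect the only delicate point to be ensuring the classes are genuinely disjoint and cover every point. The normalization by leading coordinate together with the choice of the last support index handles this. The hypothesis $m\ge3$ is needed only so that the conversion is nontrivial, i.e.\ $t\ge2$ and $r=2<m$ as required in Theorem \ref{thm2}.
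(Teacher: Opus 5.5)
Your proposal is correct and follows essentially the same route as the paper. You use the same common $\mathbf A_i$ and the same partition of the non-coordinate points, obtained by normalizing the leading coordinate and splitting off the last support index; your pairs $(j,\mathbf v')$ are exactly the paper's $(S,\boldsymbol\beta)$, with $\mathbf B_i=(\mathbf v'\mid\mathbf e_j)$. The only difference is cosmetic: you count the classes as $K_m/(q-1)$ by dividing, whereas the paper evaluates $\sum_{r=2}^m\binom mr(q-1)^{r-2}$ directly.
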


\begin{IEEEproof}
	Let $\mathbf e_1,\ldots,\mathbf e_m$ be the standard basis of $\mathbb F_q^m$. In \eqref{eqn23}, choose these vectors as the representatives of the last $m$ projective points. The remaining projective points form
	\begin{equation}\label{eqn35}
		\mathcal P=\operatorname{PG}(m-1,q)
		\setminus\{[\mathbf e_1],\ldots,[\mathbf e_m]\}.
	\end{equation}
	Every vector $\boldsymbol\gamma$ in $\mathbb{F}_q^m$ with $\supp(\boldsymbol\gamma)=\{j_1<\cdots<j_r\}\subseteq[m]$ has a unique expression
	\begin{equation*}
		\boldsymbol\gamma = \sum_{\ell=1}^{r} \alpha_{j_\ell} \mathbf e_{j_\ell}.
	\end{equation*}
	Therefore, every point $[\boldsymbol\gamma]$ in $\operatorname{PG}(m-1,q)$ has a unique expression
	\begin{equation*}
		[\boldsymbol\gamma] = \left[ \mathbf e_{j_1} + \sum_{\ell=2}^{r} \alpha_{j_1}^{-1}\alpha_{j_\ell} \mathbf e_{j_\ell} \right].
	\end{equation*}
	For convenience, we write $[\boldsymbol\gamma]$ in the form
	\begin{equation}\label{eqn36}
		[\boldsymbol\gamma] = \left[ \mathbf e_{j_1} + \sum_{\ell=2}^{r-1}\beta_\ell\mathbf e_{j_\ell} +\lambda \mathbf e_{j_r} \right],
	\end{equation}
	where $\beta_\ell = \alpha_{j_1}^{-1}\alpha_{j_\ell}$ for $\ell\in\{2,3,\dots,r-1\}$ and $\lambda=\alpha_{j_1}^{-1}\alpha_{j_r}$.

	For every
	\begin{equation*}
		S=\{j_1<\cdots<j_r\}\subseteq[m],\ r\ge2,
	\end{equation*}
	and
	\begin{equation*}
		\boldsymbol\beta=(\beta_2,\ldots,\beta_{r-1})\in(\mathbb F_q^*)^{r-2},
	\end{equation*}
	define
	\begin{equation}\label{eqn37}
		\mathbf v(S^-,\boldsymbol\beta)
		=\mathbf e_{j_1}+
		\sum_{\ell=2}^{r-1}\beta_\ell\mathbf e_{j_\ell}
	\end{equation}
	and
	\begin{equation}\label{eqn38}
		\mathcal P(S,\boldsymbol\beta)
		=\{[\mathbf v(S^-,\boldsymbol\beta)+\lambda\mathbf e_{j_r}]:
		\lambda\in\mathbb F_q^*\},
	\end{equation}
	where $S^-=S\setminus\{j_r\}$. Note that
	\begin{equation*}
		[\mathbf v(S^-,\boldsymbol\beta)+\lambda\mathbf e_{j_r}]\ne [\mathbf v(S'^-,\boldsymbol\beta')+\lambda' \mathbf e_{j_r'}] \
		\begin{cases}
			\mbox{if } S \ne S', \\
			\mbox{if } S = S', \boldsymbol\beta\ne \boldsymbol\beta', \\
			\mbox{if } S=S', \boldsymbol\beta= \boldsymbol\beta', \lambda\ne \lambda'.
		\end{cases}
	\end{equation*}
	Hence
	\begin{equation}\label{eqn39}
		\mathcal P
		=\bigsqcup_{\substack{S\subseteq[m]\\|S|\ge2}}
		\bigsqcup_{\boldsymbol\beta\in(\mathbb F_q^*)^{|S|-2}}
		\mathcal P(S,\boldsymbol\beta),
	\end{equation}
	and
	\begin{equation*}
		|\mathcal P(S,\boldsymbol\beta)|=q-1.
	\end{equation*}
	The number of disjoint sets in \eqref{eqn39} is
	\begin{align}
		\sum_{r=2}^m\binom mr(q-1)^{r-2}
		&=\frac{q^m-1-m(q-1)}{(q-1)^2}\nonumber\\
		&=\frac{K_m}{q-1}=t.\label{eqn40}
	\end{align}
	In particular, this calculation also proves that $t$ is an integer. 

	Label the disjoint sets in \eqref{eqn39} as $\mathcal P_1,\ldots,\mathcal P_t$, and suppose $\mathbb F_q^*=\{\lambda_1,\ldots,\lambda_{q-1}$\}. Define
	\begin{equation}\label{eqn41}
		\mathbf B_i=(\mathbf v(S^-,\boldsymbol\beta)\mid\mathbf e_{j_r}) \in \mathbb{F}_q^{m\times 2}.
	\end{equation}
	and
	\begin{equation}\label{eqn42}
		\mathbf A_i=
		\begin{pmatrix}
		1&1&\cdots&1\\
		\lambda_1&\lambda_2&\cdots&\lambda_{q-1}
		\end{pmatrix} \in \mathbb{F}_q^{2\times (q-1)}.
	\end{equation}
	Let $\mathbf P_i=\mathbf B_i \mathbf A_i$, where the $j$th column is
	\begin{equation*}
		\mathbf B_i \cdot (1,\lambda_j)^{\top}
		=\mathbf v(S^-,\boldsymbol\beta)+\lambda_j\mathbf e_{j_r}.
	\end{equation*}
	
	By \eqref{eqn39}, the columns of $\mathbf P_1,\ldots,\mathbf P_t$ represent all points in $\mathcal P$ exactly once. The columns of $\mathbf I_m$ represent the remaining points $[\mathbf e_1],\ldots,[\mathbf e_m]$. Hence the columns of $\mathbf H_F$ in \eqref{eqn33} represent all points of $\operatorname{PG}(m-1,q)$ exactly once, and \eqref{eqn23} shows that $\mathbf H_F$ is a parity-check matrix of $\mathcal H_q(m)$. Note that
	\begin{equation}\label{eqn43}
		\mathbf H_{I_i}=
		\left(
		\begin{array}{cccc|cc}
		1&1&\cdots&1&1&0\\
		\lambda_1&\lambda_2&\cdots&\lambda_{q-1}&0&1
		\end{array}
		\right).
	\end{equation}
	Its columns represent $[(1,\lambda)^{\top}]$, $\lambda\in\mathbb F_q^*$, together with $[(1,0)^{\top}]$ and $[(0,1)^{\top}]$. These are all points of $\operatorname{PG}(1,q)$, so $\mathbf H_{I_i}$ is a parity-check matrix of $\mathcal H_q(2)$.
\end{IEEEproof}

\begin{remark}
	In fact, the proof of Lemma \ref{lem11} admits a geometric interpretation of the partition, illustrated in Figure \ref{fig1}. Each projective line corresponds to an initial code. Each set $\mathcal P(S,\boldsymbol\beta)$ consists of the $q-1$ points on the line determined by $[\mathbf v(S^{-},\boldsymbol\beta)]$ and $[\mathbf e_{j_r}]$, excluding these two endpoints. The complete lines need not be disjoint. The retained sets form a partition because the normalized representative \eqref{eqn36} of every nonbasis projective point uniquely determines $S$, $\boldsymbol\beta$.
\end{remark}

\begin{figure}[t]
\centering
\begin{tikzpicture}
  \draw[thick] (-1.29,3.2) -- (0.94,0);
  \draw[thick] (-1.6,2.11) -- (1.82,4.8);
  \draw[red, fill=red] (-0.9,2.64) circle (0.15cm);
  \draw[thick] (0.2,0.27) -- (5.05,0.45);
  \draw[red, fill=red] (0.75,0.32) circle (0.15cm);
  \draw[thick] (0.82,5) -- (4.91,0.17);
  \draw[red, fill=red] (1.3,4.45) circle (0.15cm);
  \draw[red, fill=red] (4.7,0.47) circle (0.15cm);

  \node[draw=none, text=red, minimum width=20pt] at (1.38,5.16) {$[\mathbf{e}_{j_{r}}]$};
  \node[draw=none, text=red, minimum width=20pt] at (0.45,-0.18) {$[\mathbf{e}_{j_{r}'}]$};
  \node[draw=none, text=red] at (-2.23,2.73) {$[\mathbf{v}( S^{-} ,\boldsymbol{\beta}')]$};
  \node[draw=none, text=red] at (6.02,0.12) {$[\mathbf{v}( S^{-} ,\boldsymbol{\beta})]$};

  \draw[fill=blue, blue] (-0.21,3.2) circle (0.14cm);
  \draw[fill=blue, blue] (0.28,3.6) circle (0.16cm);
  \draw[fill=blue, blue] (0.76,4) circle (0.15cm);
  \node[draw=none, text=blue] at (-0.21,4.25) {$\mathcal{P}_1$};

  \draw[fill=blue, blue] (1.98,3.65) circle (0.15cm);
  \draw[fill=blue, blue] (2.9,2.57) circle (0.15cm);
  \draw[fill=blue, blue] (3.71,1.55) circle (0.15cm);
  \node[draw=none, text=blue] at (3.41,3.06) {$\mathcal{P}_2$};

  \draw[fill=blue, blue] (1.68,0.33) circle (0.15cm);
  \draw[fill=blue, blue] (3.59,0.43) circle (0.15cm);
  \draw[fill=blue, blue] (2.5,0.35) circle (0.15cm);
  \node[draw=none, text=blue] at (2.9,-0.19) {$\mathcal{P}_3$};

  \draw[fill=blue, blue] (-0.54,2.12) circle (0.15cm);
  \draw[fill=blue, blue] (-0.1,1.45) circle (0.15cm);
  \draw[fill=blue, blue] (0.3,0.86) circle (0.15cm);
  \node[draw=none, text=blue] at (-0.63,1.15) {$\mathcal{P}_4$};
\end{tikzpicture}
\caption{The red points are the endpoints excluded from the corresponding sets, and the blue points form $\mathcal P_1,\ldots,\mathcal P_4$. }
\label{fig1}
\end{figure}

\subsection{A Linear Hamming Code Conversion}\label{sec8}

Condition \eqref{eqn28} guarantees the existence of an information-preserving bijection, but it does not ensure that the conversion is optimal. The following construction gives an explicit optimal conversion for $r=2$.

\begin{construction}\label{con1}
	Let $q\ge4$, $m\ge3$, and let $\mathbf A_i$, $\mathbf B_i$, and $\mathbf P_i$, $i\in[t]$, be the matrices in Lemma \ref{lem11}.

	\textit{Initial codes:} For every $i\in[t]$, let
	\begin{equation*}
		\mathcal C_{I_i}=\ker \mathbf H_{I_i}=\mathcal H_q(2).
	\end{equation*}

	\textit{Conversion:} Write the $i$th initial codeword as $(\mathbf x_i,\mathbf p_i)$, where $\mathbf x_i\in\mathbb F_q^{q-1}$ and $\mathbf p_i\in\mathbb F_q^2$. Keep all $q-1$ symbols of $\mathbf x_i$, read the two symbols of $\mathbf p_i$, and compute
	\begin{equation}\label{eqn44}
		\mathbf p_F^{\top}=\sum_{i=1}^t\mathbf B_i\mathbf p_i^{\top}.
	\end{equation}

	\textit{Final code:} The final code is defined as
	\begin{equation}\label{eqn45}
		\mathcal C_F = \{ (\mathbf x_1,\ldots,\mathbf x_t,\mathbf p_F): (\mathbf x_i,\mathbf p_i)\in \mathcal C_{I_i}, i\in[t] \}.
	\end{equation}
\end{construction}

\begin{corollary}\label{cor3}
	Construction \ref{con1} is an optimal nontrivial convertible code from $\mathcal H_q(2)$ to $\mathcal H_q(m)$.
\end{corollary}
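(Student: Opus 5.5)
We have to establish four things: that Construction \ref{con1} really is a conversion into $\mathcal H_q(m)$, that it preserves information, that it is nontrivial, and that it attains \eqref{eqn29} and \eqref{eqn30} with equality for every $i$. Here $r=2$, so $K_r=q-1$, $N_r=q+1$ and $t=K_m/(q-1)$ by Lemma \ref{lem11}.

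\textbf{Step 1 (validity).} Let $(\mathbf x_i,\mathbf p_i)\in\mathcal C_{I_i}$. Then $\mathbf A_i\mathbf x_i^{\top}+\mathbf p_i^{\top}=\mathbf 0$ by \eqref{eqn34}. Using $\mathbf P_i=\mathbf B_i\mathbf A_i$ from \eqref{eqn32} together with \eqref{eqn44}, we get
\begin{equation*}
\mathbf H_F(\mathbf x_1,\ldots,\mathbf x_t,\mathbf p_F)^{\top}
=\sum_{i=1}^t\mathbf B_i\mathbf A_i\mathbf x_i^{\top}+\sum_{i=1}^t\mathbf B_i\mathbf p_i^{\top}
=\sum_{i=1}^t\mathbf B_i\bigl(\mathbf A_i\mathbf x_i^{\top}+\mathbf p_i^{\top}\bigr)=\mathbf 0.
\end{equation*}
Hence $\mathcal C_F\subseteq\mathcal H_q(m)$ by Lemma \ref{lem11}.

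\textbf{Step 2 (bijectivity).} The map $\sigma$ is injective, because in $\mathcal H_q(2)$ each $\mathbf p_i$ is determined by $\mathbf x_i$, so the outputs $\mathbf x_1,\ldots,\mathbf x_t$ already recover the input. Its image has size $q^{t(q-1)}=q^{K_m}=|\mathcal H_q(m)|$, so $\mathcal C_F=\mathcal H_q(m)$ and $\sigma$ is bijective.

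\textbf{Step 3 (costs and nontriviality).} Reading off the construction, $\mathcal U_i$ is the $q-1$ coordinates of $\mathbf x_i$, $\mathcal R_i$ is the two coordinates of $\mathbf p_i$, and $\mathcal W$ is the $m$ coordinates of $\mathbf p_F$. Each written coordinate is computed only from read symbols, so the model \eqref{eqn2} holds. The total read cost is $2t$. Nontriviality requires $2t<k_F=K_m=t(q-1)$, which holds exactly when $q>3$; this is why the construction assumes $q\ge4$.

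\textbf{Step 4 (optimality).} This is the step needing care. For unchanged symbols, $|\mathcal U_i|=q-1=K_2$ meets \eqref{eqn29} with equality. For read symbols, we use the ``in particular'' clause of Theorem \ref{thm2} with $r=2$ and $(q,r)\neq(2,2)$ (true since $q\ge4$). When $|\mathcal U_i|=K_r$ it gives $|\mathcal R_i|\ge r=2$, and the construction reads exactly $2$ symbols, so \eqref{eqn30}, equivalently \eqref{eqn22}, holds with equality.

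Theorem \ref{thm2} covers only the linear specialization. To claim optimality against the general bounds, we also invoke Proposition \ref{prop1}, which says Lemma \ref{lem7} and Theorem \ref{thm1} coincide with \eqref{eqn20} and \eqref{eqn22} here, and these are exactly the bounds Theorem \ref{thm2} evaluates.

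Two side conditions of Theorem \ref{thm2} should be checked explicitly: $r<m$, which follows from $m\ge3$, and $x>1+N_{m-1}$. The latter reduces to $q^{m-1}>q$, which holds for $m\ge3$.
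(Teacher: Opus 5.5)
Your proposal is correct and follows essentially the same route as the paper's proof. You show $\mathcal C_F\subseteq\ker\mathbf H_F$ using $\mathbf P_i=\mathbf B_i\mathbf A_i$ and $\mathbf A_i\mathbf x_i^{\top}+\mathbf p_i^{\top}=\mathbf 0$, get $\mathcal C_F=\mathcal H_q(m)$ from the count $q^{t(q-1)}=q^{K_m}$, read off $|\mathcal U_i|\le q-1$ and $|\mathcal R_i|\ge 2$ from Theorem \ref{thm2} with $r=2$, and obtain nontriviality from $2<q-1$. Your explicit injectivity argument in Step 2 is slightly more rigorous than the paper, which simply asserts that the map is bijective.
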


\begin{IEEEproof}
	Since $\mathbf H_{I_i}$ is a parity-check matrix of $\mathcal C_{I_i}=\mathcal H_q(2)$ by Lemma \ref{lem11}, we have
	\begin{equation}\label{eqn46}
		\mathbf H_{I_i}\cdot (\mathbf x_i,\mathbf p_i)^{\top}=\mathbf0,
		\quad
		\mathbf p_i^{\top}=-\mathbf A_i\mathbf x_i^{\top}.
	\end{equation}
	By \eqref{eqn46}, \eqref{eqn44}, and $\mathbf P_i=\mathbf B_i\mathbf A_i$, we obtain
	\begin{equation*}
		\mathbf p_F^{\top}
		=-\sum_{i=1}^t\mathbf B_i\mathbf A_i\mathbf x_i^{\top}
		=-\sum_{i=1}^t\mathbf P_i\mathbf x_i^{\top},
	\end{equation*}
	which implies that every final codeword satisfies
	\begin{equation*}
		\mathbf H_F\cdot (\mathbf x_1,\ldots,\mathbf x_t, \mathbf p_F)^{\top}=\mathbf0.
	\end{equation*}
	Thus, $\mathcal C_F\subseteq\ker\mathbf H_F$. The conversion map is bijective and $\dim(\ker\mathbf H_F) = K_m= t(q-1)$. Therefore,
	\begin{equation*}
		|\mathcal C_F|=q^{t(q-1)}=|\ker\mathbf H_F|.
	\end{equation*}
	Hence, $\mathcal C_F\supseteq\ker\mathbf H_F$. Together with Lemma \ref{lem11}, we have
	\begin{equation*}
		\mathcal C_F=\ker \mathbf H_F=\mathcal H_q(m).
	\end{equation*}
	Theorem \ref{thm2} with $r=2$ gives $|\mathcal U_i|\le q-1$ and $|\mathcal R_i|\ge2$. Both equations $|\mathcal U_i|= q-1$ and $|\mathcal R_i|=2$ are attained in \eqref{eqn44}. Moreover, $q\ge4$ implies $2<K_2=q-1$, so the conversion is nontrivial.
\end{IEEEproof}

\begin{remark}
Corollary \ref{cor3} shows a simultaneously optimal conversion outside the MDS setting. This differs from the Reed--Muller construction in \cite[Section~IV]{gruica2026linear}, for which the bound on $|\mathcal R_1|$ is not shown to be attained.
\end{remark}

Construction \ref{con1} uses matrices $\mathbf P_1,\mathbf P_2,\dots,\mathbf P_t$ in Lemma \ref{lem11} exactly once. We next consider conversions obtained by selecting only part of these matrices or by using some matrices repeatedly. The following construction uses the same conversion procedure \eqref{eqn44} as Construction \ref{con1} with $\mathbf B_i$ replaced by $\mathbf B_{\varphi(i)}$.

\begin{construction}\label{conB}
	Under the assumptions of Construction \ref{con1}, let $t_0=K_m/(q-1)$ be the number of matrices in Lemma \ref{lem11}. Let $\varphi:[t]\to[t_0]$ be injective when $1\le t<t_0$ and surjective when $t>t_0$.
	
	\textit{Initial codes:} For every $i\in[t]$, let
	\begin{equation*}
		\mathcal C_{I_i}=\ker \mathbf H_{I_i}=\mathcal H_q(2),\
		\mathbf H_{I_i}=(\mathbf A_{\varphi(i)}\mid\mathbf I_2).
	\end{equation*}

	\textit{Conversion:} Write the $i$th initial codeword as $(\mathbf x_i,\mathbf p_i)$, where $\mathbf x_i\in\mathbb F_q^{q-1}$ and $\mathbf p_i\in\mathbb F_q^2$. Keep all $q-1$ symbols of $\mathbf x_i$, read the two symbols of $\mathbf p_i$, and compute
	\begin{equation*}
		\mathbf p_F^{\top}=\sum_{i=1}^t\mathbf B_{\varphi(i)}\mathbf p_i^{\top}.
	\end{equation*}

	\textit{Final code:} The final code is defined as
	\begin{equation*}
		\mathcal C_F = \{ (\mathbf x_1,\ldots,\mathbf x_t,\mathbf p_F): (\mathbf x_i,\mathbf p_i)\in \mathcal C_{I_i}, i\in[t] \}.
	\end{equation*}
\end{construction}

The following corollary gives sufficient conditions under which Construction \ref{conB} continues to attain bounds for unchanged and read symbols.

\begin{corollary}\label{cor4}
	Define $\mathbf H_F^{(t)}=(\mathbf P_{\varphi(1)}\mid\cdots\mid\mathbf P_{\varphi(t)}\mid\mathbf I_m)$.
	Then, the final code $\mathcal C_F$ given by Construction \ref{conB} is $\mathcal C_F=\ker\mathbf H_F^{(t)}$.
	For each $i\in[t]$, suppose that there is no zero row in the matrix
	\begin{equation}\label{eqn47}
		\bigl(\mathbf P_{\varphi(1)}\mid \cdots \mid \mathbf P_{\varphi(i-1)}\mid \mathbf P_{\varphi(i+1)}\mid \cdots \mid \mathbf P_{\varphi(t)}\bigr).
	\end{equation}
	When $t>t_0$, suppose further that
	\begin{equation}\label{eqn48}
		\max_{s\in[t_0]}|\varphi^{-1}(s)|\le q-2.
	\end{equation}
	Construction \ref{conB} simultaneously attains the bounds on the number of unchanged symbols in Lemma \ref{lem6} and of read symbols in Theorem \ref{thm1}. 
\end{corollary}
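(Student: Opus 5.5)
We first identify $\mathcal C_F$, then check the two bounds separately. The unchanged bound comes from a support computation for $\mathcal C_{i,\mathbf 0}$. The read bound comes from a rank argument on $q-1$ columns of $\mathbf H_F^{(t)}$.

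\emph{Identifying the final code.} We repeat the argument of Corollary \ref{cor3}. From $\mathbf p_i^{\top}=-\mathbf A_{\varphi(i)}\mathbf x_i^{\top}$ and $\mathbf P_{\varphi(i)}=\mathbf B_{\varphi(i)}\mathbf A_{\varphi(i)}$ we get $\mathbf p_F^{\top}=-\sum_i\mathbf P_{\varphi(i)}\mathbf x_i^{\top}$, so $\mathcal C_F\subseteq\ker\mathbf H_F^{(t)}$. Since $\mathbf H_F^{(t)}$ contains $\mathbf I_m$, it has rank $m$. Its length is $t(q-1)+m$, so $\dim\ker\mathbf H_F^{(t)}=t(q-1)=k_F$. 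Because $\sigma$ is bijective, $|\mathcal C_F|=q^{t(q-1)}$, and equality follows. In particular $\sigma$ is linear, each $\mathcal C_{i,\mathbf v}$ is a coset of $\mathcal C_{i,\mathbf 0}$ with the same ELP, and Proposition \ref{prop1} applies.

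\emph{Unchanged symbols.} The construction has $|\mathcal U_i|=q-1=n_{I_i}-2$, so it suffices to show $n_F-\delta^{\elp}_{k_{-i}}(\mathcal C_{i,\mathbf 0})=q-1$.
\begin{itemize}
\item $\mathcal C_{i,\mathbf 0}$ is a linear code of dimension $k_{-i}$. For a $j$-dimensional linear code, $\delta^{\elp}_{j}=d_j=|\supp|$ by Proposition \ref{prop1}, so $\delta^{\elp}_{k_{-i}}(\mathcal C_{i,\mathbf 0})=|\supp(\mathcal C_{i,\mathbf 0})|$.
\item Its codewords are $(\mathbf x_1,\ldots,\mathbf 0,\ldots,\mathbf x_t,\mathbf p_F)$, with $\mathbf x_j$ free for $j\ne i$ and $\mathbf p_F^{\top}=-\sum_{j\ne i}\mathbf P_{\varphi(j)}\mathbf x_j^{\top}$.
\item All $\mathbf x_j$-coordinates, $j\ne i$, lie in the support. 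The $\ell$th coordinate of $\mathbf p_F$ lies in the support exactly when row $\ell$ of the matrix in \eqref{eqn47} is nonzero, which holds by hypothesis.
\end{itemize}
Hence $|\supp(\mathcal C_{i,\mathbf 0})|=n_F-(q-1)$, the bound of Lemma \ref{lem6} equals $q-1$, and it is attained.

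\emph{Read symbols.} The construction reads $|\mathcal R_i|=2$. A valid conversion can never beat a lower bound, so it suffices to show that the bound of Theorem \ref{thm1} is at least $2$. With $k_i=q-1=|\mathcal U_i|$, that bound is $\max\{0,\lceil q-1-h_{q-1}(\mathcal C_F)\rceil\}$, so we need $h_{q-1}(\mathcal C_F)\le q-3$. By \eqref{eqn18}, $h_{q-1}(\mathcal C_F)=\max_{|J|=q-1}\dim\mathcal C_F(J)$. Moreover $\dim\mathcal C_F(J)=|J|-\operatorname{rank}(\mathbf H_F^{(t)})_J$, where $(\mathbf H_F^{(t)})_J$ denotes the columns indexed by $J$. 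So we must show that any $q-1$ columns of $\mathbf H_F^{(t)}$ have rank at least $2$.

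This is the step I expect to be the main obstacle. The plan is as follows.
\begin{itemize}
\item All columns are nonzero, so rank $1$ can occur only if all $q-1$ chosen columns represent the same projective point of $\operatorname{PG}(m-1,q)$.
\item By Lemma \ref{lem11}, distinct columns within one block $\mathbf P_s$, columns in distinct blocks, and the columns of $\mathbf I_m$ all represent distinct points.
\item A point can therefore appear repeatedly only through repeated use of the same block $\mathbf P_s$. Such a point appears exactly $|\varphi^{-1}(s)|$ times.
\item When $\varphi$ is injective this multiplicity is $1$, and $q-1\ge3$ since $q\ge4$. When $t>t_0$, it is at most $q-2<q-1$ by \eqref{eqn48}.
\end{itemize}
In either case the chosen columns contain two non-proportional ones. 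Hence the rank is at least $2$, $h_{q-1}(\mathcal C_F)\le q-3$, and the read bound equals $2$, which the construction attains.
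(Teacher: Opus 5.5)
Your proof is correct and follows the paper's argument. It uses the same identification $\mathcal C_F=\ker\mathbf H_F^{(t)}$, the same support count for the coset $\mathcal C_{i,\mathbf v}$ via the no-zero-row hypothesis, and the same projective-point multiplicity argument showing that any $q-1$ columns of $\mathbf H_F^{(t)}$ have rank at least $2$. The one difference is that the paper computes $\theta_i=q-3$ exactly, which also requires proving $d_{q-3}(\mathcal C_F)\le q-1$ from $\operatorname{rank}\mathbf P_{\varphi(i)}=2$; you prove only $h_{q-1}(\mathcal C_F)\le q-3$ and obtain the matching inequality from the fact that Theorem~\ref{thm1} applies to the construction itself, which reads exactly two symbols. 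That shortcut is legitimate.
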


\begin{IEEEproof}	
	The identity $\mathcal C_F=\ker\mathbf H_F^{(t)}$ follows directly from the proof of Corollary \ref{cor3}. It remains to show that the bounds on unchanged and read symbols are both attained.
		
	Fix $i\in[t]$ and $\mathbf v\in\mathcal C_{I_i}$. When the $i$th initial codeword is fixed to $\mathbf v$, varying the other initial codewords gives
	\begin{equation}\label{eqn49}
		\mathcal C_{i,\mathbf v}=\mathbf c_{\mathbf v}+\mathcal D_i,
	\end{equation}
	where $\mathbf c_{\mathbf v}$ is fixed and
	\begin{equation*}
		\mathcal D_i
		=\left \{\left( 
		(\mathbf x_1,\dots,\mathbf x_{i-1}, \mathbf 0, \mathbf x_{i+1},\dots, \mathbf x_t), -\sum_{j\in[t]\setminus\{i\}}\mathbf x_j\mathbf P_{\varphi(j)}^{\top} 
		\right) : \mathbf x_j\in\mathbb F_q^{q-1} \right\}.
	\end{equation*}
	Every systematic coordinate associated with $\mathbf x_j$, $j\ne i$, belongs to $\supp(\mathcal D_i)$. Moreover, the $\ell$-th parity coordinate belongs to $\supp(\mathcal D_i)$ if and only if
	\begin{equation*}
		\bigl(\mathbf P_{\varphi(j)}\bigr)_{\ell,a}\ne0
	\end{equation*}
	for some $j\in[t]\setminus\{i\}$ and $a\in[q-1]$, where $\bigl(\mathbf P_{\varphi(j)}\bigr)_{\ell,a}$ denotes the entry in the $\ell$-th row and the $a$-th column of the matrix $\mathbf P_{\varphi(j)}$. The assumed condition \eqref{eqn47} implies
	\begin{equation*}
		|\supp(\mathcal D_i)|=n_F-(q-1).
	\end{equation*}
	Since $\dim(\mathcal D_i)=k_{-i}$, we have
	\begin{equation*}
		\delta_{k_{-i}}^{\elp}(\mathcal C_{i,\mathbf v})
		= \delta_{k_{-i}}^{\elp}(\mathcal D_i)
		= d_{k_{-i}}(\mathcal D_i)
		= |\supp(\mathcal D_i)|
		= n_F-(q-1),
	\end{equation*}
	where the first equality follows from  \eqref{eqn49}, and the second follows from Proposition \ref{prop1}. Therefore, Lemma \ref{lem6} shows
	\begin{equation*}
		|\mathcal U_i|\le q-1.
	\end{equation*}
	Since the conversion keeps exactly $q-1$ symbols from every initial codeword, it attains the bound on unchanged symbols.

	We next prove that the read bound is also attained. Let $L_i$ be the set of the $q-1$ coordinates corresponding to the columns of $\mathbf P_{\varphi(i)}$. The subcode of $\mathcal C_F$ supported on $L_i$ satisfies
	\begin{equation*}
		\mathcal C_F(L_i)|_{L_i}
		=\ker\mathbf P_{\varphi(i)}.
	\end{equation*}
	Since $\mathbf B_{\varphi(i)}$ has full column rank, we have $\operatorname{rank}(\mathbf P_{\varphi(i)})=\operatorname{rank}(\mathbf B_{\varphi(i)}\mathbf A_{\varphi(i)})=\operatorname{rank}(\mathbf A_{\varphi(i)})=2$. Thus, the rank--nullity theorem gives
	\begin{equation*}
		\dim\mathcal C_F(L_i)=(q-1)-2=q-3.
	\end{equation*}
	Consequently,
	\begin{equation}\label{eqnGHW1}
	\begin{split}
		d_{q-3}(\mathcal C_F)
		&\le|\supp(\mathcal C_F(L_i))|\\
		&\le |L_i| \\
		&=q-1.
	\end{split}
	\end{equation}

	We now show that $d_{q-2}(\mathcal C_F)>q-1$. It suffices to show that there is no $(q-2)$-dimensional subcode of $\mathcal C_F$ whose support has size at most $q-1$, which is equivalent to showing that $\dim\mathcal C_F(J)<q-2$ for every coordinate set $J$ with $|J|\le q-1$. If $|J|<q-2$, then
	\begin{equation*}
    	\dim\mathcal C_F(J)\le |J|<q-2
	\end{equation*}
	holds immediately. Hence it remains only to consider
	\begin{equation*}
    	q-2\le |J|\le q-1.
	\end{equation*}
	If $|J|=q-2$, every column of $\mathbf H_F^{(t)}$ is nonzero, and hence
	\begin{align*}
		\dim\mathcal C_F(J)
		&=|J|-\operatorname{rank}\bigl(\mathbf H_F^{(t)}|_J\bigr)\\
		&\le(q-2)-1 \\
		&=q-3,
	\end{align*}
	where $\mathbf H_F^{(t)}|_J$ denotes the submatrix of $\mathbf H_F^{(t)}$ formed by the columns indexed by $J$. If $|J|=q-1$, then
	\begin{equation*}
		\operatorname{rank}
		\bigl(\mathbf H_F^{(t)}|_J\bigr)\ge2.
	\end{equation*}
	Indeed, otherwise all $q-1$ columns indexed by $J$ would represent the same projective point. This is impossible when $t<t_0$, since every projective point occurs at most once. It is also impossible when $t>t_0$, since \eqref{eqn48} implies that every projective point has multiplicity at most $q-2$. Therefore,
	\begin{align*}
		\dim\mathcal C_F(J)
		&=|J|-\operatorname{rank}\bigl(\mathbf H_F^{(t)}|_J\bigr)\\
		&\le q-3.
	\end{align*}
	Thus no subcode of dimension $q-2$ can have support of size at most $q-1$, which proves
	\begin{equation}\label{eqnGHW2}
		d_{q-2}(\mathcal C_F)>q-1.
	\end{equation}

	Combining the two GHW inequalities \eqref{eqnGHW1} and \eqref{eqnGHW2} gives
	\begin{equation*}
		\theta_i
		=\max\{j:d_j(\mathcal C_F)\le|\mathcal U_i|=q-1\}
		=q-3.
	\end{equation*}
	Therefore, Theorem \ref{thm1} implies
	\begin{equation*}
		|\mathcal R_i|
		\ge K_2-\theta_i
		=(q-1)-(q-3)
		=2.
	\end{equation*}
	The conversion reads exactly two symbols from every initial codeword and hence attains the read bound.
\end{IEEEproof}

\subsection{A Nonlinear Quasigroup Conversion}\label{sec9}

We next give a separate construction to show that the nonlinear bounds can be attained at nonintegral entropy levels. A binary operation $*:\mathcal A^2\to\mathcal A$ is a quasigroup operation if, for every $a,b\in\mathcal A$, each of the equations $a*x=b$ and $y*a=b$ has a unique solution. Equivalently, every row and every column of its operation table is a permutation of $\mathcal A$ \cite[Section~1]{mckay2007small}. More generally, $f:\mathcal A^r\to\mathcal A$ is an $r$-ary quasigroup if any one of the variables in $x_0=f(x_1,\ldots,x_r)$ is uniquely determined by the other $r$ variables \cite[Section~1.2]{krotov2016transitive}.

Let $\mathcal A=\{0,1,2,3,4\}$ and define $*$ by
\begin{equation}\label{eqn50}
\begin{array}{c|ccccc}
 *&0&1&2&3&4\\ \hline
 0&0&1&2&3&4\\
 1&1&0&3&4&2\\
 2&2&3&4&0&1\\
 3&3&4&1&2&0\\
 4&4&2&0&1&3
\end{array}
\end{equation}
Every row and every column contains every element of $\mathcal A$ exactly once. Hence $*$ is a quasigroup operation. For $r\ge2$, define
\begin{align}
	f_2(x_1,x_2)&=x_1*x_2,\nonumber\\
	f_r(x_1,\ldots,x_r)&=f_{r-1}(x_1,\ldots,x_{r-1})*x_r.
	\label{eqn51}
\end{align}
The composition result in \cite[Proposition~2-2(a)]{krotov2008decomposability} shows that $f_r$ is an $r$-ary quasigroup. This fact also follows directly by induction from the unique solvability of each equation involving $*$.

Two codes $\mathcal C,\mathcal D\subseteq\mathcal A^n$ are said to be equivalent if there exist a coordinate permutation $\pi$ of $[n]$ and alphabet permutations $\tau_1,\ldots,\tau_n$ of $\mathcal A$ such that
\begin{equation}\label{eqn52}
	\mathcal D=
	\left\{
	\bigl(\tau_1(c_{\pi(1)}),\ldots,\tau_n(c_{\pi(n)})\bigr):
	\mathbf c\in\mathcal C
	\right\}.
\end{equation}
Such an equivalence preserves cardinality and Hamming distance.

\begin{construction}\label{con2}
	Let $m\ge3$, $t\ge2$, and $\mathcal S=\{0,1,2\}\subset\mathcal A$.

	\textit{Initial codes:} For every $i\in[t]$, define
	\begin{equation}\label{eqn53}
		\mathcal C_{I_i} = \bigl\{ (\mathbf x_i, p_i): \mathbf x_i\in\mathcal S^m, p_i=f_m(\mathbf x_i) \bigr\}.
	\end{equation}

	\textit{Conversion:} Keep the $m$ coordinates in $\mathbf x_i$, read $p_i$ from every initial word, and use \eqref{eqn51} to compute
	\begin{equation}\label{eqn54}
		p_F=f_t(p_1,\ldots,p_t).
	\end{equation}
	
	\textit{Final code:} The final code is defined as
	\begin{equation}\label{eqn55}
		\mathcal C_F=\bigl\{(\mathbf x_1,\ldots,\mathbf x_t,p_F): \mathbf x_i\in\mathcal S^m, i\in[t] \bigr\}.
	\end{equation}
\end{construction}

\begin{proposition}\label{prop2}
	Construction \ref{con2} is a nontrivial convertible code from $(m+1,3^m,2)_5$ codes to a $(tm+1,3^{tm},2)_5$ code. Neither an initial code nor the final code is equivalent to a linear code over $\mathbb F_5$. More strongly, neither code is contained in a code equivalent to a linear code $[m+1,m,2]_5$ for each initial code and $[mt+1,mt,2]_5$ for the final code.
\end{proposition}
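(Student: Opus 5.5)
The plan is to verify the parameters and the conversion property directly from the quasigroup structure, and then prove the non-linearity claims using only the restriction of the codes to two systematic coordinates.

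\emph{Parameters and conversion.} The map $\mathbf x_i\mapsto(\mathbf x_i,f_m(\mathbf x_i))$ is injective, so $|\mathcal C_{I_i}|=3^m$. For the distance, suppose two initial codewords differ in exactly one coordinate. That coordinate cannot be $p_i$, since $p_i$ is determined by $\mathbf x_i$. It cannot be a single coordinate of $\mathbf x_i$ either, because $f_m$ is an $m$-ary quasigroup and so changing one argument changes the value. Hence the minimum distance is at least $2$. Two words of $\mathcal S^m$ differing in one coordinate give codewords at distance exactly $2$, so the distance is $2$.

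For the final code, $p_F=f_t(f_m(\mathbf x_1),\ldots,f_m(\mathbf x_t))$ is a composition of quasigroups. Changing a single $x$-coordinate changes exactly one $p_i$, and therefore changes $p_F$. Hence $\sigma$ is injective with image $\mathcal C_F$, so $|\mathcal C_F|=3^{tm}$ and $d(\mathcal C_F)=2$. The conversion keeps the $x$-coordinates unchanged and writes $p_F$ computed from the read set $\{p_1,\ldots,p_t\}$, as required by \eqref{eqn2}. It is nontrivial because $\sum_i|\mathcal R_i|=t<tm\log_5 3=k_F$, which holds since $m\log_53\ge3\log_53>1$.

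\emph{Non-equivalence to a linear code.} A code equivalent to a linear code over $\mathbb F_5$ has size a power of $5$, while $3^m$ and $3^{tm}$ are not powers of $5$. This settles the first claim.

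For the stronger claim, let $\mathcal L$ be a linear $[m+1,m,2]_5$ code. Its dual is spanned by a single vector $\mathbf a$. That vector has full weight, since otherwise $\mathcal L$ would contain a weight-one word. So $\mathcal L=\{\mathbf c:\sum_j a_jc_j=0\}$ with every $a_j\ne0$, and each coordinate is an invertible affine-free linear combination of the others.

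Now apply an equivalence \eqref{eqn52}. In the resulting code, the coordinate that plays the role of $p_i$ is expressed as
\[
\sigma\Bigl(\sum_{j=1}^{m}\phi_j(x_j)\Bigr),
\]
where $\sigma$ and the $\phi_j$ are bijections of $\mathcal A\cong\mathbb F_5$, with the $a_j$ and the alphabet permutations absorbed into them. If $\mathcal C_{I_i}$ were contained in such a code, then the $p_i$-coordinate of a codeword would be determined by its $x$-coordinates, so
\[
f_m(\mathbf x)=\sigma\Bigl(\sum_{j}\phi_j(x_j)\Bigr)\quad\text{for all }\mathbf x\in\mathcal S^m.
\]

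Next I restrict to $x_3=\cdots=x_m=0$. Column $0$ of \eqref{eqn50} is the identity, so $y*0=y$ and hence $f_m(x_1,x_2,0,\ldots,0)=x_1*x_2$. The table gives $0*0=1*1=0$ and $0*1=1*0=1$. Write $\psi=\sigma^{-1}$ and absorb the constant $\sum_{j\ge3}\phi_j(0)$ into it. The two equalities then give
\[
\phi_1(0)-\phi_1(1)=\phi_2(1)-\phi_2(0)
\qquad\text{and}\qquad
\phi_1(0)-\phi_1(1)=\phi_2(0)-\phi_2(1).
\]
Subtracting, $2\bigl(\phi_2(1)-\phi_2(0)\bigr)=0$ in $\mathbb F_5$. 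Hence $\phi_2(1)=\phi_2(0)$, which contradicts the bijectivity of $\phi_2$.

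For $\mathcal C_F$ the argument is the same, applied to a linear $[tm+1,tm,2]_5$ code. Set all $x$-coordinates except $x_{1,1}$ and $x_{1,2}$ to $0$. Then $f_m(\mathbf 0)=0$, so $p_j=0$ for $j\ge2$, and $f_t(p_1,0,\ldots,0)=p_1=x_{1,1}*x_{1,2}$, again because $y*0=y$. The same $2\times2$ subsquare of the table therefore yields the contradiction.

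\emph{Main obstacle.} The delicate step is justifying that an arbitrary equivalence \eqref{eqn52}, including a coordinate permutation, always lets the coordinate that plays the role of $p_i$ be written as $\sigma(\sum_j\phi_j(x_j))$. This is exactly where the full-weight dual vector is used: whichever coordinate the permutation sends to that position can be solved for from the single parity check with a nonzero coefficient.
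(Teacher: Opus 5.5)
Your proposal is correct and follows essentially the paper's route: the parameters and nontriviality come from the quasigroup property and $t<tm\log_5 3$, non-equivalence from the cardinality $3^m$, and noncontainment from the single full-weight parity check combined with the XOR-type subsquare on $\{0,1\}^2$ in \eqref{eqn50}, which forces a factor $2$ to vanish in $\mathbb F_5$. The difference is only one of presentation: you write the last step as a functional equation $f_m(\mathbf x)=\sigma\bigl(\sum_j\phi_j(x_j)\bigr)$ and obtain the contradiction on $\phi_2$, whereas Appendix~\ref{app6} takes differences of the four codewords and obtains it on the parity coordinate through $\Delta_3$.
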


\begin{IEEEproof}
	The recursive composition in \eqref{eqn51} shows that the final parity is an $mt$-ary quasigroup function of the systematic coordinates, which implies that each initial code and the final code have minimum Hamming distance $2$. Therefore, each initial code has parameters $(m+1,3^m,2)_5$ and the final code has parameters $(tm+1,3^{tm},2)_5$ since the conversion map is bijective.

	Furthermore,
	\begin{equation*}
		\sum_{i=1}^t|\mathcal R_i|=t
		<t m\log_5 3=k_F,
	\end{equation*}
	so the conversion is nontrivial. Since neither $3^m$ nor $3^{mt}$ is an integral power of five, the code cardinalities preclude equivalence with  linear codes over $\mathbb F_5$. The stronger noncontainment statement is proved in Appendix \ref{app6}.
\end{IEEEproof}

In what follows, we will prove Construction \ref{con2} is optimal. To obtain explicit expressions for $\theta_i$ and $\delta_{k_{-i}}^{\elp}(\mathcal C_F)$, we first analyze the ELP of the final code.

\begin{proposition}\label{prop3}
	Let $\mathcal C_F$ be the final code in Construction \ref{con2}. Then, $h_m(\mathcal C_F)=(m-1)a$ and $\delta_{k_{-i}}^{\elp}(\mathcal C_F)=(t-1)m+1$, where $a=\log_5 3$.
\end{proposition}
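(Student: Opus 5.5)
The plan is to compute the whole ELP of $\mathcal C_F$ explicitly. We will show that
\begin{equation*}
	h_i(\mathcal C_F)=(i-1)a,\qquad 1\le i\le tm+1,
\end{equation*}
with $a=\log_5 3$ and $h_0(\mathcal C_F)=0$. Both claims then follow at once. Setting $i=m$ gives $h_m(\mathcal C_F)=(m-1)a$. For the second claim, $k_F=tma$ and $k_{-i}=k_F-k_i=(t-1)ma$, so Definition \ref{def4} gives $\delta^{\elp}_{k_{-i}}(\mathcal C_F)=\min\{i:(i-1)a\ge(t-1)ma\}=(t-1)m+1$.

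\textbf{Setup.} A uniform codeword $X$ of $\mathcal C_F$ has $tm$ systematic coordinates that are i.i.d.\ uniform on $\mathcal S$, so each carries entropy $a$. Its last coordinate is $p_F=F(\mathbf x)$, where
\begin{equation*}
	F(\mathbf x)=f_t\bigl(f_m(\mathbf x_1),\ldots,f_m(\mathbf x_t)\bigr).
\end{equation*}
Since every iterated composition of $*$ is uniquely solvable in each variable, $F$ is a $tm$-ary quasigroup. Consequently, fixing any subset of its arguments leaves a quasigroup in the remaining ones. In particular, for fixed values of all but one argument, $F$ is injective in the remaining argument.

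\textbf{Case 1: $J\ni p_F$, $|J|=i$.} Then $J^c$ consists of $tm-i+1$ systematic coordinates. Conditioned on them, $X_J$ is a bijective image of the $i-1$ free systematic coordinates in $J$. Hence
\begin{equation*}
	H(X_J\mid X_{J^c})=(i-1)a.
\end{equation*}

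\textbf{Case 2: $J$ consists of $i$ systematic coordinates.} Write $Y$ for the systematic coordinates outside $J$. We use $H(X_J\mid X_{J^c})=H(X)-H(X_{J^c})$ together with
\begin{equation*}
	H(X_{J^c})=H(Y)+H(p_F\mid Y)=(tm-i)a+H\bigl(g_Y(X_J)\mid Y\bigr),
\end{equation*}
where $g_Y$ is the $i$-ary quasigroup obtained by fixing $Y$. By the chain rule and injectivity in a single variable,
\begin{equation*}
	H\bigl(g_Y(X_J)\mid Y\bigr)\ge H\bigl(g_Y(X_J)\mid Y,X_{J\setminus\{j\}}\bigr)=a.
\end{equation*}
Therefore
\begin{equation*}
	H(X_J\mid X_{J^c})=tma-(tm-i)a-H\bigl(g_Y(X_J)\mid Y\bigr)\le(i-1)a.
\end{equation*}

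\textbf{Conclusion.} Maximizing over $J$, Case 1 attains $(i-1)a$ for every $1\le i\le tm+1$, and Case 2 never exceeds it. This yields the claimed profile. As a check, $h_{tm+1}(\mathcal C_F)=tma=\log_5|\mathcal C_F|$, consistent with Definition \ref{def2}.

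The main obstacle is Case 2, specifically the lower bound $H(p_F\mid Y)\ge a$. The exact value of this conditional entropy depends on the particular table \eqref{eqn50} and is not needed. The one-variable injectivity argument above avoids computing it, and it is the step I would verify most carefully. That verification amounts to checking that the recursive definition \eqref{eqn51} indeed makes $F$ uniquely solvable in every coordinate.
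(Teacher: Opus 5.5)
Your proof is correct, but it is organized differently from the paper's.

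\textbf{How the paper argues.} It proves the two claims separately.
\begin{itemize}
\item For $h_m(\mathcal C_F)$ it uses a counting argument. Given $X_{J^c}=\mathbf y$ with $|J|=m$, at most $m-1$ coordinates of $J$ are free, so the conditional support has at most $3^{m-1}$ points. Equality holds at $J_0=\{1,\ldots,m-1,mt+1\}$.
\item For $\delta^{\elp}_{k_{-i}}(\mathcal C_F)$ it exhibits a set $J_i$ of size $(t-1)m+1$ attaining $(t-1)ma$. It then shows $H(X_{J^c})>ma$ for every $|J|=(t-1)m$. This only controls the single index $(t-1)m$, so the monotonicity of Lemma \ref{lem2} is used implicitly to exclude all smaller indices.
\end{itemize}

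\textbf{How you argue.} You compute the entire profile $h_j(\mathcal C_F)=(j-1)a$ for $1\le j\le tm+1$, splitting on whether the parity coordinate lies in $J$.
\begin{itemize}
\item Your Case 1 plays the role of the paper's equality witnesses $J_0$ and $J_i$.
\item Your Case 2 is the paper's complement-entropy step: conditioning on all but one systematic coordinate gives $H(p_F\mid Y)\ge a$. You run it at every size rather than only at $(t-1)m$, and it replaces the paper's counting bound for $h_m$.
\end{itemize}

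\textbf{What each buys.} Your route yields the full ELP. Hence $\delta^{\elp}_\rho(\mathcal C_F)=\lceil\rho/a\rceil+1$ in closed form for every $0<\rho\le tma$, with no separate appeal to Lemma \ref{lem2}. The paper's route is more targeted and avoids the entropy identity for $h_m$.

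\textbf{The step you flagged.} The unique solvability of $F$ does hold. Each $x_{i,j}$ enters $F$ only through $f_m(\mathbf x_i)$, and both $f_t$ and $f_m$ are quasigroups. So you first recover $f_m(\mathbf x_i)$ from $p_F$ and the other blocks, then recover $x_{i,j}$. Injectivity then persists on the restricted domain $\mathcal S$.

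\textbf{One cosmetic point.} You reuse $i$ both as the ELP index and as the initial-code index in $k_{-i}$, as in $\min\{i:(i-1)a\ge(t-1)ma\}$. Rename the ELP index to avoid the clash.
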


\begin{IEEEproof}
The uniform distribution on $\mathcal C_F$ induces the uniform distribution on $\mathcal S^{mt}$ over the systematic coordinates. Hence
\begin{equation}\label{eqn56}
	\Pr\!\left(X_{[mt]}^{(\mathcal C_F)}=\mathbf x\right)=3^{-mt},
	\ \mathbf x\in\mathcal S^{mt},
\end{equation}
and the systematic coordinates are mutually independent, each with entropy $a$ since $\Pr\!\left(X_{\{i\}}^{(\mathcal C_F)}=x\right)=\frac{1}{3}$ for $i\in[mt]$ and $x\in\mathcal S$.

First, we determine the maximum residual entropy supported on any set of $m$ coordinates. Let $J\subseteq[mt+1]$ with $|J|=m$, and fix $\mathbf y\in\mathcal C_F|_{J^c}$. If $mt+1\in J$, the parity symbol is determined by the remaining systematic symbols in $J$. If $mt+1\notin J$, the parity symbol is fixed by $\mathbf y$, and the quasigroup property implies that any $m-1$ systematic symbols in $J$ uniquely determine the remaining one. Therefore, in either case, at most $m-1$ coordinates in $J$ can be chosen freely. Moreover,
\begin{equation*}
	\left| \left\{ \mathbf c|_J: \mathbf c\in\mathcal C_F,\  \mathbf c|_{J^c}=\mathbf y \right\} \right| \le |\mathcal S|^{m-1}.
\end{equation*}
Consequently,
\begin{align*}
	H\left(	X_J^{(\mathcal C_F)} \middle| X_{J^c}^{(\mathcal C_F)}=\mathbf y \right)
 	\le \log_q|\mathcal S|^{m-1}
 	=(m-1)a.
\end{align*}
Averaging over $\mathbf y\in\mathcal C_F|_{J^c}$ gives
\begin{equation*}
	H\left( X_J^{(\mathcal C_F)} \middle| X_{J^c}^{(\mathcal C_F)} \right) \le (m-1)a.
\end{equation*}
To attain equality, take $J_0=\{1,\ldots,m-1,mt+1\}$. Since the first $m-1$ systematic coordinates are independent of the remaining systematic coordinates and the parity symbol is determined by all systematic coordinates,
\begin{align*}
	H\left(X_{J_0}^{(\mathcal C_F)}\middle|X_{J_0^c}^{(\mathcal C_F)}\right)
	&=H\left(X_{[m-1]}^{(\mathcal C_F)},X_{\{mt+1\}}^{(\mathcal C_F)}\middle|X_{J_0^c}^{(\mathcal C_F)}\right) \\
	&=H\left(X_{[m-1]}^{(\mathcal C_F)}\middle|X_{J_0^c}^{(\mathcal C_F)}\right) + H\left(X_{\{mt+1\}}^{(\mathcal C_F)}\middle|X_{[m-1]}^{(\mathcal C_F)},X_{J_0^c}^{(\mathcal C_F)}\right) \\
	&=H\left(X_{[m-1]}^{(\mathcal C_F)}\middle|X_{J_0^c}^{(\mathcal C_F)}\right) \\
	&=(m-1)a.
\end{align*}
It follows that
\begin{equation}\label{eqn57}
	h_m(\mathcal C_F)
	=\underset{J\subseteq[mt+1], |J|=m}{\max} H\left( X_J^{(\mathcal C_F)} \middle| X_{J^c}^{(\mathcal C_F)} \right)
	=(m-1)a.
\end{equation}

Finally, we show that $\delta_{k_{-i}}^{\elp}(\mathcal C_F)=(t-1)m+1$. Note that $k_{-i}=(t-1)ma$ for every $i\in[t]$. Let $J_i$ contain the $m(t-1)$ systematic coordinates outside block $i$ and the parity coordinate. Thus, \eqref{eqn56} and the chain rule yield
\begin{align*}
	H\left(X_{J_i}^{(\mathcal C_F)}\middle|X_{J_i^c}^{(\mathcal C_F)}\right)
	&=H\left(X_{J_i\setminus\{mt+1\}}^{(\mathcal C_F)},X_{\{mt+1\}}^{(\mathcal C_F)} \middle|X_{J_i^c}^{(\mathcal C_F)}\right) \\
	&=H\left(X_{J_i\setminus\{mt+1\}}^{(\mathcal C_F)} \middle|X_{J_i^c}^{(\mathcal C_F)}\right) + H\left(X_{\{mt+1\}}^{(\mathcal C_F)} \middle|X_{J_i\setminus\{mt+1\}}^{(\mathcal C_F)},X_{J_i^c}^{(\mathcal C_F)}\right) \\
	&=H\left(X_{J_i\setminus\{mt+1\}}^{(\mathcal C_F)} \middle|X_{J_i^c}^{(\mathcal C_F)}\right) \\
	&=(t-1)ma.
\end{align*}
Therefore,
\begin{equation}\label{eqn58}
	h_{(t-1)m+1}(\mathcal C_F) = h_{|J_i|}(\mathcal C_F) \ge k_{-i}.
\end{equation}
Conversely, let $J\subseteq[mt+1]$ have size $(t-1)m$. If $mt+1\notin J^c$, then $J^c$ contains $m+1$ systematic coordinates and
\begin{equation*}
	H\left(X_{J^c}^{(\mathcal C_F)}\right)=(m+1)a>ma.
\end{equation*}
If $mt+1\in J^c$, choose a systematic coordinate $\ell\in J$. Then
\begin{align*}
	H\left(X_{J^c}^{(\mathcal C_F)}\right)
	&=H\left(X_{J^c\setminus \{mt+1\}}^{(\mathcal C_F)}\right)+ H\left(X_{\{mt+1\}}^{(\mathcal C_F)} \middle|X_{J^c\setminus\{mt+1\}}^{(\mathcal C_F)}\right)\\
	&=ma+H\left(X_{\{mt+1\}}^{(\mathcal C_F)} \middle|X_{J^c\setminus\{mt+1\}}^{(\mathcal C_F)}\right)\\
	&\ge ma+H\left(X_{\{mt+1\}}^{(\mathcal C_F)} \middle|X_{[mt]\setminus\{\ell\}}^{(\mathcal C_F)}\right)\\
	&=ma+a \\
	&>ma.
\end{align*}
Thus,
\begin{align*}
	H\left(X_J^{(\mathcal C_F)}\middle|X_{J^c}^{(\mathcal C_F)}\right)
	&=H\left(X^{(\mathcal C_F)}\right)
	-H\left(X_{J^c}^{(\mathcal C_F)}\right)\\
	&<(t-1)ma \\
	&=k_{-i}.
\end{align*}
Thus, $h_{(t-1)m}(\mathcal C_F)=h_{|J|}(\mathcal C_F)<k_{-i}$. Together with \eqref{eqn58}, we have
\begin{equation*}
	\delta_{k_{-i}}^{\elp}(\mathcal C_F)=(t-1)m+1.
\end{equation*}
We complete the proof.
\end{IEEEproof}

\begin{corollary}\label{cor5}
	Construction \ref{con2} is optimal, with $|\mathcal U_i|=m$ and $|\mathcal R_i|=1$ for every $i\in[t]$.
\end{corollary}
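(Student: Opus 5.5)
The plan is to show two things. First, Construction \ref{con2} realizes $|\mathcal U_i|=m$ and $|\mathcal R_i|=1$. Second, these values meet the general upper bound on unchanged symbols in Lemma \ref{lem7} and the general lower bound on read symbols in Theorem \ref{thm1}. The construction keeps the $m$ coordinates of $\mathbf x_i$ and reads only $p_i$. Equation \eqref{eqn54} writes $p_F=f_t(p_1,\ldots,p_t)$, so it depends only on read symbols, and \eqref{eqn2} holds with $|\mathcal U_i|=m$ and $|\mathcal R_i|=1$.

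For the unchanged-symbol bound, I will combine Lemma \ref{lem7} with Proposition \ref{prop3}. Proposition \ref{prop3} gives $\delta^{\elp}_{k_{-i}}(\mathcal C_F)=(t-1)m+1$. With $n_F=tm+1$ this yields
\begin{equation*}
|\mathcal U_i|\le n_F-\delta^{\elp}_{k_{-i}}(\mathcal C_F)=tm+1-(t-1)m-1=m .
\end{equation*}
So $|\mathcal U_i|=m$ is optimal. The trivial bound $n_{I_i}=m+1$ is weaker here.

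For the read bound, I will apply Theorem \ref{thm1} with $|\mathcal U_i|=m$ and $k_i=ma$, where $a=\log_5 3$. By \eqref{eqn11} and Proposition \ref{prop3},
\begin{equation*}
\theta_i=\min\{ma,h_m(\mathcal C_F)\}=\min\{ma,(m-1)a\}=(m-1)a .
\end{equation*}
Hence $|\mathcal R_i|\ge\lceil ma-(m-1)a\rceil=\lceil a\rceil=1$, since $0<a<1$. The construction therefore attains the read bound with equality.

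The step I expect to be the main obstacle is making optimality hold for every conversion, not just for the fixed unchanged set. If another conversion uses $|\mathcal U_i|<m$, the read bound changes, because $h_{|\mathcal U_i|}(\mathcal C_F)$ is smaller. I will therefore state optimality in the paper's sense: the maximum of $|\mathcal U_i|$ is attained, and at that value the minimum of $|\mathcal R_i|$ is attained. I will also note that $|\mathcal R_i|\ge 1$ always holds. This follows from Corollary \ref{cor1}-type reasoning: $\lceil k_i-\theta_i\rceil\ge1$ whenever $\theta_i<k_i$, and $\theta_i\le h_m<ma$ when $|\mathcal U_i|\le m$. Consequently both values are simultaneously optimal, and this holds at the nonintegral threshold $\theta_i=(m-1)\log_5 3$.
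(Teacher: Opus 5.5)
Your proposal is correct and follows the paper's proof. Lemma~\ref{lem7} with Proposition~\ref{prop3} gives $|\mathcal U_i|\le m$, and \eqref{eqn11} with $h_m(\mathcal C_F)=(m-1)a$ gives $\theta_i=(m-1)a$, so Theorem~\ref{thm1} yields $|\mathcal R_i|\ge\lceil a\rceil=1$. Your extra remark is a valid small strengthening that the paper does not state: $|\mathcal R_i|\ge1$ holds for every admissible $|\mathcal U_i|\le m$, and it really follows from Theorem~\ref{thm1} together with the monotonicity of $h_j$ in Lemma~\ref{lem2}, not from Corollary~\ref{cor1}.
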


\begin{IEEEproof}
	Lemma \ref{lem7} and Proposition \ref{prop3} give
	\begin{equation*}
		|\mathcal U_i|\le(mt+1)-((t-1)m+1)=m.
	\end{equation*}
	The construction attains equality. According to \eqref{eqn11} and \eqref{eqn57}, we have
	\begin{equation*}
		\theta_i=\min\{ma,h_m(\mathcal C_F)\}=(m-1)a.
	\end{equation*}
	Theorem \ref{thm1} yields
	\begin{equation*}
		|\mathcal R_i|\ge
		\left\lceil ma-(m-1)a\right\rceil
		=\lceil a\rceil=1.
	\end{equation*}
	Thus the construction also attains the read bound.
\end{IEEEproof}

The dimensions $m\log_5 3$ and $mt\log_5 3$, as well as the threshold $\theta_i=(m-1)\log_5 3$, are nonintegral. Hence Corollary \ref{cor5} gives an explicit equality case that cannot be expressed by an integer-indexed GHW alone. The construction is included only to establish the tightness of the nonlinear bounds. Its minimum distance is two, and obtaining analogous equality cases with larger distance remains open.

\section{Conclusion}\label{sec10}

This paper established an information-theoretic framework for code conversion in the merge regime without assuming linearity. The inverse ELP $\delta_\rho^{\elp}$ allows the relevant entropy level $\rho$ to be real and therefore applies when a code has nonintegral dimension $\log_qM$. Conditional entropy yields individual and joint upper bounds on unchanged symbols and lower bounds on read cost. The minimum distance consequences further show that reading fewer than $\lceil k_i\rceil$ symbols from an initial code requires at least $d(\mathcal C_F)$ unchanged symbols from that code.

For linear codes, $\delta_\rho^{\elp}=d_{\lceil\rho\rceil}$, and the results reduce to GHW bounds in \cite{gruica2026linear}. The Hamming construction attains both general bounds for every initial code. The quasigroup construction shows that the same bounds are tight for nonlinear codes with nonintegral dimensions, including at a nonintegral entropy threshold. 

\appendix
\subsection{Proof of Lemma \ref{lem8}}\label{app1}

\begin{IEEEproof}
Let $T=\bigsqcup_{j\in I}\mathcal U_j$. In this proof, all random variables are defined on $\mathcal C_F$ equipped with the uniform distribution. Fix any
\begin{equation*}
    \mathbf u\in\mathcal C_F|_T.
\end{equation*}
Since $\sigma$ is bijective, there is an input tuple $(\mathbf v_1,\ldots,\mathbf v_t)$ whose final codeword has projection $\mathbf u$ on $T$. Fix the initial codewords $\mathbf v_j$ for all $j\in I$ and vary the remaining initial codewords. This produces
\begin{equation*}
    \prod_{j\in[t]\setminus I}M_{I_j} =q^{k_{-I}}
\end{equation*}
distinct final codewords. Since $\sigma$ is bijective, their projections on $\Omega_F\setminus T$ are distinct. Therefore,
\begin{equation*}
    \left| \left\{ \mathbf c|_{\Omega_F\setminus T}: \mathbf c\in\mathcal C_F,\  \mathbf c|_T=\mathbf u \right\} \right| \ge q^{k_{-I}}.
\end{equation*}
Define $\mathcal C_F(T,\mathbf u)=\{\mathbf c\in\mathcal C_F:\mathbf c|_T=\mathbf u\}$. For every $\mathbf c\in\mathcal C_F(T,\mathbf u)$, uniformity on $\mathcal C_F$ gives
\begin{align*}
    &\Pr\!\left(X^{(\mathcal C_F)}=\mathbf c
    \middle|X_T^{(\mathcal C_F)}=\mathbf u\right)\\
    &=\frac{1/|\mathcal C_F|}{|\mathcal C_F(T,\mathbf u)|/|\mathcal C_F|}\\
    &=\frac{1}{|\mathcal C_F(T,\mathbf u)|}.
\end{align*}
Thus the conditional distribution is uniform on $\mathcal C_F(T,\mathbf u)$. Since its codewords agree on $T$, their projections on $\Omega_F\setminus T$ are distinct. The corresponding conditional projection is therefore uniform on a set of at least $q^{k_{-I}}$ elements. Consequently, for every $\mathbf u\in\mathcal C_F|_T$,
\begin{equation*}
    H\left( X_{\Omega_F\setminus T}^{(\mathcal C_F)} \,\middle|\, X_T^{(\mathcal C_F)}=\mathbf u \right) \ge k_{-I}.
\end{equation*}
Averaging over all possible values of $X_T^{(\mathcal C_F)}$ gives
\begin{align*}
    &h_{n_F-|T|}(\mathcal C_F) \\
    &\ge H\left( X_{\Omega_F\setminus T}^{(\mathcal C_F)} \mid X_T^{(\mathcal C_F)} \right)\\
    &= \sum_{\mathbf u\in\mathcal C_F|_T} \Pr\left( X_T^{(\mathcal C_F)}=\mathbf u \right) H\left( X_{\Omega_F\setminus T}^{(\mathcal C_F)} \,\middle|\, X_T^{(\mathcal C_F)}=\mathbf u \right)\\
    &\ge k_{-I}.
\end{align*}
It follows from Definition \ref{def4} that
\begin{equation*}
    \delta_{k_{-I}}^{\elp}(\mathcal C_F) \le n_F-|T|.
\end{equation*}
Since the unchanged sets are pairwise disjoint,
\begin{equation*}
    |T|=\sum_{j\in I}|\mathcal U_j|.
\end{equation*}
Therefore,
\begin{equation*}
    \sum_{j\in I}|\mathcal U_j|\le n_F-\delta_{k_{-I}}^{\elp}(\mathcal C_F).
\end{equation*}
Combining this inequality with
\begin{equation*}
    \sum_{j\in I}|\mathcal U_j|\le \sum_{j\in I}n_{I_j}
\end{equation*}
proves the result.
\end{IEEEproof}

\subsection{An Example with $\alpha=\beta$}\label{app2}

Let $\mathcal C_{I_i}=\{00,11\}$ for $i\in[3]$, and define the final code by the bijection
\begin{equation*}
	\sigma(aa,bb,cc)=\bigl(a,b,c,\neg(a\lor b\lor c)\bigr),\ a,b,c\in\{0,1\}.
\end{equation*}
The last coordinate is a nonlinear Boolean function. More explicitly, both $(0,0,0,1)$ and $(0,0,1,0)$ belong to $\mathcal C_F$, whereas their binary sum $(0,0,1,1)$ does not. Hence, $\mathcal C_F$ is not closed under addition and is therefore nonlinear.

For all three Appendices \ref{app2}-\ref{app4}, write
\begin{equation*}
 	\mu_j(\mathcal C_F)=\min_{\substack{J\subseteq[n_F]\\ |J|=j}}H\bigl(X_J^{(\mathcal C_F)}\bigr).
\end{equation*}
Since $H(X_{[n_F]}^{(\mathcal C_F)})=3$, Definition~\ref{def2} gives
\begin{equation*}
 	h_i(\mathcal C_F)=3-\mu_{n_F-i}(\mathcal C_F).
\end{equation*}
To keep the calculations compact, Table~\ref{tab2} combines the values of $\mu_j$, $h_i$, and the corresponding ELP inverses for the three examples in Appendices \ref{app2}-\ref{app4}. The entries follow by grouping identical projections of the eight equiprobable codewords. We use
\begin{align*}
 	\gamma&=-\frac18\log_2\frac18-\frac78\log_2\frac78\approx0.5436,\\
	\lambda&=-\frac12\log_2\frac12-\frac38\log_2\frac38-\frac18\log_2\frac18 \approx1.4056.
\end{align*}

\begin{table*}[t]
	\centering
	\caption{Combined ELP calculations for the three appendix examples}
	\label{tab2}
	\scriptsize
	\renewcommand{\arraystretch}{1.5}
	\begin{tabular*}{\textwidth}{@{\extracolsep{\fill}}ccccc@{}}
		\toprule
		Case & $(\mu_0,\mu_1,\ldots,\mu_{n_F})$ &
		$(h_0,h_1,\ldots,h_{n_F})$ &
		$(\delta^{\elp}_1,\delta^{\elp}_2)$ &
		$(\alpha,\beta)$\\
		\midrule
		$\alpha=\beta$ &
		$(0,\gamma,\lambda,\frac94,3)$ &
		$(0,\frac34,3-\lambda,3-\gamma,3)$ &
		$(2,3)$ & $(2,2)$\\
		\midrule
		$\alpha>\beta$ &
		$(0,\gamma,\gamma,\lambda,\frac94,3)$ &
		$(0,\frac34,3-\lambda,3-\gamma,3-\gamma,3)$ &
		$(2,3)$ & $(4,3)$\\
		\midrule
		$\alpha<\beta$ &
		$(0,1,2,2,3,3,3,3)$ &
		$(0,0,0,0,1,1,2,3)$ &
		$(4,6)$ & $(2,3)$\\
		\bottomrule
	\end{tabular*}
\end{table*}

Here $n_F=4$. Because every initial code has size two while $M_F=8$, we have $k_{-1}=k_{-2}=2$ and $k_{-\{1,2\}}=1$. Consequently,
\begin{align*}
 \alpha&=2n_F-2\delta^{\elp}_2=8-6=2,\\
 \beta&=n_F-\delta^{\elp}_1=4-2=2.
\end{align*}
Hence the separate and joint ELP expressions coincide. Notice that $n_{I_1}+n_{I_2}=4>\alpha=\beta$, so this equality is not caused by the outer minimum with the total initial length.

\subsection{An Example with $\alpha>\beta$}\label{app3}

Again let $\mathcal C_{I_i}=\{00,11\}$ for $i\in[3]$, and define
\begin{equation*}
 \sigma(aa,bb,cc)=\bigl(a,b,c,a\land b\land c,\neg(a\land b\land c)\bigr).
\end{equation*}
This is a bijection onto a nonlinear code. The second row of Table~\ref{tab2} gives $\delta^{\elp}_1=2$ and $\delta^{\elp}_2=3$. Therefore,
\begin{align*}
 \alpha&=2n_F-2\delta^{\elp}_2=10-6=4,\\
 \beta&=n_F-\delta^{\elp}_1=5-2=3,
\end{align*}
and hence $\alpha>\beta$. In this example, the joint bound of Lemma~\ref{lem8} is strictly stronger than the sum of the two bounds from Lemma~\ref{lem7}.

\subsection{An Example with $\alpha<\beta$}\label{app4}

Let $\mathcal C_{I_i}=\{00,11\}$ for $i\in[3]$, and use the bijection
\begin{equation*}
 \sigma(aa,bb,cc)= (a,b,c,a\oplus b,a\oplus c,b\oplus c,a\oplus b\oplus c).
\end{equation*}
The final code is the binary $[7,3,4]$ Simplex code. For this linear code, the values of $\mu_j$ in the third row of Table~\ref{tab2} can alternatively be obtained from the minimum ranks of the corresponding sets of columns in a generator matrix. Thus $\delta^{\elp}_1=4$ and $\delta^{\elp}_2=6$, and
\begin{align*}
 \alpha&=2n_F-2\delta^{\elp}_2=14-12=2,\\
 \beta&=n_F-\delta^{\elp}_1=7-4=3.
\end{align*}
Hence $\alpha<\beta$. Here the sum of the two separate ELP expressions is
stronger than the joint expression.

\subsection{A Nonlinear Conversion with Noninteger Dimensions}\label{app5}

Let the alphabet be $\mathbb F_4=\{0,1,\omega,\omega^2\}$ and take
\begin{equation*}
 	\mathcal C_{I_1}=\mathcal C_{I_2}=\{00,11\}.
\end{equation*}
These are nonlinear codes over $\mathbb F_4$. Define
\begin{equation*}
 	\sigma(aa,bb)=(a,b),\qquad a,b\in\{0,1\}.
\end{equation*}
Then
\begin{equation*}
 \mathcal C_F=\{00,01,10,11\}\subseteq\mathbb F_4^2.
\end{equation*}
The final code is also nonlinear over $\mathbb F_4$, since $(10)\in\mathcal C_F$ but $\omega(10)=(\omega,0)\notin\mathcal C_F$. The map $\sigma$ is bijective, reduces the total length from four to two, and leaves one symbol from each initial word unchanged. Thus
\begin{equation*}
 	n_F=2,\quad d(\mathcal C_F)=1,\quad
 	|\mathcal U_1|=|\mathcal U_2|=1,\quad |\mathcal U|=2.
\end{equation*}
Moreover,
\begin{equation*}
	\log_4M_{I_1}=\log_4M_{I_2}=\frac{1}{2},
 	\qquad k_{-1}=k_{-2}=\frac{1}{2}.
\end{equation*}
Corollary \ref{cor1} gives, for $i=1,2$,
\begin{equation*}
 	|\mathcal U_i| \le2-1-\left\lceil\frac12\right\rceil+1=1,
\end{equation*}
and Lemma \ref{lem9} gives
\begin{equation*}
 	|\mathcal U| \le\left\lfloor4-\left\lceil\frac12\right\rceil -\left\lceil\frac12\right\rceil\right\rfloor=2.
\end{equation*}
Both bounds are attained. By contrast, omitting the separate ceiling terms would give only
\begin{equation*}
 	|\mathcal U|\le4-\log_4M_F=3.
\end{equation*}
The linear results in \cite{gruica2026linear} do not apply because neither initial code nor the final code is linear over $\mathbb F_4$, and the combinatorial dimensions of the two initial codes are nonintegral.

\subsection{Proof of the Noncontainment Statement in Proposition \ref{prop2}}\label{app6}

\begin{IEEEproof}
The table in \eqref{eqn50} gives
\begin{equation}\label{eqn59}
	0*0=0,\ 0*1=1,\ 1*0=1,\ 1*1=0.
\end{equation}
Fix an initial code $\mathcal C_{I_i}$ and set $\mathbf{x}_i=(x_{i,1},x_{i,2},0,\dots,0)$. Since zero is an identity element for $*$, the four choices $(x_{i,1},x_{i,2})\in\{0,1\}^2$ give four codewords that are constant outside coordinates $1$, $2$, and $m+1$ and whose projections onto these coordinates are
\begin{equation}\label{eqn60}
	(x_{i,1},x_{i,2},p_i)= (0,0,0),\ (0,1,1),\ (1,0,1),\ (1,1,0).
\end{equation}
Suppose that an equivalent image of $\mathcal C_{I_i}$ were contained in a linear $[m+1,m,2]_5$ code. Such a linear code has a parity-check equation
\begin{equation}\label{eqn61}
	\sum_{j=1}^{m+1}\beta_j y_j=0,
	\ \beta_j\in\mathbb F_5^*,
\end{equation}
with nonzero coefficients $\beta_j\ne 0$ for $j\in[m+1]$. Under the equivalence in \eqref{eqn52}, the three varying coordinates in \eqref{eqn60} are sent to three distinct positions, say $r_1,r_2,r_3$, with alphabet permutations $\tau_{r_1},\tau_{r_2},\tau_{r_3}$, i.e., 
\begin{equation*}
	(x_{i,1},x_{i,2},p_i) \to (\tau_{r_1}(x_{i,1}),\tau_{r_2}(x_{i,2}),\tau_{r_3}(p_i)).
\end{equation*}
Put
\begin{equation*}
	\Delta_j=\tau_{r_j}(1)-\tau_{r_j}(0)\ne0,
	\ j\in[3].
\end{equation*}
Consider
\begin{align*}
	(0,1,1),(0,0,0): \beta_{r_2}\Delta_2+\beta_{r_3}\Delta_3&=0,\\
	(1,0,1),(0,0,0): \beta_{r_1}\Delta_1+\beta_{r_3}\Delta_3&=0,\\
	(1,1,0),(0,0,0): \beta_{r_1}\Delta_1+\beta_{r_2}\Delta_2&=0.
\end{align*}
Thus
\begin{equation*}
	2\beta_{r_3}\Delta_3=0,
\end{equation*}
which is impossible in $\mathbb F_5$ since $\beta_{r_3}\ne 0$ and $\Delta_3\ne0$. Thus no code equivalent to a linear $[m+1,m,2]_5$ code contains $\mathcal C_{I_i}$.

For the final code, fix a block $i$, set every systematic coordinate other than $x_{i,1}$ and $x_{i,2}$ to zero, and again let $(x_{i,1},x_{i,2})$ range over $\{0,1\}^2$. Equations \eqref{eqn51}, \eqref{eqn54}, and \eqref{eqn59} give the same four projections in \eqref{eqn60} on coordinates $(i,1)$, $(i,2)$, and $mt+1$. Repeating the preceding parity-check argument with a linear $[mt+1,mt,2]_5$ code gives the same contradiction.
\end{IEEEproof}

\bibliographystyle{IEEEtran}
\bibliography{GeBib}

@misc{gruica2026linear,
  author    = {Gruica, Anina and Jany, Benjamin and Kruglik, Stanislav},
  title     = {Linear Code Conversion in the Merge Regime: General Bounds and Reed-Muller Constructions},
  year      = {2026},
  note      = {{arXiv}:2606.27179},
  doi       = {10.48550/arXiv.2606.27179},
  url       = {http://arxiv.org/abs/2606.27179}
}

@article{singhvi2026lower,
  author    = {Singhvi, Shubhransh and Chopra, Saransh and Rashmi, K. V.},
  title     = {Lower Bounds on Code Conversion Bandwidth via a Novel Information-theoretic Approach},
  journal   = {IEEE Transactions on Information Theory},
  year      = {2026},
  doi       = {10.1109/TIT.2026.3726356},
  note      = {early access}
}

@article{ge2026locally,
  author    = {Ge, Songping and Cai, Han and Tang, Xiaohu},
  title     = {Locally Repairable Convertible Codes: Improved Lower Bound and General Construction},
  journal   = {{IEEE} Transactions on Information Theory},
  year      = {2026},
  volume    = {72},
  number    = {5},
  pages     = {2915--2930},
  doi       = {10.1109/TIT.2026.3675568}
}

@article{ge2026mds,
  author    = {Ge, Songping and Cai, Han and Tang, Xiaohu},
  title     = {{MDS} Irregular Convertible Code},
  journal   = {IEEE Transactions on Information Theory},
  year      = {2026},
  volume    = {72},
  number    = {9},
  pages     = {6422--6434},
  doi       = {10.1109/TIT.2026.3714887}
}

@article{kong2024locally,
  author    = {Kong, Xiangliang},
  title     = {Locally Repairable Convertible Codes with Optimal Access Costs},
  journal   = {IEEE Transactions on Information Theory},
  year      = {2024},
  volume    = {70},
  number    = {9},
  pages     = {6239--6257},
  doi       = {10.1109/TIT.2024.3435346}
}

@article{krotov2008decomposability,
  author    = {Denis S. Krotov},
  title     = {On Decomposability of 4-Ary Distance 2 {MDS} Codes, Double-Codes, and $n$-Quasigroups of Order 4},
  journal   = {Discrete Mathematics},
  year      = {2008},
  volume    = {308},
  number    = {15},
  pages     = {3322--3334},
  doi       = {10.1016/j.disc.2007.06.038}
}

@article{krotov2016transitive,
  author    = {Denis S. Krotov and Vladimir N. Potapov},
  title     = {Constructions of Transitive Latin Hypercubes},
  journal   = {European Journal of Combinatorics},
  year      = {2016},
  volume    = {54},
  pages     = {59--74},
  doi       = {10.1016/j.ejc.2015.12.001}
}

@article{maturana2022convertible,
  author    = {Maturana, Francisco and Rashmi, K. V.},
  title     = {Convertible Codes: Enabling Efficient Conversion of Coded Data in Distributed Storage},
  journal   = {IEEE Transactions on Information Theory},
  year      = {2022},
  volume    = {68},
  number    = {7},
  pages     = {4392--4407},
  doi       = {10.1109/TIT.2022.3155972}
}

@article{maturana2023bandwidth,
  author    = {Maturana, Francisco and Rashmi, K. V.},
  title     = {Bandwidth Cost of Code Conversions in Distributed Storage: Fundamental Limits and Optimal Constructions},
  journal   = {IEEE Transactions on Information Theory},
  year      = {2023},
  volume    = {69},
  number    = {8},
  pages     = {4993--5008},
  doi       = {10.1109/TIT.2023.3265512}
}

@article{mckay2007small,
  author    = {Brendan D. McKay and Alison Meynert and Wendy Myrvold},
  title     = {Small Latin Squares, Quasigroups, and Loops},
  journal   = {Journal of Combinatorial Designs},
  year      = {2007},
  volume    = {15},
  number    = {2},
  pages     = {98--119},
  doi       = {10.1002/jcd.20105}
}

@article{ramkumar2026mds,
  author    = {Ramkumar, Vinayak and Kong, Xiangliang and Sai, G. Yeswanth and Vajha, Myna and Krishnan, M. Nikhil},
  title     = {On {MDS} Convertible Codes in the Merge Regime},
  journal   = {{IEEE} Transactions on Information Theory},
  year      = {2026},
  volume    = {72},
  number    = {5},
  pages     = {2963--2977},
  doi       = {10.1109/TIT.2026.3675319}
}

@article{reuven1998entropy,
  author    = {Reuven, Ilan and Be'ery, Yair},
  title     = {Entropy/Length Profiles, Bounds on the Minimal Covering of Bipartite Graphs, and Trellis Complexity of Nonlinear Codes},
  journal   = {IEEE Transactions on Information Theory},
  year      = {1998},
  volume    = {44},
  number    = {2},
  pages     = {580--598},
  doi       = {10.1109/18.661506}
}

@article{reuven1999generalized,
  author    = {Reuven, Ilan and Be'ery, Yair},
  title     = {Generalized Hamming Weights of Nonlinear Codes and the Relation to the $\mathbb{Z}_4$-Linear Representation},
  journal   = {IEEE Transactions on Information Theory},
  year      = {1999},
  volume    = {45},
  number    = {2},
  pages     = {713--720},
  doi       = {10.1109/18.749020}
}

@article{shi2026bounds,
  author    = {Shi, Haoming and Fang, Weijun and Gao, Yuan},
  title     = {Bounds and Optimal Constructions of Generalized Merge-Convertible Codes for Code Conversion Into {LRCs}},
  journal   = {{IEEE} Transactions on Information Theory},
  year      = {2026},
  volume    = {72},
  number    = {5},
  pages     = {2841--2860},
  doi       = {10.1109/TIT.2026.3668262}
}

@article{wang2026lower,
  author    = {Wang, Lewen and Hu, Sihuang},
  title     = {Lower Bounds on Conversion Bandwidth for {{MDS}} Convertible Codes in Split Regime},
  journal   = {{IEEE} Transactions on Information Theory},
  year      = {2026},
  volume    = {72},
  number    = {7},
  pages     = {4850--4859},
  doi       = {10.1109/TIT.2026.3692929}
}

@article{wei1991generalized,
  author    = {Wei, Victor K.},
  title     = {Generalized Hamming Weights for Linear Codes},
  journal   = {IEEE Transactions on Information Theory},
  year      = {1991},
  volume    = {37},
  number    = {5},
  pages     = {1412--1418},
  doi       = {10.1109/18.133259},
  month     = {sep}
}

@inproceedings{chopra2024low,
  author    = {Chopra, Saransh and Maturana, Francisco and Rashmi, K. V.},
  title     = {On Low Field Size Constructions of Access-Optimal Convertible Codes},
  booktitle = {2024 {{IEEE International Symposium}} on {{Information Theory}} ({{ISIT}})},
  year      = {2024},
  pages     = {1456--1461},
  doi       = {10.1109/ISIT57864.2024.10619440}
}

@inproceedings{ghemawat2003google,
  author    = {Ghemawat, Sanjay and Gobioff, Howard and Leung, Shun-Tak},
  title     = {The {{Google}} File System},
  booktitle = {Proceedings of the Nineteenth {{ACM}} Symposium on {{Operating}} Systems Principles},
  year      = {2003},
  pages     = {29--43},
  doi       = {10.1145/1165389.945450}
}

@inproceedings{huang2012erasure,
  author    = {Huang, Cheng and Simitci, Huseyin and Xu, Yikang and Ogus, Aaron and Calder, Brad and Gopalan, Parikshit and Li, Jin and Yekhanin, Sergey},
  title     = {Erasure coding in {Windows} {Azure} storage},
  booktitle = {2012 {{USENIX Annual Technical Conference}} ({{USENIX ATC}} 12)},
  year      = {2012},
  pages     = {15--26}
}

@inproceedings{kadekodi2019cluster,
  author    = {Kadekodi, Saurabh and Rashmi, K. V. and Ganger, Gregory R},
  title     = {Cluster Storage Systems Gotta Have {HeART}: Improving Storage Efficiency by Exploiting Disk-Reliability Heterogeneity},
  booktitle = {17th USENIX Conference on File and Storage Technologies (FAST 19)},
  year      = {2019},
  pages     = {345--358}
}

@inproceedings{maturana2020accessoptimal,
  author    = {Maturana, Francisco and Mukka, V. S. Chaitanya and Rashmi, K. V.},
  title     = {Access-Optimal Linear {MDS} Convertible Codes for All Parameters},
  booktitle = {2020 IEEE International Symposium on Information Theory (ISIT)},
  year      = {2020},
  pages     = {577--582},
  doi       = {10.1109/ISIT44484.2020.9173947}
}

@inproceedings{maturana2020convertible,
  author    = {Maturana, Francisco and Rashmi, K. V.},
  title     = {Convertible Codes: New Class of Codes for Efficient Conversion of Coded Data in Distributed Storage},
  booktitle = {11th Innovations in Theoretical Computer Science Conference (ITCS 2020)},
  year      = {2020},
  pages     = {66:1--66:26},
  doi       = {10.4230/LIPIcs.ITCS.2020.66}
}

@inproceedings{maturana2022bandwidth,
  author    = {Maturana, Francisco and Rashmi, K. V.},
  title     = {Bandwidth Cost of Code Conversions in the Split Regime},
  booktitle = {2022 IEEE International Symposium on Information Theory (ISIT)},
  year      = {2022},
  pages     = {3262--3267},
  doi       = {10.1109/ISIT50566.2022.9834604}
}

@inproceedings{maturana2023locally,
  author    = {Maturana, Francisco and Rashmi, K. V.},
  title     = {Locally {{Repairable Convertible Codes}}: {{Erasure Codes}} for {{Efficient Repair}} and {{Conversion}}},
  booktitle = {2023 IEEE International Symposium on Information Theory (ISIT)},
  year      = {2023},
  pages     = {2033--2038},
  doi       = {10.1109/ISIT54713.2023.10206604}
}

@book{huffman2010fundamentals,
  author    = {Huffman, W Cary and Pless, Vera},
  title     = {Fundamentals of Error-correcting Codes},
  publisher = {Cambridge University Press},
  year      = {2010},
  doi       = {10.1017/CBO9780511807077}
}

\end{document}